\theoremstyle{remark}
\newcommand{\oplabel}[1]{\refstepcounter{equation}(\theequation\ltx@label{#1})}
\newtheorem{theorem}{Theorem}
\newtheorem{lemma}{Lemma}
\newtheorem{example}{Example}
\newtheorem{remark}{Remark}
\newtheorem{corollary}{Corollary}
\newtheorem{definition}{Definition}
\newtheorem{proposition}{Proposition}
\newcommand{\RNum}[1]{\uppercase\expandafter{\romannumeral #1\relax}}
\newcommand{\tb}[1]{{\textbf{#1}}}	
\begin{document}
	
	%
	\title{Semantic-Aware Multi-Terminal Coding for Gaussian Mixture Sources}
	%
	%
	%
	
	\author{\thanks{\emph{(Corresponding Authors: Shuo Shao, Yongpeng Wu.)}}
		Yuxuan Shi\thanks{Yuxuan Shi and Shuo Shao are with the School of Cyber and Engineering, Shanghai Jiao Tong University, Shanghai 200240, China (e-mail: ge49fuy@sjtu.edu.cn; shuoshao@sjtu.edu.cn).}, Shuo Shao, \IEEEmembership{Member,~IEEE}, Yongpeng Wu\thanks{Yongpeng Wu and Wenjun Zhang is with the Department of Electronic Engineering, Shanghai Jiao Tong University, Shanghai 200240, China (e-mail: yongpeng.wu@sjtu.edu.cn, zhangwenjun@sjtu.edu.cn).}, \IEEEmembership{Senior Member,~IEEE}, Jun Chen\thanks{Jun Chen is with the Department of Electrical and Computer Engineering, McMaster University, Hamilton, ON L8S 4K1, Canada (e-mail: 			junchen@ece.mcmaster.ca).}, \IEEEmembership{Senior Member,~IEEE}
		Wenjun Zhang, \IEEEmembership{Fellow,~IEEE}}

\maketitle

\begin{abstract}
	A novel distributed source coding model which named semantic-aware multi-terminal (MT) source coding is proposed and investigated in the paper, where multiple agents independently encode an imperceptible semantic source, while both semantic and observations are reconstructed within their respective fidelity criteria. We start from a generalized single-letter characterization of sum rate-distortion region of this problem. Furthermore, we propose a mixed MSE-Log loss framework for this model and specifically depict the rate-distortion bounds when sources are Gaussian mixture distributed. For this case, we first present a relative tight outer bound and explore the activeness of semantic and observation distortion constraints, in which we find that good observation reconstruction will not incur too much semantic errors, but not vice versa. Moreover, we provide a practical coding scheme functioning as an achievable regime of inner bound with the performance analysis and simulation results, which verifies the feasibility of the idea "detect and compress" for Gaussian mixture sources. Our results provide theoretical instructions on the fundamental limits and can be used to guide the practical semantic-aware coding designs for multi-user scenarios.
\end{abstract}

\begin{IEEEkeywords}
	Semantic information, Sum rate-distortion region, Multi-terminal source coding problem, CEO problem, Gaussian mixture
\end{IEEEkeywords}

%
\IEEEpeerreviewmaketitle

\section{Introduction}
\IEEEPARstart{I}{n} nowadays sixth-generation (6G) networks, a new communication paradigm named semantic communication, emerges with superiorities on high compression ratio and transmission efficiency in comparison with the conventional ones. This is mainly attributed to the consideration of information meanings and intents, which is always neglected in existing frameworks. Being aware of the correlation among semantics, this new paradigm aims to reduce the source redundancy and increase transmission robustness. For this very reason, enormous number of works are emerging to investigate the potential of semantic communication both theoretically and practically.

The conception of communication over semantic level was first presented in the masterpiece of Shannon \cite{Burks_Shannon_Weaver_1951} in 1948. Hereafter, Bar-Hillel and Carnap \cite{Carnap1953An} revisited the omitted issues in his work and defined semantic information with the help of truth table preliminarily. Bao et al. \cite{Bao2011} emphasized the role of background information in semantic-aware scenarios, while Guler et al. \cite{Guler2016} proposed a generalized framework which minimizes erroneous semantics. The aforementioned pioneering works in the past seven decades mainly focus on characterizing semantics among texts via logical probability, resulting from the difficulty to describe semantics of multi-modal data. Fortunately, this difficulty has been overcome in recent years due to the prosperity in deep learning techniques involving natural language processing (NLP), computer vision, etc. For practical application, numerous semantic-aware frameworks on images, audios and videos are proposed and attain outstanding performances in terms of different tasks (see e.g. \cite{Nariman2018,Xie_Qin_Li_Juang_2020,Zhenzi2021,Kountouris_Pappas_2021,Huang2021,Dommel_Utkovski_Simeone_Stanczak_2021,Jiang_Wen_Jin_Li_2021} for some representative works).

Besides, a few works concentrate on the fundamental limits of these semantic-aware scenarios. A heuristic model based on the indirect source coding problem was proposed by Liu et al. in \cite {Liu_Shao_Zhang_Poor_2022}, in which the visible source also required recovery in their model, and the authors referred to this hidden source as semantic source. Furthermore, Shi et al. \cite{Shi2023} extended the model and gave an excess distortion exponent analysis for joint source-channel coding scheme for semantic communications. In general, the introduction of an extra constraint inevitably results in differences of theoretical analysis, e.g. the reproof of the single letter characterization of rate-distortion function (RDF). Besides, multiple distortion constraints complicates the analysis of RDF with specific source distributions. For this reason, Stavrou and Kountouris \cite{Stavrou22} developed a semantic-aware Blahut-Arimoto (BA) algorithm for the computation of intrinsic or extrinsic sources under arbitrary discrete distributions. Nevertheless, the above works mainly consider single user case, which is restrictive for most practical situations involving interactions among multiple users. Moreover, these works always make assumption that semantic source is defined over infinite alphabet for convenience, which is not in accordance with the situations in real world, such as segmentation and classification tasks. This hence motivates a multi-terminal (MT) model for semantic-aware source coding, in which we make proper assumptions on semantic alphabet and distortion measures. 


Inspired by \cite{Liu_Shao_Zhang_Poor_2022}, we aim to investigate an MT source coding scenario characterizing semantic information. In this scenario, an invisible semantic source is set up, while multiple agents observe the same semantic source and encode independently. Different from the classic problems, all semantics and observations are required to reconstruct within their fidelity criteria. This modeling is mainly motivated by some widely-used semantic-aware scenarios, which intend to reconstruct the original media meanwhile execute the downstream tasks, e.g. the low bit-rate video understanding \cite{Tian2022}, the semantic-to-signal scalable image compression \cite{Liu_Liu_Li_Yan_Li_2021}. For further analysis, we assume semantic source defined over a finite alphabet, meanwhile each visible source follows a Gaussian distribution by observing the semantic source. Overall the visible sources follow a vector Gaussian mixture distribution. The basis for this modeling can be primarily attributed to the following two reasons: First, defining a semantic source over a finite alphabet is typically more reasonable than over an infinite one; Second, a Gaussian mixture has the capability to approximate a wide range of irregular distributions, including those encountered in real-world datasets. Under this source coding scenario, an essential problem is to characterize the rate-distortion region when fixing the coding rate and reconstruction strategy.

Based on the discussion, we introduce a mathematical model named as semantic-aware MT source coding in this paper. Specifically, we start from an invisible semantic source $S$, and $L$ agents with different observation $X_1,X_2,\cdots,X_L$. The decoder is required to reconstruct both the semantic source $\hat{S}$ and the observations $\hat{X}_1,\hat{X}_2,\cdots,\hat{X}_L$ without exceeding the distortion constraints, respectively. An intuitive interpretation of this scenario is a Chief Executive Officer (CEO) problem \cite{Oohama_1997,Viswanathan_Berger_1997,Wang2010,Yang_Xiong_2012} coupled with an MT source coding problem \cite{Berger_Yeung_1989,Oohama_1998,Wagner_Anantharam_2008,Wang2010,Wagner_2008}. To investigate the RDF of this model, we follow the basic steps of characterizing RDF bounds from the generalized form with arbitrary sources to an analysis-friendly statement with specific sources. Note that the dilemmas arise from the coexistence of multiple constraints, as outlined below: for the single letter characterization, decoupling methods for arbitrarily correlated sources are difficult to obtain. Moreover, for the specific analysis with Gaussian mixture, it is challenging to describe the mutual information between discrete semantics and continuous codewords when formulating converse bound. Meanwhile, the above challenge prevents us from constructing optimal codebook for an achievable scheme via random binning, in view of the inconsistency among sources alphabets. In the following, we conclude the main contributions of this work.
\begin{enumerate}[a)]
	\item We propose a general distributed source coding model on characterizing the semantic-aware MT problem, which aligns with reality and provides guiding significance for the AI-based frameworks. Based on the model, we present inner and outer bounds of RDF in the single letter form, in which we assume a Markov coupled random variable for the semantic and observed sources decorelation. Moreover, degenerating cases of our bounds are also discussed, which verifies that our generalized bounds can cover existing works.
	\item For further analysis, we propose a mixed MSE-logarithmic (MML) loss framework, by considering practical distortion measures for semantic-aware scenarios, where logarithmic loss measures the semantics and mean square error (MSE) measures the observations. By specifying that sources are Gaussian mixture distributed, we present a relatively tight converse bound on sum-rate distortion in the form of an optimization problem, where the mutual information between discrete semantics and continuous codewords can be connected to the error probability via Fano's inequality. We further discuss the activeness of semantic and observation distortion constraints to completely solve the optimization problem of RDF, which unveils an intriguing observation that semantic distortion can be upper bounded when the agents error is fixed, but the reverse is not true.
	\item Furnished with outer bound, we also provide a practical coding scheme for the semantic-aware MT problem with Gaussian mixture sources, which does not rely on the random binning and method of type techniques. This coding scheme comprises clusters, compressors, and Slepian-Wolf encoders, where each observation undergoes clustering before quantization, namely following "detect and compress" idea. Furthermore, we offer an inner bound based on the coding scheme, and the numerical results show its superiority in comparison with existing methods. Among these, we notice that there though exists gap between outer and inner bounds in our scenario, the gap will be eliminated with the increase of the ratio of signal to noise.
\end{enumerate}
This paper is organized as follows: in Section \ref{Sec2}, we introduce the notations on semantic-aware MT source coding problem, including the system model and definitions. In Section \ref{Sec3}, we show a Berger-Tung bounds-based rate-region in  semantic-aware MT problem. In Section \ref{Sec4}, we propose a novel framework named as mixed MSE-Log loss, and derive bounds on sum rate-distortion when sources are Gaussian mixture distributed. In Section \ref{Sec5}, a practical coding scheme is designed for approaching theoretical performance. Finally the Section \ref{Sec6} concludes the paper.

Throughout the paper, random variables are represented by upper case letters, whose realizations and alphabets are written in lower case and calligraphy, respectively. For instance, $x$ is the realization of random variable $X$ and picks values in $\mathcal{X}$. The cardinality of a set $\mathcal{A}$ is $|\mathcal{A}|$, and $\mathcal{A}/a$ denotes the set $\mathcal{A}$ excluding the element $a$. We abbreviate the tuple $(X_{1},X_{2},\cdots,X_{n})$ as $X^n$ and the realization $x^n$ follows similarly. $\mathbb{E}[\cdot]$, $\mathbb{V}[\cdot]$, $h(\cdot)$ and $H(\cdot)$ are expectation, variance, differential entropy and discrete entropy functions respectively. More specifically, $H_2(\cdot)$ is the binary entropy function, and $p*q=p(1-q)+q(1-p)$ denotes the binary convolution operation. Moreover, an italic and bold symbol $\bm{X}$ represents a vector and bold symbol $\tb{X}$ stands for a matrix. $\bm{0}_L$ and $\bm{1}_L$ are all zero and one vector, respectively. $\bm{e}_i$ denotes all zero vector except for $1$ at its $i$-th position. Specifically, $(\cdot)^T$, $\mathrm{tr}(\cdot)$ and $\det(\cdot)$ denote the transpose, trace and determinant operators, respectively. Besides, $\mathrm{diag}\{\cdot\}$ represents a diagonal matrix with only non-zero elements on its principal diagonal.  $\tb{O}$ is all-zero matrix (not necessarily square matrix) and $\tb{I}_L$ means $L\times L$ identity matrix. For further use, $\mathcal{N}(\bm{x};\bm{\mu},\tb{K})$ denotes the probability density function of an $L$-length vector Gaussian random variable with realization $\bm{x}$, namely
\[
\mathcal{N}(\bm{x};\bm{\mu},\tb{K})=\frac{1}{(2\pi)^{L/2}\det(\tb{K})^{1/2}}\exp\left\{-\frac{1}{2}(\bm{x}-\bm{\mu})^T\tb{K}^{-1}(\bm{x}-\bm{\mu})\right\}.
\]

\section{Problem Formulation}	\label{Sec2}
In this section, the system model of the semantic-aware MT source coding is presented. We also introduce the definitions of semantic-aware MT source coding and the sum rate-distortion region. 
\subsection{Problem Formulation}
\begin{figure}[h]
	\centering
	\includegraphics[width=0.8\textwidth]{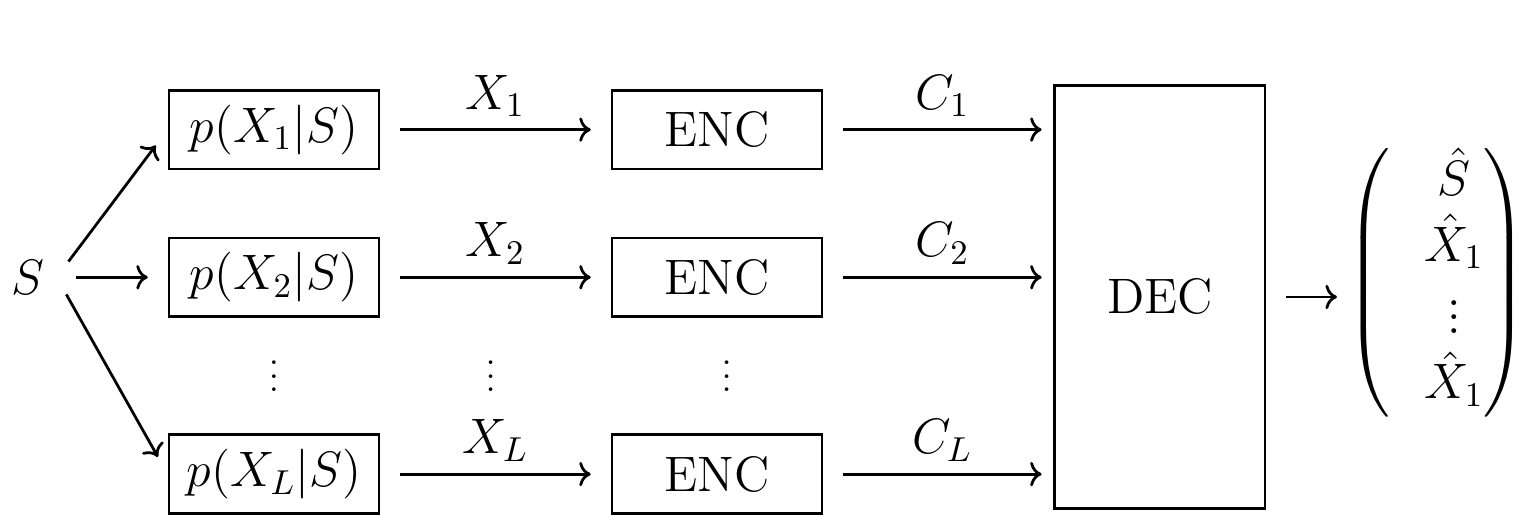}
	\caption{A semantic-aware multi-terminal source coding problem}
	\label{SMT}
\end{figure}
A generalized semantic-aware multi-terminal source coding problem is depicted in Fig.\ref{SMT}. We assume a memoryless information source produces independent and identical (i.i.d) random variables $S$ with probability distribution $P_{S}$ over alphabet $\mathcal{S}$. We interpret source $S$ as the semantic features of events/objects and cannot be observed directly. Meanwhile $L$ agents obtain corrupted observations of the semantic source, in which each observation $X_i$ takes value in $\mathcal{X}_i$ for $i=1,2,\cdots,L$. These observations are encoded independently and decoded together. Notably, the main difference between the semantic-aware MT source coding model and the classic CEO problem, is that the decoder intends to reconstruct not only the semantics $\hat{S}$ but also the observations $\hat{X}_1,\cdots,\hat{X}_L$.

By incorporating the $k$-length block coding setting into the model, the information to be compressed is no longer a scalar but indeed a vector. More specifically, for $i=1,2,\cdots,L$, the $i$-th encoder is defined as a mapping from the $k$-fold Cartesian products of alphabets $\mathcal{X}^k$ to the codewords alphabet: 
\begin{align}
	\varphi_i(\cdot): \mathcal{X}^k\mapsto\mathcal{C}_i=\{1,2,\cdots,|\mathcal{C}_i|\},\qquad i=1,2,\cdots,L\notag.
\end{align} 
Consequently, the unique decoder is composed of $L+1$ mappings as:
\begin{align}
	\psi_S(\cdot):\prod_{i=1}^L\mathcal{C}_i\mapsto\hat{\mathcal{S}}^k;\psi_{X_i}(\cdot):\prod_{i=1}^L\mathcal{C}_i\mapsto\hat{\mathcal{X}}^k,\quad i=1,2,\cdots,L\notag,
\end{align}
where $\hat{\mathcal{S}}$ and $\hat{\mathcal{X}}$ are the alphabets of 
$\hat{S}$ and $\hat{X}_i$, for $i=1,2,\cdots,L$. Consequently the code rate at $i$-th agent and the sum rate are defined as 
\begin{align}
	R_i=\frac{1}{k}\log|\mathcal{C}_i|\notag,\qquad R_\mathrm{sum}=\sum_{i=1}^LR_i.
\end{align}
For simplicity, we write $\bm{X}=(X_1,X_2,\cdots,X_L)^T$, $\bm{X}^k=((X_1^k)^T,(X_2^k)^T,\cdots,(X_L^k)^T)^T$, the encoder tuple $\bm{\varphi}=(\varphi_1(\cdot),\varphi_2(\cdot),\cdots,\varphi_L(\cdot))^T$, decoder tuple $\bm{\psi}_X=(\psi_1(\cdot),\psi_2(\cdot),\cdots,\psi_L(\cdot))^T$ and $C_{\mathcal{L}}=\{C_1,C_2,\cdots,C_L\}$. Moreover, by $\mathcal{F}^k_L(R)$ we denote all the encoder-decoder pairs $(\bm{\varphi},\psi_S,\bm{\psi}_X)$. Now for $\hat{S}^k=\psi_S(C_{\mathcal{L}})$ and $\hat{\bm{X}}^k=\bm{\psi}_{X}(C_{\mathcal{L}})$, and two distortion measures $d_S:\mathcal{S}\times\hat{\mathcal{S}}\mapsto\mathbb{R}^+$ and $d_X:\mathcal{X}\times\hat{\mathcal{X}}\mapsto\mathbb{R}^{+}$, we write the block-wise distortion measure functions on semantic and observations,  respectively:
\begin{align}
	&d_S^k\left(S^{k}, \hat{S}^{k}\right)=	\frac{1}{k}\sum_{j=1}^kd_S(S_j,\hat{S}_j),\label{1}\\
	&d_X^k\left(X_i^{k}, \hat{X}_i^{k}\right)=\frac{1}{k}\sum_{j=1}^kd_X(X_{i,j},\hat{X}_{i,j}),\quad i=1,\cdots,L.\label{2}
\end{align}
\subsection{Problem Formulation}
Equipped with the above definition, we are able to describe the admissible rate region of semantic-aware multi-terminal source coding problem. Moreover, we state the definition of inactive (or we say dummy) constraint in our multi-constraint scenario.
\begin{definition}\label{admissible}
	Given a fixed positive integer $L$ and $k$, a rate-distortion pair $(R, D_S,\bm{D}_X)$ composed of total rate constraint $R=R_\mathrm{sum}$, semantic distortion $D_S$ and observation distortion vector $\bm{D}_X=\left( D_{X_1},D_{X_2}\cdots,D_{X_L}\right)$ is admissible if there exists a pair $(\bm{\varphi},\psi_S,\bm{\psi}_X) \in \mathcal{F}_{L}^{(k)}(R)$ such that 
	\begin{align}
		&\mathbb{E}d_S^k\left(S^{k}, \hat{S}^{k}\right) \leq D_S, \label{Ds}\\
		&\mathbb{E}d_X^k\left(X_i^{k}, \hat{X}_i^{k}\right)\leq D_{X_i},\qquad i=1,\cdots,L\label{Dx}.
	\end{align}
\end{definition}
\noindent With this definition, the RDF for semantic-aware MT problem can be formulated as
\begin{align}
	&R^{(k)}(D_S,\bm{D}_X)\triangleq\{R:(R,D_S,\bm{D}_X) \text { is admissible}\},\\
	&R(D_S,\bm{D}_X)=\inf_{k\geq1}R^{(k)}(D_S,\bm{D}_X).\label{6}
\end{align}
The next definition formulates the activeness of distortion constraints in our semantic-aware scenario.
\begin{definition}\label{dummy}
	For the function of $R(D_S,\bm{D}_X)$ in Eq. \eqref{6}, with $H(S|X)\leq D_S\leq H(S)$ and $0\leq \sum_{i=1}^LD_{X_i}\leq H(\bm{X}|S)$, we define that
	\begin{enumerate}
		\item the semantic distortion constraint in Eq. \eqref{Ds} is inactive for the rate distortion function $R(D_S,\bm{D}_X)$, if we can find non-negative $\bm{D}_{X}^\star$ such that $\forall \bm{D}_{X}\prec \bm{D}_{X}^\star$, there exists $0<\Delta_S$ satisfying
		\begin{align}
			&R(D_S+\Delta_S,\bm{D}_X)=R(D_S,\bm{D}_X).
		\end{align}
		\item the observation distortion constraint in Eq. \eqref{Dx} is inactive for the rate distortion function $R(D_S,\bm{D}_X)$, if we can find non-negative $D_S^\star$ such that $\forall D_S< D_S^\star$, there exists L-length vector $\bm{0}\prec\bm{\Delta}_X$ satisfying
		\begin{align}
			&R(D_S,\bm{D}_X+ \bm{\Delta}_X)=R(D_S,\bm{D}_X).
		\end{align}
		where $\bm{A}=(A_1,\cdots,A_L)\prec\bm{B}=(B_1,\cdots,B_L)$ denotes the vector inequality that $A_i\leq B_i$ for $i=1,\cdots,L$ and $|\bm{A}|<|\bm{B}|$.
	\end{enumerate}
\end{definition}
Equipped with Definition \ref{dummy}, the activeness of different distortion constraints can be discussed, and thus we can simplify the formulation of RDF of our semantic-aware MT problem by arguing the existence of $D_S^\star$ and $\bm{D}_{X}^\star$ for specific RDF, respectively. Take $R(D_S,\bm{D}_X)$ for instance, we can assert that the semantic distortion constraint of $R(D_S,\bm{D}_X)$ is dummy when $D_S\leq D_S^\star$, through finding a $D_S^\star$. 

With the above definitions, in this paper, we mainly aim to characterize the RDF $R(D_S,\bm{D}_X)$, and investigate its behaviors.

\section{Sum Rate-Distortion Characterization of Semantic-aware MT Problem in Single Letter Form} \label{Sec3}
In this section, we present outer and inner bounds for the RDF of semantic-aware MT source coding problem for arbitrary source distributions. After that, the generalized statements are verified to cover some existing degenerated cases, including classic CEO, MT problems.
\subsection{Bounds Characterization of Semantic-aware MT Problem}\label{3.2}
In the following we first present specific definitions of sum rate-distortion bounds.
\begin{definition}\label{Definition_Cout}
	Given non-negative distortions $D_S,\bm{D}_X$, we define
	\begin{align}
		R_\mathrm{out}(D_S,\bm{D}_X)\triangleq&\max_{\bm{Y},\bm{W}}\min_{\bm{U}} \left\{I(\bm{Y};\bm{U})+\sum_{i=1}^LI(X_i;U_i|\bm{Y},\bm{W})\right\},\label{Outer_bound_single_letter}\\
		&\text{s.t.}\hspace{0.1cm}\mathbb{E}d_S\left(S;f(\bm{U})\right)\leq D_S,\label{ds}\\ &\hspace{0.6cm}\mathbb{E}d_X\left(X_i;g_i(\bm{U})\right)\leq D_{X_i},\hspace{0.1cm} i=1,\cdots,L,\label{dx}
	\end{align}
	for a joint distribution $P_{S\bm{Y}\bm{X}\bm{W}\bm{U}}$ of the form
	\begin{align}
		&P(\bm{y}|\bm{x})P(s)P(w)\prod_{i=1}^LP(x_i|s)P(u_i|x_i,w)\label{jointd}.
	\end{align}
	Herein $\mathcal{Y}$ denotes the set including all random variables that $X_1,X_2,\cdots,$ $X_L$ are conditional independent\footnote{It can be easily verified $\mathcal{Y}$ is nonempty since it contains $\bm{Y}=\left\{X_1,X_2,\cdots,X_{L-1}\right\}$. Moreover, by fixing a certain coupling method, the cardinality of set $\bm{Y}$ can be considered as finite.} if given $\bm{Y}$. Besides, $\bm{U}=\{U_1,U_2,\cdots,U_L\}$, $U_i\in\mathcal{U}_i$ while $|\mathcal{U}_i|\leq|\mathcal{X}_i|+2^L+L-2$ for $i=1,2,\cdots,L$, and the reproduction functions are defined as
	\begin{align}
		f(\cdot):\prod_{i=1}^L\mathcal{U}_i\mapsto \hat{\mathcal{S}},\qquad
		g_i(\cdot):\prod_{i=1}^L\mathcal{U}_i\mapsto \hat{\mathcal{X}}_i,\quad \quad i=1,\cdots,L.\notag
	\end{align}
\end{definition}
\begin{definition}\label{Definition_Cinn}
	Given non-negative distortions $D_S,\bm{D}_X$, we define $R_\mathrm{in}(D_S,\bm{D}_X)$ as 
	\begin{align}
		R_\mathrm{in}(D_S,\bm{D}_X)\triangleq&\min_{\bm{V}} I(\bm{X};\bm{V}),\label{Inner_bound_single_letter}\\
		&\text{s.t.}\hspace{0.1cm}\mathbb{E}d_S\left(S;f^\star(\bm{V})\right)\leq D_S,\label{z9}\\ &\hspace{0.6cm}\mathbb{E}d_X\left(X_i;g^\star_i(\bm{V})\right)\leq D_{X_i},\hspace{0.1cm} i=1,\cdots,L,\label{z10}
	\end{align}
	for a joint distribution $P_{S\bm{X}\bm{V}}$ of the form
	\begin{align}
		P(s)\prod_{i=1}^LP(x_i|s)P(v_i|x_i),\label{166}
	\end{align} 
	where $\bm{V}=\{V_1,V_2,\cdots,V_L\}$ and $V_i\in\mathcal{V}_i$ and reproduction functions
	\begin{align}
		f^\star(\cdot):\prod_{i=1}^L\mathcal{V}_i\mapsto \hat{\mathcal{S}},\qquad
		g^\star_i(\cdot):\prod_{i=1}^L\mathcal{V}_i\mapsto \hat{\mathcal{X}}_i,\quad i=1,\cdots,L.\notag
	\end{align}
\end{definition}
\noindent
Equipped with above definitions, the sum-rate within the required distortions of semantic-aware MT problem can be bounded as follows.
\begin{theorem}\label{Theorem_label_outer_bound}
	If $(R,D_s,\bm{D}_X)$ is admissible, then
	\begin{align}
		R(D_S,\bm{D}_X)\geq R_\mathrm{out}(D_S,\bm{D}_X)\label{outer_bound_single_letter}
	\end{align}
\end{theorem}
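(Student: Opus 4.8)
The plan is to prove this converse by a standard multi-letter-to-single-letter (``single-letterization'') argument in the spirit of the Berger--Tung outer bound, adapted to the hidden semantic source and to the fact that \emph{both} $\hat{S}$ and the observations $\hat{X}_i$ must be reconstructed. I would start from an arbitrary admissible code $(\bm{\varphi},\psi_S,\bm{\psi}_X)\in\mathcal{F}_L^{(k)}(R)$ and write $kR_\mathrm{sum}=\sum_i\log|\mathcal{C}_i|\geq\sum_i H(C_i)$. Fixing any feasible correlating variable $\bm{Y}$ (extended to an i.i.d.\ sequence $\bm{Y}^k$), the crucial observation is that, since $X_1^k,\dots,X_L^k$ are conditionally independent given $\bm{Y}^k$ and each $C_i=\varphi_i(X_i^k)$ depends on its own observation alone, the codewords $C_1,\dots,C_L$ are conditionally independent given $\bm{Y}^k$. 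Because $C_i$ is a deterministic function of $X_i^k$, this yields the rate decomposition
\[
\sum_{i=1}^L H(C_i)=\sum_{i=1}^L I(X_i^k;C_i)\geq I(\bm{Y}^k;C_{\mathcal{L}})+\sum_{i=1}^L I(X_i^k;C_i\mid\bm{Y}^k),
\]
where I have used $I(X_i^k;C_i)=I(\bm{Y}^k;C_i)+I(X_i^k;C_i\mid\bm{Y}^k)$ and then $\sum_i I(\bm{Y}^k;C_i)\geq I(\bm{Y}^k;C_{\mathcal{L}})$, the latter again from conditional independence. These are exactly the two terms appearing in the objective of \eqref{Outer_bound_single_letter}.

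Next I would single-letterize both terms. Introducing a time-sharing index $J$ uniform on $\{1,\dots,k\}$ and independent of the source, the i.i.d.\ structure of the pairs $(\bm{Y}_j,\bm{X}_j)$ converts $\tfrac1k I(\bm{Y}^k;C_{\mathcal{L}})$ into a single-letter $I(\bm{Y};\bm{U})$ and $\tfrac1k\sum_i I(X_i^k;C_i\mid\bm{Y}^k)$ into $\sum_i I(X_i;U_i\mid\bm{Y},\bm{W})$, with $\bm{Y}=\bm{Y}_J$ and with each auxiliary built from agent $i$'s own data together with a prefix of its observations, while the time index and common randomness are folded into the source-independent $\bm{W}$ permitted by the factor $P(w)$ in \eqref{jointd}. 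The reproduction functions $f,g_i$ are read off from the decoder maps $\psi_S,\psi_{X_i}$ evaluated at coordinate $J$; averaging the per-letter distortions over $J$ and invoking \eqref{Ds}--\eqref{Dx} (with convexity of the distortion constraints) shows that the single-letter constraints \eqref{ds}--\eqref{dx} are satisfied. The cardinality bounds $|\mathcal{U}_i|\leq|\mathcal{X}_i|+2^L+L-2$ then follow from a routine Carath\'eodory/support-lemma argument preserving the relevant marginals and the two mutual-information functionals.

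It then remains to argue the $\max$--$\min$ direction. For each fixed feasible pair $(\bm{Y},\bm{W})$ the construction above produces a \emph{specific} feasible tuple $\bm{U}^\star$ satisfying the joint law \eqref{jointd} and the distortion constraints, with $R\geq I(\bm{Y};\bm{U}^\star)+\sum_i I(X_i;U_i^\star\mid\bm{Y},\bm{W})\geq\min_{\bm{U}}\{\,\cdots\,\}$; since this holds for every admissible $(\bm{Y},\bm{W})$, taking the supremum over them gives $R\geq R_\mathrm{out}(D_S,\bm{D}_X)$. As the resulting single-letter bound does not depend on $k$, it passes to the infimum in \eqref{6}, establishing $R(D_S,\bm{D}_X)\geq R_\mathrm{out}(D_S,\bm{D}_X)$.

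The hard part will be the single-letterization while respecting the \emph{separate} test-channel factorization $\prod_i P(u_i\mid x_i,w)$ together with the per-agent Markov structure $U_i-(X_i,W)-(\text{rest})$. A naive auxiliary such as $U_i=(C_i,X_i^{J-1})$ is a function of agent $i$'s data alone, but the $U_i$ are conditionally independent across $i$ only after conditioning on the correlating variable $\bm{Y}$ --- this is precisely the ``decoupling for arbitrarily correlated sources'' difficulty flagged in the introduction, and it is what forces both the introduction of $\bm{Y}$ (maximized over, to make the separate-channel form a valid outer bound) and the strict separation of source-dependent side information, which must enter $\bm{U}$, from the source-independent common randomness, which must enter $\bm{W}$. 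I expect this bookkeeping, rather than any single inequality, to be the main technical obstacle; handling the discrete semantic alphabet $\mathcal{S}$ alongside the (possibly continuous) observation alphabets $\mathcal{X}_i$ within the same mutual-information chain is a secondary complication that the argument must accommodate.
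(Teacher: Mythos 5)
Your proposal follows essentially the same route as the paper's proof: augment the system with the conditionally-decorrelating remote source $\bm{Y}$, use its conditional-independence property to reach the intermediate bound $\sum_i H(C_i)\geq I(\bm{Y}^k;C_{\mathcal{L}})+\sum_i I(X_i^k;C_i\mid\bm{Y}^k)$ (you regroup via $\sum_i I(\bm{Y}^k;C_i)\geq I(\bm{Y}^k;C_{\mathcal{L}})$ where the paper expands $I(\bm{X}^k;C_{\mathcal{L}})$ directly, but the two are equivalent), then single-letterize with a time-sharing index, auxiliaries of the form $(C_i,\text{prefix})$, and $\bm{W}$ absorbing the off-index remote-source letters, finishing with the distortion averaging, cardinality bound, and the max--min ordering. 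The only cosmetic difference is that the paper builds $U_i$ from a prefix of $\bm{Y}$ rather than of $X_i$, which is exactly the bookkeeping point you flagged.
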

\begin{proof}
	See Appendix \ref{proof_single_letter_outer_bound}.
\end{proof}
\begin{proposition}\label{Theorem_label_inner_bound}
	\begin{equation}
		R(D_S,\bm{D}_X)\leq R_\mathrm{in}(D_S,\bm{D}_X).\label{inner_bound_single_letter}
	\end{equation}
\end{proposition}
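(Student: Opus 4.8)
The plan is to establish \eqref{inner_bound_single_letter} as an achievability result: for any test-channel distribution of the form \eqref{166} that is feasible for the distortion constraints \eqref{z9}--\eqref{z10}, I would exhibit a sequence of block codes whose sum rate approaches $I(\bm{X};\bm{V})$ while keeping both the semantic and the observation distortions within budget. Since $R(D_S,\bm{D}_X)$ is the infimum over block lengths and the achievable sum rate can be driven arbitrarily close to $I(\bm{X};\bm{V})$, minimizing over admissible $\bm{V}$ then yields the claimed upper bound. The natural tool is the Berger--Tung random coding scheme, which is compatible with the product structure $\prod_{i=1}^L P(v_i|x_i)$ in \eqref{166}, equivalently the conditional-independence chain in which each $V_i$ depends on the sources only through $X_i$.

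Concretely, I would first fix a distribution of the form \eqref{166} meeting \eqref{z9}--\eqref{z10}, together with the reproduction maps $f^\star,g_i^\star$. For each terminal $i$ I generate a codebook of roughly $2^{kI(X_i;V_i)}$ sequences $V_i^k$ drawn i.i.d.\ from the marginal $P_{V_i}$, and then randomly bin these codewords into $2^{kR_i}$ bins in the Slepian--Wolf fashion. The $i$-th encoder covers its observation by finding a codeword $V_i^k$ jointly typical with $X_i^k$ and transmits only the bin index; the joint decoder searches for the unique tuple $(V_1^k,\dots,V_L^k)$ lying in the announced bins and jointly typical with one another. Standard covering, packing, and Markov-lemma arguments then show that the failure probability vanishes provided $\sum_{i\in A}R_i > I(X_A;V_A\mid V_{A^c})$ for every $A\subseteq\{1,\dots,L\}$. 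Taking $A=\{1,\dots,L\}$ reduces the binding sum-rate condition to $R_\mathrm{sum} > I(\bm{X};\bm{V})$, which can be met with an arbitrarily small gap $\epsilon>0$.

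Once $\bm{V}^k$ is recovered, the reconstructions $\hat{S}^k=f^\star(\bm{V}^k)$ and $\hat{X}_i^k=g_i^\star(\bm{V}^k)$ are formed symbol-by-symbol. On the jointly typical event the empirical joint type of $(S^k,\bm{X}^k,\bm{V}^k)$ is close to $P_{S\bm{X}\bm{V}}$, so by the typical-average-distortion argument the per-letter expected distortions converge to $\mathbb{E}\,d_S(S;f^\star(\bm{V}))$ and $\mathbb{E}\,d_X(X_i;g_i^\star(\bm{V}))$, which are at most $D_S$ and $D_{X_i}$ by feasibility, while the rare atypical event contributes negligibly under boundedness of $d_S,d_X$. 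Letting $\epsilon\to0$ and $k\to\infty$ shows that $(I(\bm{X};\bm{V})+\epsilon,D_S,\bm{D}_X)$ is admissible, and minimizing over $\bm{V}$ completes the argument.

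The main obstacle, I expect, is the semantic reconstruction. Unlike the classical Berger--Tung setting, the hidden source $S$ is never directly observed or encoded, so the constraint \eqref{z9} must be satisfied purely through the decoded auxiliaries $\bm{V}$, in a CEO-type manner. The key is that the structure in \eqref{166}---each $V_i$ coupled to the sources only via $X_i$, and the $\{X_i\}$ conditionally independent given $S$---yields the Markov chain $S-\bm{X}-\bm{V}$, so the Markov lemma applies to the enlarged tuple and $(S^k,\bm{V}^k)$ is jointly typical whenever $(\bm{X}^k,\bm{V}^k)$ is. Verifying this conditional-independence chain carefully, and confirming that a single codebook simultaneously controls both the CEO-type semantic distortion and the MT-type observation distortions, is the delicate part of the proof.
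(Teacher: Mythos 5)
Your proposal is correct and follows exactly the route the paper intends: the paper's proof of this proposition is a one-line appeal to the standard Berger--Tung achievability scheme augmented with the extra semantic distortion constraint, which is precisely the covering/binning construction plus Markov-lemma argument for the chain $S-\bm{X}-\bm{V}$ that you spell out. Your write-up simply supplies the details the paper omits.
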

\begin{proof}
	The proposition can be readily verified with standard Berger-Tung inner bound \cite{Berger1996} with an extra semantic distortion constraint, hence we omit the proof here.
\end{proof}

\subsection{Some Degenerated Cases}
The degenerated cases of Theorem \ref{Theorem_label_outer_bound} and Prop. \ref{Theorem_label_inner_bound} are stated as follows. We first introduced some definitions of the existing bounds.
\begin{definition}\label{Def5} Given non-negative $D,D_S,D_X$ and $L$-length vector $\bm{D}$, by slightly modifying the notations in existing works, we define
	\begin{enumerate}
		\item single user rate-distortion function characterizing semantic information \cite{Liu_Shao_Zhang_Poor_2022} as
		\begin{align}
			\widehat{R}(D_S,D_X)\triangleq&\min_{P_{\hat{S}\hat{X}|X}}I(X;\hat{S},\hat{X})\notag\\
			&\text{s.t.}\quad\mathbb{E}[d(X,\hat{X})]\leq D_X,\notag\\
			&\hspace{0.95cm}\mathbb{E}[\bar{d}(X,\hat{S})]\leq D_S.\notag
		\end{align}
		where $S-X-(\hat{S},\hat{X})$ forms a Markov chain, and $\bar{d}(x,\hat{s})=\sum_{s}p(s|x)d_S(s,\hat{s})$.
		\item outer and inner bounds of CEO problem\footnote{Herein we restate the formulation in \cite{Courtade_Weissman_2014} without the time sharing random variable since we only care about the sum rate in this paper.} with logarithmic loss  (c.f. \cite{Courtade_Weissman_2014}) as
		\begin{align}
			R^\mathrm{CEO}_{\mathrm{out}}(D)=&\min_{U_1,U_2}\left[I(U_1;X_1|S)+I(U_2;X_2|S)+H(S)-D\right]^+,\notag\\
			&\text{s.t.}\quad D\geq H(S|U_1,U_2)\notag
		\end{align}
		according to a joint distribution \[
		P(s)P(x_1|s)P(x_2|s)P(u_1|x_1)P(u_2|x_2),
		\]
		and 
		\begin{align}
			R^\mathrm{CEO}_{\mathrm{in}}(D)=&\min_{U_1,U_2}I(U_1,U_2;X_1,X_2)\notag\\
			&\text{s.t.}\quad D\geq H(S|U_1,U_2)\notag
		\end{align}
		according to a joint distribution\footnote{For simplicity we follow the same statements as Courtade and Weissman \cite{Courtade_Weissman_2014}, while the auxiliary random variables in outer and inner bounds will not be the same one in general.} \[
		P(s,x_1,x_2)P(u_1|x_1)P(u_2|x_2).
		\]
		\item Berger-Tung outer bound of MT problem (c.f. \cite{Berger1978}) as
		\begin{align}
			R_{\mathrm{out}}^{\mathrm{BT}}(\bm{D})=&\min_{\bm{U}}I(\bm{X};\bm{U})\notag\\
			&\text{s.t.}\quad\mathbb{E}[d(X_i,\hat{X}_i)]\leq D_i,\quad i=1,2,\cdots,L,\notag
		\end{align}
		where $\bm{D}=\{D_1,\cdots,D_L\}$ and the Markov chain behaves $U_i\rightarrow X_i\rightarrow \bm{X}_{i^c}$ for $i=1,2,\cdots,L$.
	\end{enumerate}
\end{definition}
Now with Definition \ref{Def5}, we clarify the degeneration of our bounds in the following corollary.
\begin{corollary}\label{corollary_degeneration}
	If $(R,D_s,\bm{D}_X)$ is admissible,
	\begin{enumerate}
		\item then by fixing $L=1$ and non-negative $D_S,D_X$,
		\begin{align}
			R_{\mathrm{in}}(D_S,D_X)=R(D_S,D_X)=R_{\mathrm{out}}(D_S,D_X)=\widehat{R}(D_S,D_X),\label{corollaryp2p}
		\end{align}
		where Eq. \eqref{outer_bound_single_letter} coincides Eq. \eqref{inner_bound_single_letter} and they are reduced to the indirect rate-distortion characterization for semantics.
		\item 
		then by fixing $L=2$, $d_S(s,\hat{s})=-\log(\hat{s}(s))$, i.e. the distortion is specified to the logarithmic loss, and $D\geq0$,
		\begin{align}
			R^\mathrm{CEO}_{\mathrm{in}}(D)=R_{\mathrm{in}}(D,+\infty)=R(D,+\infty)=R_{\mathrm{out}}(D,+\infty)= R^\mathrm{CEO}_{\mathrm{out}}(D),\label{corollaryCEO}
		\end{align}
		where Eq. \eqref{outer_bound_single_letter} and Eq. \eqref{inner_bound_single_letter} coincide to the two-user CEO problem with logarithmic loss. Herein $R(D,+\infty)$ denotes the sum rate defined in Eq. \eqref{6} by relaxing $D_{X_i}\rightarrow\infty$ for all $i$.
		\item 
		then for non-negative $\bm{D}$,
		\begin{align}
			R_{\mathrm{out}}(+\infty,\bm{D})\geq R^\mathrm{BT}_{\mathrm{out}}(\bm{D}),\label{corollaryMT}
		\end{align}
		where Eq. \eqref{outer_bound_single_letter} is degenerated to the Berger-Tung outer bound. Herein $R_{\mathrm{out}}(+\infty,\bm{D}_X)$ denotes the sum rate defined in Eq. \eqref{6} by relaxing $D_{S}\rightarrow\infty$.
	\end{enumerate}
\end{corollary}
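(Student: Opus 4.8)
The plan is to establish all three degenerations by exploiting the sandwich $R_\mathrm{out}(D_S,\bm{D}_X)\le R(D_S,\bm{D}_X)\le R_\mathrm{in}(D_S,\bm{D}_X)$ furnished by Theorem~\ref{Theorem_label_outer_bound} and Proposition~\ref{Theorem_label_inner_bound}, combined with the appropriate specialization of the distortion constraints. The recurring technical device is the \emph{indirect-to-direct} reduction: since $S-X_i-U_i$ holds under \eqref{jointd}, for any reproduction $\hat S=f(\bm U)$ one has $\mathbb{E}d_S(S,\hat S)=\mathbb{E}\bar d(\bm X,\hat S)$ with $\bar d(\bm x,\hat s)=\sum_s p(s|\bm x)d_S(s,\hat s)$, so the hidden-source constraint is always rewritten as a constraint on the observable. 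For part a) I first collapse the inner bound: by data processing $I(X;V)\ge I(X;f^\star(V),g^\star(V))=I(X;\hat S,\hat X)$, with equality attained by the admissible choice $V=(\hat S,\hat X)$, so $R_\mathrm{in}(D_S,D_X)=\widehat R(D_S,D_X)$ after the indirect-to-direct rewriting. For the outer bound I take $\bm Y$ trivial (admissible at $L=1$, where conditional independence is vacuous) and use $I(X;U_1\mid W)=I(X;U_1,W)\ge I(X;\hat S,\hat X)\ge\widehat R(D_S,D_X)$ for every feasible $(W,U_1)$, whence $R_\mathrm{out}\ge\widehat R$. The sandwich then forces $\widehat R\le R_\mathrm{out}\le R\le R_\mathrm{in}=\widehat R$, giving \eqref{corollaryp2p}.

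For part b) I specialize $d_S$ to logarithmic loss and send $D_X\to\infty$ so the observation constraints become vacuous. Using the standard fact that the posterior reproduction $\hat s=P_{S|\bm U}$ minimizes $\mathbb{E}[-\log\hat S(S)]$ to the value $H(S\mid\bm U)$, the semantic constraint reduces to $H(S\mid\bm U)\le D$, which immediately matches the inner bound to $R^\mathrm{CEO}_\mathrm{in}(D)$. For the outer bound I evaluate the $\max$ at $\bm Y=S$ (which lies in $\mathcal Y$ because $X_1,X_2$ are conditionally independent given $S$) and $\bm W$ trivial, yielding the objective $I(S;\bm U)+\sum_{i}I(X_i;U_i\mid S)$; bounding $I(S;\bm U)=H(S)-H(S\mid\bm U)\ge[H(S)-D]^+$ and using $[a]^++b\ge[a+b]^+$ for $b\ge0$ shows this dominates $\bigl[I(U_1;X_1|S)+I(U_2;X_2|S)+H(S)-D\bigr]^+$, so $R_\mathrm{out}(D,+\infty)\ge R^\mathrm{CEO}_\mathrm{out}(D)$. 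Invoking the exact sum-rate characterization of Courtade and Weissman, $R^\mathrm{CEO}_\mathrm{in}(D)=R^\mathrm{CEO}_\mathrm{out}(D)$, and sandwiching closes \eqref{corollaryCEO}.

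For part c) I send $D_S\to\infty$ to kill the semantic constraint and again evaluate the outer bound at $\bm Y=S$, $\bm W$ trivial. The crux is the identity $I(S;\bm U)+\sum_{i}I(X_i;U_i\mid S)=I(\bm X;\bm U)$, which I will prove from \eqref{jointd}: conditioned on $S$ the pairs $(X_i,U_i)$ are mutually independent and $U_i-X_i-S$, so $\sum_i I(X_i;U_i\mid S)=H(\bm U\mid S)-\sum_i H(U_i\mid X_i)$, while $I(\bm X;\bm U)=H(\bm U)-\sum_iH(U_i\mid X_i)$; adding $I(S;\bm U)=H(\bm U)-H(\bm U\mid S)$ telescopes the conditional-entropy terms. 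Since the distortion constraints and the Markov structure $U_i-X_i-\bm X_{i^c}$ then coincide with those of the Berger--Tung outer bound, this choice realizes $\min_{\bm U}I(\bm X;\bm U)=R^\mathrm{BT}_\mathrm{out}(\bm D)$, and because $R_\mathrm{out}$ is a maximum over $\bm Y,\bm W$ we conclude $R_\mathrm{out}(+\infty,\bm D)\ge R^\mathrm{BT}_\mathrm{out}(\bm D)$, i.e. \eqref{corollaryMT}.

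The main obstacle I anticipate is part b): one must be careful that the $\max_{\bm Y}$ cannot be taken at a trivial $\bm Y$ (which would violate membership in $\mathcal Y$ and fail to produce the $H(S)$ term), so the choice $\bm Y=S$ is forced, and the passage from $I(S;\bm U)$ to the truncated $[\,\cdot\,]^+$ form of the CEO bound must be justified by the inequality $[a]^++b\ge[a+b]^+$ rather than by an equality. Consequently the actual closing of the gap rests on the external tightness result $R^\mathrm{CEO}_\mathrm{in}=R^\mathrm{CEO}_\mathrm{out}$ of Courtade and Weissman rather than on any coincidence of our own bounds, whereas parts a) and c) follow from the self-contained sandwich and the conditional-independence identity.
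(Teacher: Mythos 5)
Your proposal is correct, and for parts a) and b) it follows essentially the same route as the paper: lower-bound the $\max_{\bm Y,\bm W}$ by a specific admissible choice, use the log-loss identity $\min_{\hat s}\mathbb{E}[-\log\hat S(S)]=H(S|\bm U)$ and the indirect-to-direct (surrogate distortion) rewriting, and close the sandwich $R_\mathrm{out}\le R\le R_\mathrm{in}$ by importing the Courtade--Weissman tightness of the CEO sum rate. The one cosmetic difference in part a) is that you evaluate the outer bound at a trivial $\bm Y$ (valid only because conditional independence is vacuous at $L=1$, and because $W\perp X$ makes $I(X;U_1|W)=I(X;U_1,W)$), whereas the paper evaluates at $\bm Y=S$ and collapses via the Markov chain $S-X-U$; both give $R_\mathrm{out}\ge\widehat R$. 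The genuine divergence is part c): the paper keeps the general $(\bm Y,\bm W)$ form, identifies $R_\mathrm{out}(+\infty,\bm D)$ with Wagner's improved outer bound for the MT problem, and cites Wagner--Anantharam for the domination of Berger--Tung; you instead lower-bound the max by the single evaluation $\bm Y=S$, $\bm W$ trivial and prove the telescoping identity $I(S;\bm U)+\sum_i I(X_i;U_i|S)=I(\bm X;\bm U)$ directly from the factorization $P(s)\prod_iP(x_i|s)P(u_i|x_i)$ (which is valid, since given $S$ the pairs $(X_i,U_i)$ are independent and $H(\bm U|\bm X)=\sum_iH(U_i|X_i)$ when $W$ is trivial). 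Your route is more elementary and self-contained, and it suffices for the stated inequality \eqref{corollaryMT}; what it gives up is the paper's stronger structural observation that $R_\mathrm{out}(+\infty,\bm D)$ is exactly Wagner's improved bound (i.e., potentially strictly larger than $R^\mathrm{BT}_\mathrm{out}$ because the max ranges over nontrivial $\bm Y,\bm W$ as well), rather than merely an upper neighbor of Berger--Tung.
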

\begin{proof}
	See Appendix \ref{proof_corollary_degeneration}.
\end{proof}

In this subsection we present three degenerated cases corresponding to the point-to-point compression, CEO problem and MT problem, respectively. The corollary verify that our inner and outer bounds are more generalized results which can cover existing conclusions in \cite{Liu_Shao_Zhang_Poor_2022,Courtade_Weissman_2014,Berger1978}. Besides, from Corollary \ref{corollary_degeneration} we also observe a known conclusion that the converse and achievability bounds in point-to-point compression and CEO problem always meet, while the coincidence in MT problem does not exists in general. In our semantic-aware MT system, this non-coincidence still exists generally. In the following, we will present a specific theoretical model to analyze the sum-rate performance.

\section{A Novel Model: Mixed MSE-Log Loss Semantic-Aware MT Problem and Its Fundamental Limits with Gaussian Mixture Sources}\label{Sec4}

\subsection{The Proposed Mixed MSE-Log Loss Framework}
In most AI-based frameworks, the quadratic loss is commonly used in regression tasks but restricted being in cope with tasks like classifications and segmentations. Instead of MSE loss, logarithmic loss that measures the distance between distributions is widely used in a larger range of tasks, which is defined as:
\begin{align}
	d\left(x,\hat{x}\right)\triangleq\frac{1}{\log\hat{x}(x)},\label{16}
\end{align}
where $\hat{x}(x)$ denotes the Kullback-Leibler (KL) divergence between the reconstruction $\hat{x}$ and the empirical distribution. A more common version of logarithmic loss appears with expectation over a different distribution and is well known as cross entropy loss. The superiority of this loss is mainly attributed to that its update will not depend on the derivatives of sigmoid outputs, resulting in a faster convergence speed and better accuracy than MSE loss in training progress \cite{Wang2022}\cite{Lorenzo2023}. Note that it is more reasonable to measure the abstract semantic information via distributions but not their actual realizations. Therefore, in this section, we propose a \textbf{mixed MSE-Log (MML) loss framework} for semantic-aware MT source coding problem as following: we apply logarithmic loss on semantic source and MSE loss on observations $\bm{X}$. This is due to the difficulty to measure the invisible semantic information, consequently resulting in a proper distribution measure on semantics. More specifically, considering a slightly modified decoder $\psi_S(\cdot):\prod_{i=1}^L\mathcal{C}_i\mapsto\mathcal{P}(\mathcal{S}^k)$, where the reproduction alphabet $\mathcal{P}(\mathcal{S}^k)$ is the set of probability distribution over $\mathcal{S}^k$, we specify symbol-wised distortion measurements as $d_S(s,\hat{s})=-\log\hat{s}(s)$ and $d_X(x_i,\hat{x}_i)=\left\Vert x_i-\hat{x}_i\right\Vert^2$ for $i=1,2,\cdots,L$, in which we abuse the notation $d_S(\cdot,\cdot):\mathcal{S}\times\mathcal{P}(\mathcal{S})\mapsto\mathbb{R}$.
\begin{remark}
	As mentioned above, our semantic-aware MT problem is an extension of the single user rate-distortion characterizing semantic information from Liu et al. \cite{Liu_Zhang_Poor_2021}, in which they investigated the case that semantic and observations are joint Gaussian distributed. We note that the Gaussian modeling on semantic information is an ideal assumption in most practical scenarios though it presents the possibility to analyze the tight/exact characterization of rate-distortion behavior. However, in most tasks, e.g. clusterings, semantic segmentations or object detection, the cardinality of semantic labels $|\mathcal{S}|$ is actually finite. This motivates us to consider the above model where multi-users obtain continuous observations of a discrete semantic source. The main results of this work lie as follows.
\end{remark}
\subsection{Fundamental Limits of Semantic-Aware MT Problem with Gaussian Mixture Sources}\label{fundamental_limits}
Based on the problem formulation in Sec. \ref{Sec2} and the above remark, we consider such a case that semantic source $S$ is a discrete random variable over alphabet $\mathcal{S}=\{1,2,\cdots,M\}$, and $P\{S=\ell\}=\omega_{\ell}$ for $\ell\in\mathcal{S}$. Meanwhile the observations $\bm{X}$ are highly dependent on $S$ while it is a conditional Gaussian vector with probability density function
\begin{align}
	p_{\bm{X}}(\bm{x}|S=\ell)=\mathcal{N}(\bm{x};\ell\cdot\bm{1},\tb{K}_X)
\end{align}
where $\ell\cdot\bm{1}=(\ell,\ell,\cdots,\ell)$ and $\tb{K}_X$ is a positive definite matrix. In view of the lack of conclusions on Gaussian mixture models in a multi-user setting, we first present some known results on point-to-point rate-distortion function with Gaussian mixture, denoted as $R(D)$.
\begin{proposition}[c.f. \cite{Weidmann_Vetterli_2012} (Conditional rate-distortion bound)]\label{prop_CRDF}
	Given a Gaussian mixture source $X$ with a hidden discrete memoryless source $S$ that switches between $|\mathcal{S}|$ Gaussian sources $\mathcal{N}(0,\sigma_X^2)$ with probabilities $\omega_i=\mathbb{P}\{S=s\}$, RDF can be lower bounded by
	\begin{align}
		R(D)\geq\min_{p(\hat{x}|x,s):\mathbb{E}[X-\hat{X}]\leq D}I(X;\hat{X}|S)=\frac{1}{2}\log\frac{\sigma_X^2}{D}\label{29}.
	\end{align}
	where $\hat{X}$ denotes the reconstruction of $X$.
\end{proposition}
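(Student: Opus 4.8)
The plan is to prove the statement in two stages: first establish the inequality $R(D) \ge \min_{p(\hat{x}|x,s)} I(X;\hat{X}|S)$, and then evaluate the resulting minimization in closed form. The inequality is a \emph{genie-aided} (conditional rate-distortion) bound: since $S$ is hidden, any encoder sees only $X$, so every admissible reconstruction obeys the Markov chain $S - X - \hat{X}$, whence $I(S;\hat{X}|X)=0$. Expanding $I(X,S;\hat{X})$ by the chain rule in two ways then gives $I(X;\hat{X}) = I(S;\hat{X}) + I(X;\hat{X}|S) \ge I(X;\hat{X}|S)$. Combining this with the ordinary single-letter converse $R(D) \ge \min_{p(\hat{x}|x):\,\mathbb{E}[(X-\hat{X})^2]\le D} I(X;\hat{X})$, and observing that any feasible $p(\hat{x}|x)$ is a degenerate (side-information-free) choice of $p(\hat{x}|x,s)$ with the same distortion, yields $R(D) \ge \min_{p(\hat{x}|x,s):\,\mathbb{E}[(X-\hat{X})^2]\le D} I(X;\hat{X}|S)$. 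Conceptually, this simply says that revealing $S$ to both encoder and decoder cannot increase the required rate.

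For the evaluation, I would decompose the conditional mutual information over the realizations of the hidden label, $I(X;\hat{X}|S) = \sum_{s} \omega_s\, I(X;\hat{X}|S=s)$. Conditioned on $S=s$ the source is $\mathcal{N}(0,\sigma_X^2)$, so the scalar Gaussian converse gives $I(X;\hat{X}|S=s) \ge \tfrac12\log\frac{\sigma_X^2}{D_s}$ for each $s$, where $D_s \triangleq \mathbb{E}[(X-\hat{X})^2\,|\,S=s]$ are the per-class distortions constrained by $\sum_s \omega_s D_s \le D$. The final step is the distortion-allocation problem $\min \sum_s \omega_s \tfrac12\log\frac{\sigma_X^2}{D_s}$ subject to $\sum_s \omega_s D_s \le D$. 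Because every class shares the \emph{same} variance $\sigma_X^2$, concavity of the logarithm (Jensen's inequality) gives $\sum_s \omega_s \log D_s \le \log(\sum_s \omega_s D_s) \le \log D$, so the objective is bounded below by $\tfrac12\log\frac{\sigma_X^2}{D}$, with equality attained by the uniform allocation $D_s = D$ for all $s$ (valid in the regime $0 < D \le \sigma_X^2$). This establishes both the inequality and the stated closed form.

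The chain-rule manipulation and the Jensen step are routine; the step that most deserves care is the first, namely arguing rigorously that the genie-aided problem is a valid \emph{lower} bound. The subtlety is that one must invoke the Markov structure $S - X - \hat{X}$ forced by the hidden nature of $S$ to convert $I(X;\hat{X})$ into $I(X;\hat{X}|S)$, rather than naively assuming the encoder already has access to $S$. I would also flag that the equal-variance assumption is essential to the clean answer: had the classes carried distinct variances $\sigma_{X,s}^2$, the uniform allocation would no longer be optimal and a reverse water-filling solution would replace the simple expression $\tfrac12\log(\sigma_X^2/D)$.
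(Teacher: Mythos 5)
Your proposal is correct. Note that the paper itself offers no proof of this proposition --- it is imported verbatim from Weidmann and Vetterli as a known result (and the distortion constraint $\mathbb{E}[X-\hat{X}]\leq D$ in the statement is evidently a typo for the MSE constraint $\mathbb{E}[(X-\hat{X})^2]\leq D$, which you correctly read it as). Your two-stage argument --- the genie-aided reduction $I(X;\hat{X})=I(S;\hat{X})+I(X;\hat{X}|S)\geq I(X;\hat{X}|S)$ via the Markov chain $S-X-\hat{X}$, followed by the per-class Gaussian converse and the Jensen step showing that the equal-variance components make the uniform allocation $D_s=D$ optimal --- is the standard derivation of the conditional rate-distortion bound and is sound; your closing remark that distinct per-class variances would instead require reverse water-filling is also accurate.
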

Weidemann and Vetterli investigated the distortion rate behaviors for Gaussian mixtures via Eq. \eqref{29}. Note that this bound is the conditional rate-distortion to encode $X$ with side information $S$ accessing to both compressor and decompressor, thus it is a trivial outer bound if we extend it to our MT problem. However, one will not pleased to see that since this bound directly discards the mutual information $I(S;\hat{X})$, which inevitably entails significant gap when $S$ can be well estimated or alphabet $\mathcal{S}$ is large. To improve this, the following work simply analyzed this mutual information involving discrete and continuous random variables to improve the RDF of Gaussian mixtures under this case.
\begin{proposition}[c.f. \cite{Reznic_Zamir_Feder_2002} (Shannon lower bound)]
	With the same setups in Prop. \ref{prop_CRDF}, and $e\triangleq\frac{1}{2}\min_{i\neq j}|s_i-s_j|$ we have
	\begin{align}
		R(D)\geq H(\bm{\omega})-\epsilon(\sigma)+\frac{1}{2}\log\frac{\sigma_X^2}{D},\label{24}
	\end{align}
	where $\bm{\omega}=(\omega_1,\cdots,\omega_{|\mathcal{S}|})$, $\epsilon(\sigma)= H(S|X)$ and $\sigma\triangleq\frac{e}{\sigma_X^2}$.
\end{proposition}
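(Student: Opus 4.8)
The plan is to obtain the claimed inequality directly from the Shannon lower bound (SLB) for the mean-squared-error distortion, combined with an exact decomposition of the differential entropy of the mixture source. Recall that for any scalar continuous source $X$ under squared error the SLB reads $R(D)\geq h(X)-\frac{1}{2}\log(2\pi e D)$, where here $e$ denotes Euler's number. Hence the whole task collapses to evaluating $h(X)$ for the Gaussian mixture and then matching the resulting terms against $H(\bm{\omega})$, $\epsilon(\sigma)$, and $\frac{1}{2}\log(\sigma_X^2/D)$. The SLB itself is classical and needs no reproof.

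To evaluate $h(X)$ I would exploit the hidden discrete label $S$ through the chain-rule identity $h(X)=h(X|S)+I(S;X)$. Since $S$ is discrete while $X$ is continuous, the mutual information admits the mixed form $I(S;X)=H(S)-H(S|X)$, where $H(S)$ is the ordinary discrete entropy of the weights and $H(S|X)=\mathbb{E}_X[H(S\mid X=x)]$ is the average posterior uncertainty about the active component. Conditioned on $S=\ell$, the source is Gaussian with variance $\sigma_X^2$, so $h(X\mid S=\ell)=\frac{1}{2}\log(2\pi e \sigma_X^2)$ for every $\ell$, independently of the component mean $s_\ell$; averaging over $\ell$ gives $h(X|S)=\frac{1}{2}\log(2\pi e \sigma_X^2)$. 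Substituting these pieces yields $h(X)=\frac{1}{2}\log(2\pi e \sigma_X^2)+H(\bm{\omega})-H(S|X)$.

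The last step is purely algebraic: inserting this expression into the SLB cancels the $\frac{1}{2}\log(2\pi e)$ terms and leaves $R(D)\geq H(\bm{\omega})-H(S|X)+\frac{1}{2}\log(\sigma_X^2/D)$. Identifying $H(S|X)$ with $\epsilon(\sigma)$, where $\sigma=e/\sigma_X^2$ encodes the separation of the component means relative to the noise scale, completes the derivation. It is worth noting that, compared with the conditional bound $\frac{1}{2}\log(\sigma_X^2/D)$ of Prop. \ref{prop_CRDF}, the improvement is exactly $H(\bm{\omega})-H(S|X)=I(S;X)$, i.e. the information the observation carries about the label, which the conditional bound had discarded.

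The only genuinely delicate point I anticipate is the handling of $H(S|X)$. One must justify that the mixed identity $I(S;X)=H(S)-H(S|X)$ is valid --- which it is, since mutual information is symmetric and the posterior $P(S\mid X=x)$ is a well-defined discrete distribution for almost every $x$ --- and that the resulting $H(S|X)$ is finite and can be packaged into the single quantity $\epsilon(\sigma)$. The parameter $\sigma=e/\sigma_X^2$ enters \emph{only} through this term: well-separated means (large $\sigma$) render $S$ nearly recoverable from $X$, forcing $\epsilon(\sigma)\to 0$ and driving the bound toward $H(\bm{\omega})+\frac{1}{2}\log(\sigma_X^2/D)$, while heavily overlapping components make $\epsilon(\sigma)$ approach $H(\bm{\omega})$ and recover the conditional bound.
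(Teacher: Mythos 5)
Your derivation is correct and follows exactly the route the paper attributes to the cited reference: the Shannon lower bound $R(D)\geq h(X)-\frac{1}{2}\log(2\pi e D)$ combined with the decomposition $h(X)=h(X|S)+H(S)-H(S|X)=\frac{1}{2}\log(2\pi e\,\sigma_X^2)+H(\bm{\omega})-H(S|X)$, after which the $\frac{1}{2}\log(2\pi e)$ terms cancel. The paper itself supplies no proof (the proposition is quoted from \cite{Reznic_Zamir_Feder_2002}); the only substance in the cited work beyond your argument is that it further bounds $\epsilon(\sigma)=H(S|X)$ as an explicit function of $\sigma$ via Fano's inequality and hypothesis testing, a step that, as you correctly observe, is not needed for the inequality as literally stated here since $\epsilon(\sigma)$ is simply defined to equal $H(S|X)$.
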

Reznic et al. started from the Shannon lower bound (SLB) and finally formulated $\epsilon(\sigma)$ via Fano's inequality and hypothesis testing. By comparing the above two inequalities, one can find though SLB is a better outer bound than conditional rate-distortion bound, it cannot be stated in a closed form, since the characterization of Gaussian mixture entropy is still an open question in general. Moreover, we know SLB is tight for point-to-point case however it is not the same case for MT scenarios. Last but not least, our semantic-aware problem intrinsically constrains the process of semantic reconstruction, necessitating a more sophisticated analysis of the mutual information $I(S;\hat{X})$. All these reasons encourage us to explore an improved outer bound for Gaussian mixture sources under the semantic-aware MT scenario. Now let $R^\mathrm{C}(D_S,\bm{D}_X)$ denote the sum-rate distortion function under MML framework with Gaussian mixture observations. In the following we present the main results of this work.
\begin{theorem}\label{Theorem_label_Gaussian_Mixture}
	Without loss of generality, let $H(S|\bm{X})\leq D_S\leq H(\bm{\omega})$ and $0\leq D_{X_i}\leq \bm{e}_i^T\tb{K}_X\bm{e}_i$ for $i=1,2,\cdots,L$, where $\bm{\omega}=(\omega_1,\omega_2,\cdots,\omega_M)$, and $\bm{e}_i$ is an all-zero vector except for 1 at its $i$-th position, then
	\begin{align}
		R^\mathrm{C}(D_S,\bm{D}_X)\geq R^\mathrm{C}_\mathrm{out}(D_S,\bm{D}_X),
	\end{align}
	where
	\begin{subequations}
		\begin{align}
			R^\mathrm{C}_\mathrm{out}(D_S,\bm{D}_X)=&\min_{\bm{\Gamma}\in\mathbb{R}^{L\times L}}H(\bm{\omega})-\beta\left(\bm{\Gamma}\right)+\frac{1}{2}\log_2\frac{\det(\tb{K}_{X})}{\det(\bm{\Gamma})},\label{Cou}\\
			&\text{s.t. }\tb{O}\preceq\bm{\Gamma}\preceq\tb{K}_{X},\label{C3}\\
			&\hspace{0.5cm} \bm{e}_i^T\bm{\Gamma}\bm{e}_i\leq D_{X_i},\text{   for  } i=1,2,\cdots,L.
		\end{align}\label{Cout}
	\end{subequations}
	Herein  
	\begin{align}
		H(\bm{\omega})&=\sum_{\ell=1}^M\omega_\ell\log(\omega_\ell),\\
		\beta\left(\bm{\Gamma}\right)&=\min\left\{D_S,1+\log_2(M-1)p_e\left(\bm{\Gamma}\right)\right\},\label{26b}\\
		p_e\left(\bm{\Gamma}\right) &= \min\left\{1,\min_{0<\alpha<\frac{1}{2}}\left\{2Q\left(\frac{L\alpha}{\sqrt{\mathrm{tr}\{\tb{K}_X\}}}\right)+\frac{1}{L}\left(\frac{1}{2}-\alpha\right)^{-2}\mathrm{tr}\{\bm{\Gamma}\}\right\}\right\}  \label{28}.
	\end{align}
\end{theorem}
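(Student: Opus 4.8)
The plan is to prove the converse directly rather than specializing the single-letter bound of Theorem~\ref{Theorem_label_outer_bound}. For any admissible length-$k$ code I would start from $kR_\mathrm{sum}\ge H(C_\mathcal{L})=I(\bm{X}^k;C_\mathcal{L})$, the last equality holding because each $C_i$ is a deterministic function of $\bm{X}^k$. The crucial step is the decomposition obtained from the Markov chain $S^k-\bm{X}^k-C_\mathcal{L}$: since $I(S^k;C_\mathcal{L}\mid\bm{X}^k)=0$,
\[
I(\bm{X}^k;C_\mathcal{L})=I(S^k;C_\mathcal{L})+I(\bm{X}^k;C_\mathcal{L}\mid S^k),
\]
where the first term will carry the semantic cost $H(\bm{\omega})-\beta(\bm{\Gamma})$ and the second the observation cost $\tfrac12\log_2\frac{\det(\tb{K}_X)}{\det(\bm{\Gamma})}$. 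The matrix $\bm{\Gamma}$ is introduced as the averaged conditional error covariance $\bm{\Gamma}=\tfrac1k\sum_{j}\mathbb{E}[\mathrm{Cov}(\bm{X}_j\mid C_\mathcal{L},S^k)]$.

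For the observation term, conditioned on $S^k$ the vectors $\bm{X}_j$ are independent and $\mathcal{N}(S_j\bm{1},\tb{K}_X)$, so $h(\bm{X}^k\mid S^k)=\tfrac{k}{2}\log_2\!\big((2\pi e)^L\det(\tb{K}_X)\big)$. Applying the maximum-entropy bound to each $h(\bm{X}_j\mid C_\mathcal{L},S^k)$ and then Jensen's inequality (concavity of $\log_2\det$) to pass from the per-symbol covariances to their average $\bm{\Gamma}$, I obtain $h(\bm{X}^k\mid C_\mathcal{L},S^k)\le\tfrac{k}{2}\log_2\!\big((2\pi e)^L\det(\bm{\Gamma})\big)$, whence $\tfrac1k I(\bm{X}^k;C_\mathcal{L}\mid S^k)\ge\tfrac12\log_2\frac{\det(\tb{K}_X)}{\det(\bm{\Gamma})}$. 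Feasibility of $\bm{\Gamma}$ for \eqref{Cout} is immediate: conditioning cannot increase covariance, so $\mathrm{Cov}(\bm{X}_j\mid C_\mathcal{L},S^k)\preceq\mathrm{Cov}(\bm{X}_j\mid S^k)=\tb{K}_X$ gives $\tb{O}\preceq\bm{\Gamma}\preceq\tb{K}_X$, and the $i$-th diagonal entry of $\bm{\Gamma}$ is an averaged MMSE no larger than $\mathbb{E}d_X^k(X_i^k,\hat{X}_i^k)\le D_{X_i}$.

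For the semantic term I would write $I(S^k;C_\mathcal{L})=kH(\bm{\omega})-H(S^k\mid C_\mathcal{L})$ and bound $H(S^k\mid C_\mathcal{L})$ from above in two independent ways. First, the logarithmic-loss reproduction $\hat{S}^k$ is a function of $C_\mathcal{L}$, and Gibbs' inequality gives $H(S_j\mid C_\mathcal{L})\le\mathbb{E}[-\log_2\hat{S}_j(S_j)]$; summing over $j$ and invoking \eqref{Ds} yields $H(S^k\mid C_\mathcal{L})\le kD_S$. Second, any symbol-wise estimator of $S_j$ built from $C_\mathcal{L}$ feeds Fano's inequality, giving $H(S^k\mid C_\mathcal{L})\le k\big(1+\log_2(M-1)\,\bar{p}_e\big)$, where $\bar{p}_e$ is the average error probability. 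Keeping the smaller of the two bounds reproduces exactly $H(S^k\mid C_\mathcal{L})\le k\,\beta(\bm{\Gamma})$ once $\bar{p}_e\le p_e(\bm{\Gamma})$ is established. Combining the two terms gives $R_\mathrm{sum}\ge H(\bm{\omega})-\beta(\bm{\Gamma})+\tfrac12\log_2\frac{\det(\tb{K}_X)}{\det(\bm{\Gamma})}$ for the feasible $\bm{\Gamma}$ realized by the code; minimizing over all feasible $\bm{\Gamma}$ and noting that the bound is independent of $k$ then yields $R^\mathrm{C}(D_S,\bm{D}_X)\ge R^\mathrm{C}_\mathrm{out}(D_S,\bm{D}_X)$.

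The main obstacle is the error-probability estimate $\bar{p}_e\le p_e(\bm{\Gamma})$, which is what couples the matrix $\bm{\Gamma}$ to the discrete semantic source. I would detect $S_j$ by rounding the projection $\tfrac1L\bm{1}^T\hat{\bm{X}}_j$ of the decoded observation, and split its deviation from $S_j$ into the observation-noise part $\tfrac1L\bm{1}^T\bm{N}_j$ and the reconstruction part $\tfrac1L\bm{1}^T(\hat{\bm{X}}_j-\bm{X}_j)$. Allocating the decision margin $\tfrac12$ as $\alpha+(\tfrac12-\alpha)$ and applying a union bound, the Gaussian part contributes the tail $2Q(\cdot)$ while the reconstruction part is controlled by Chebyshev's inequality; the quadratic forms $\bm{1}^T\tb{K}_X\bm{1}$ and $\bm{1}^T\bm{\Gamma}\bm{1}$ are then converted to traces through the positive-semidefinite inequality $\bm{1}^T\tb{M}\bm{1}\le L\,\mathrm{tr}\{\tb{M}\}$, averaging over $j$ replaces the per-symbol covariances by $\bm{\Gamma}$, and optimizing over $\alpha\in(0,\tfrac12)$ (clipping at $1$) delivers $p_e(\bm{\Gamma})$ in \eqref{28}. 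The delicate points here are getting the constants in the $Q$-argument and the Chebyshev term exactly right, and ensuring that the same $\bm{\Gamma}$ that governs the rate determinant also governs the detection error—i.e.\ reconciling the conditional error covariance used for the rate with the reconstruction error seen by the detector—so that a single feasible $\bm{\Gamma}$ serves both bounds simultaneously.
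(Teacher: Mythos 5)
Your proposal follows essentially the same route as the paper's proof: the paper first specializes its single-letter outer bound (Theorem~\ref{Theorem_label_outer_bound}) by taking the remote source equal to $S$, then performs exactly your decomposition $I(\bm{X};\bm{U})=I(S;\bm{U})+I(\bm{X};\bm{U}|S)$, bounds the Gaussian term by maximum entropy with $\bm{\Gamma}=\tb{K}_{\bm{X}|\bm{U}}$, and caps $H(S|\bm{U})$ by the smaller of the log-loss bound $D_S$ and a Fano bound whose error probability is controlled by a threshold detector on $\frac{1}{L}\sum_i\hat{X}_i$ with the same $\alpha/(\tfrac12-\alpha)$ margin split, Gaussian tail plus a Markov--Chebyshev step on the reconstruction error. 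Working at the block level rather than through the single-letter bound is a presentational difference only. The substantive issues are precisely the two ``delicate points'' you flag, and both need repair. First, you define $\bm{\Gamma}$ as the error covariance conditioned on \emph{both} $C_\mathcal{L}$ and $S^k$, which is why you cannot make one $\bm{\Gamma}$ serve the determinant bound and the detector simultaneously; the paper avoids this by conditioning on the codewords only and feeding the detector the MMSE reconstruction $\mathbb{E}[\bm{X}|\bm{U}]$, so that $\mathrm{tr}\{\bm{\Gamma}\}$ is literally the sum of per-agent mean-square errors appearing in the Chebyshev step, while the rate side still goes through because $h(\bm{X}|\bm{U},S)\le\tfrac12\log_2\bigl((2\pi e)^L\det(\tb{K}_{\bm{X}|\bm{U}})\bigr)$ follows from $\tb{K}_{\bm{X}|\bm{U},S}\preceq\tb{K}_{\bm{X}|\bm{U}}$. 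Second, your generic inequality $\bm{1}^T\tb{M}\bm{1}\le L\,\mathrm{tr}\{\tb{M}\}$ only gives $\mathbb{V}\bigl[\tfrac1L\bm{1}^T\bm{N}\bigr]\le\mathrm{tr}\{\tb{K}_X\}/L$ and hence the weaker tail $2Q\bigl(\sqrt{L}\alpha/\sqrt{\mathrm{tr}\{\tb{K}_X\}}\bigr)$; the factor $L$ in the theorem's $Q$-argument comes from the model's conditional independence of the observations given $S$, under which $\bm{1}^T\tb{K}_X\bm{1}=\mathrm{tr}\{\tb{K}_X\}$ exactly and $X_S\sim\mathcal{N}(\ell,\mathrm{tr}\{\tb{K}_X\}/L^2)$. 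With those two adjustments your argument coincides with the paper's.
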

\begin{proof}
	See Appendix \ref{proof_Gaussian_Mixture}.
\end{proof}
Eq. \eqref{Cout} is a non-trivial outer bound of sum rate-distortion function of semantic-aware MT problem, which shows the least sum rate of $L$ terminals following Gaussian mixture distribution to meet the fidelity criterion $D_S$ and $\bm{D}_X$. The proof starts from the single letter characterization and two different distortion measures, which yields a nature lower bound of rate. Moreover, we refine the rate of $H(S|\bm{U})$ via Fano inequality, based on the fact that the error probability of detecting $S$ won't be very large if the observable source $\bm{X}$ can be reconstructed with high quality. Note that Eq. \eqref{Cout} still has gap to the true RDF, mainly due to the non-convex optimization in Def. \ref{Definition_Cout} and the unclear characterization of the Gaussian mixture entropy, which are both open questions for now. Nevertheless, the non-convexity optimization can be completely solved by a diagonal assumption on $\bm{\Gamma}$ with sophisticated discussion on distortion regions, which is shown without losing generality in the later. In one word, our outer bound on rate-distortion behavior serves as \textbf{a relative tighter one} for Gaussian mixture sources than Eq. \eqref{29} and Eq. \eqref{24}, even though we set $D_S\rightarrow\infty$, which is mainly attributed to the careful analysis in Eq. \eqref{28}. Next, an toy example is used to illustrate the improvement.
\begin{figure}[t]
	\centering
	\includegraphics[width=0.8\textwidth]{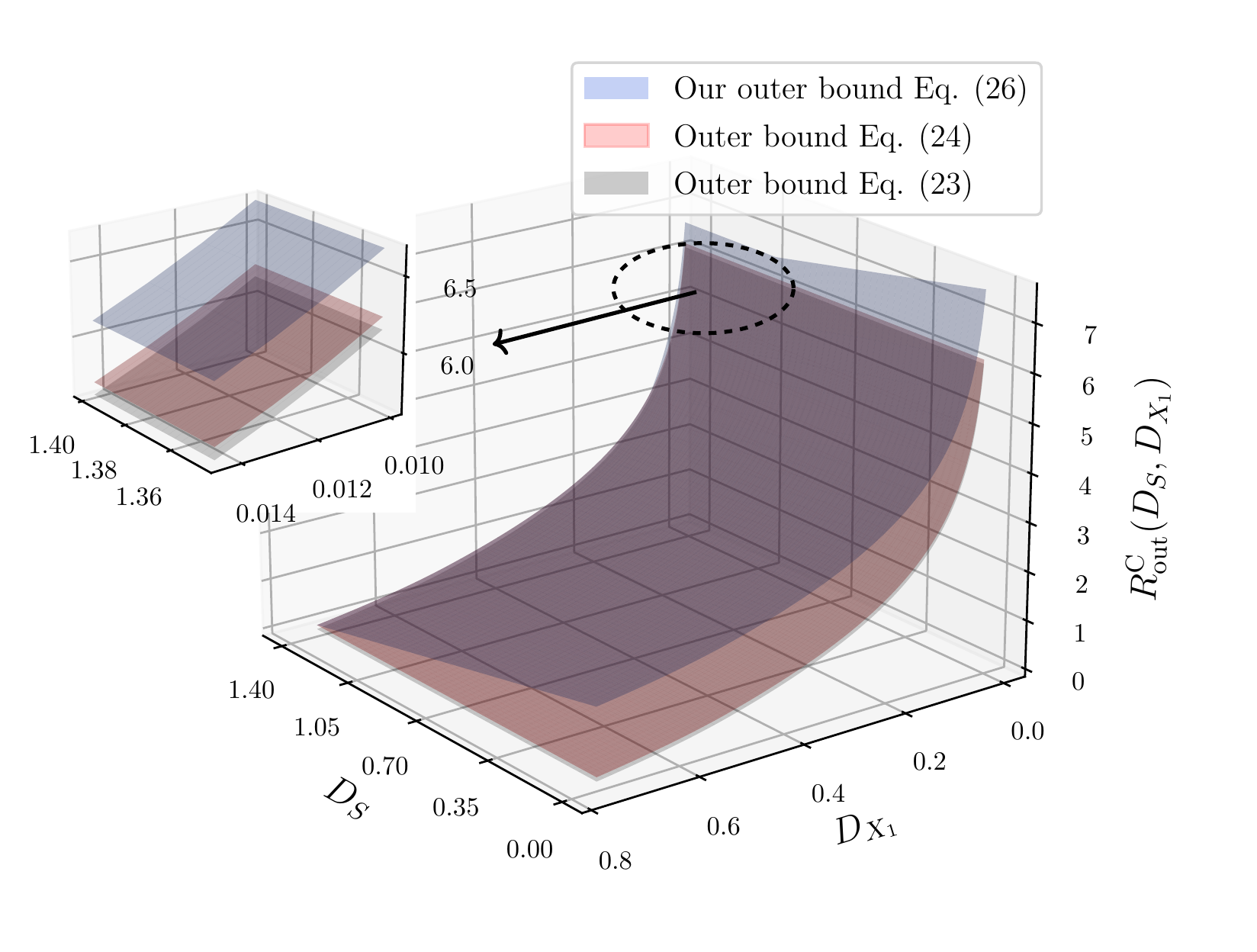}
	\caption{Rate behavior against distortions of the semantic-aware MT source coding problem compared with existing bound}
	\label{3}
\end{figure}

\begin{example}\label{Ex1}
	Let $L=2$, $M=3$, $\bm{\omega}=(0.5,0.2,0.3)$, $\tb{K}_X=\mathrm{diag}\{0.75,0.5\}=\mathrm{diag}\{\sigma_{X_1}^2,\sigma_{X_2}^2\}$, and distortions are constrained as $0\leq D_S\leq H(\bm{\omega})$ and $0\leq D_{X_1}\leq\sigma_{X_1}^2$ and $0\leq D_{X_2}\leq\sigma_{X_2}^2$. Our outer bound $R^\mathrm{C}_\mathrm{out}(D_S,\bm{D}_X)$ is plotted against $D_S$ and $D_{X_1}=D_{X_2}$ in Fig. \ref{3}, shown as the blue surface. Besides, bound $R(\bm{D}_X)=\sum_{i=1}^2R(D_i)$ in Eq. \eqref{29} and Eq. \eqref{24} are also presented as the gray and red surfaces, respectively. 
\end{example}
From Fig. \ref{3}, one can easily conclude the sub-optimality of neglecting the mutual information. Besides, some more observations can be obtained:
\begin{enumerate}
	\item The rate-distortion behavior of $R^\mathrm{C}_\mathrm{out}(D_S,\bm{D}_X)$ shows different curvatures, namely linearity against $D_S$ and logarithm against $\bm{D}_X$. This is owing to the different distortion measures under MML framework. 
	\item Based on the constraint Eq. \eqref{28}, our outer bound is tighter than the trivial bounds at an observation-specified high resolution regime, which means $\mathrm{tr}\{\bm{D}_X\}$ is small enough. The comparison among three bounds is shown in the subplot. It verifies the fact that there exists connection between semantic distortion and observed distortion, In other word, an accurate observation reconstruction will not lead to a terrible semantic recovery, which is not shown in existing works. 
	\item We can explain the above observation in another perspective. It is possible in our scenario that one of the constraint is active while the other is dummy. Therefore, a nature ideas is to explore the activation of the different distortion constraints $D_S$ and $\bm{D}_X$ with our outer bound, which will be shown in the following subsection.
\end{enumerate}
\subsection{Characterization of Outer Bound}
By considering the multi-constraint setups in our semantic-aware rate-distortion problem, it is necessary to investigate the activeness of observation and semantic distortion constraints, thus simplifying the statement of outer bound by figuring out the boundary conditions. Moreover, owing to such an indirect source setting, we call Eq. \eqref{Dx} the surrogate distortion of Eq. \eqref{Ds}, while these two distortions can be transformed to each other with joint Gaussian distributed sources. It is to say, with $(S,\bm{X})$ following joint Gaussian distribution and both under MSE measure, $d_S(S,\hat{S})$ can be converted to $d_X(X_i,\hat{X_i})$ and vice versa. However, in our Gaussian mixture setups, we find the fact that restricting $d_X(X_i,\hat{X_i})$ though reflects an upper bound on $d_S(S,\hat{S})$, but  constraining $d_S(S,\hat{S})$ will not necessarily yield small $d_X(X_i,\hat{X_i})$, for $i=1,2,\cdots,L$.
\begin{figure}[tbp]
	\subfloat[]{
		\includegraphics[width=0.49\textwidth]{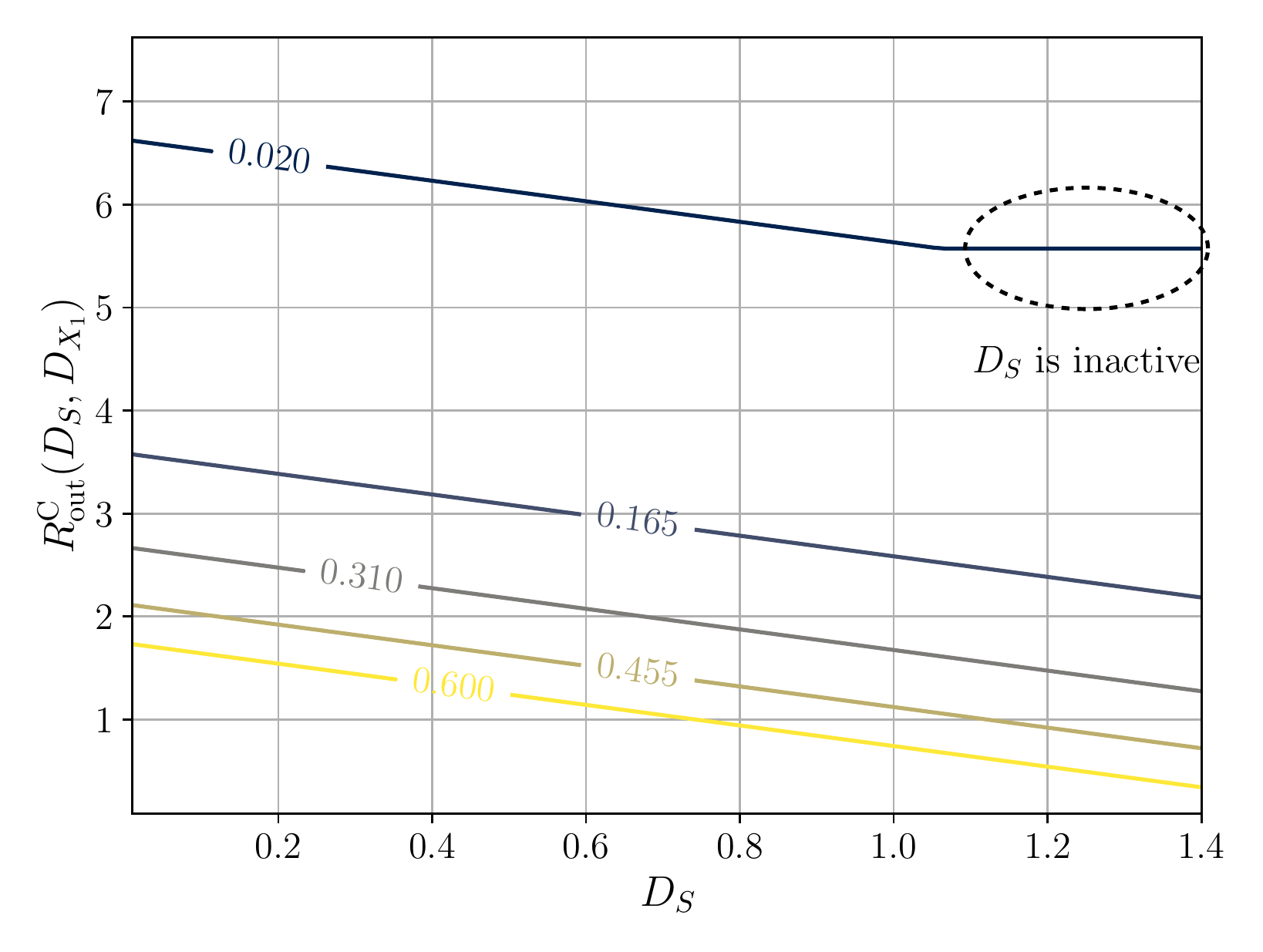}
		\label{}}
	\subfloat[]{
		\includegraphics[width=0.49\textwidth]{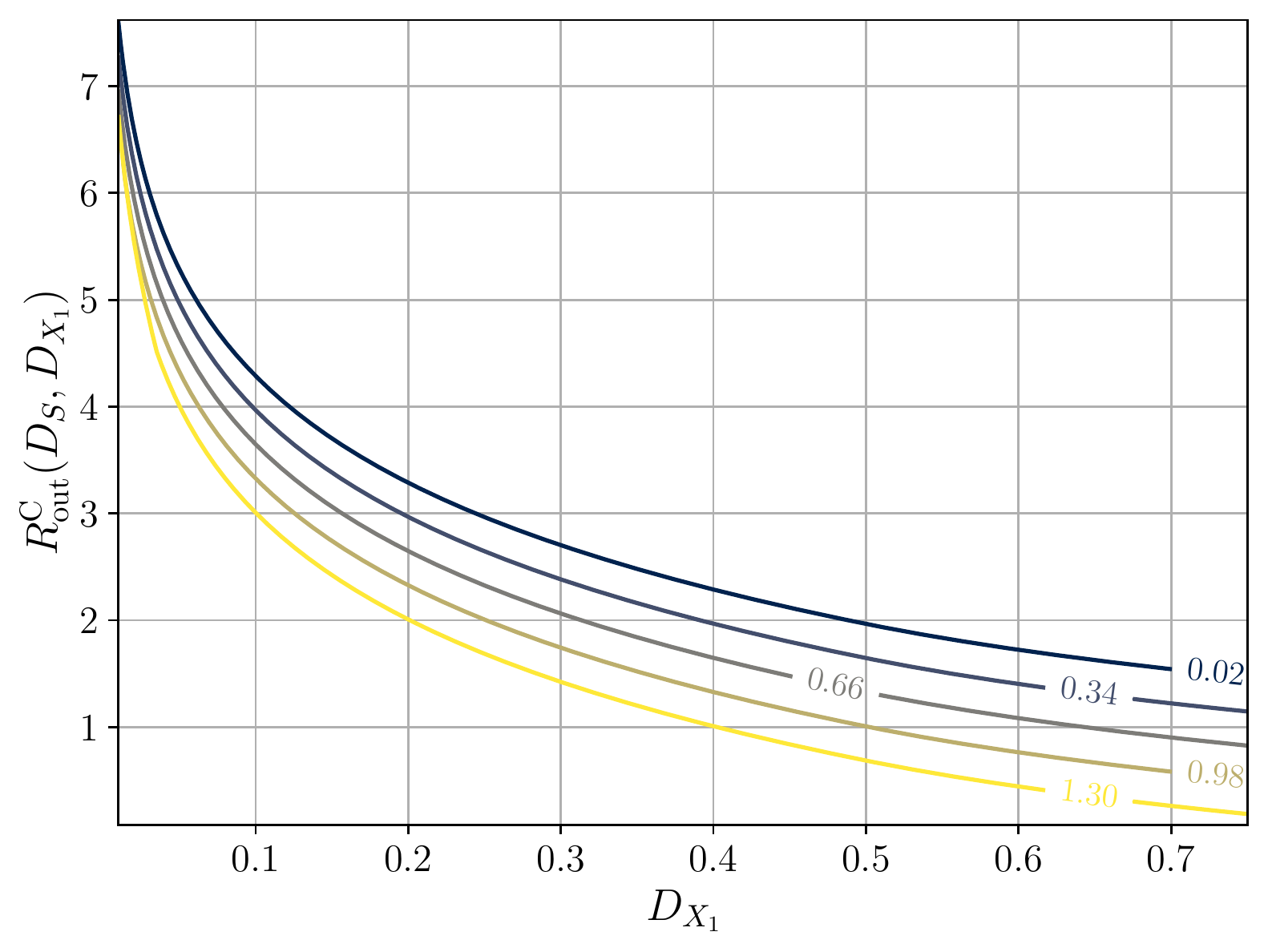}
		\label{}}
	\caption{Rate-distortion behavior: (a) Contour plot of rate against $D_{X_1}$ given $D_S=0.02,0.34,0.66,0.98,1.30$; (b) Contour plot of rate against $D_S$ given $D_{X_1}=0.02,0.165,0.310,0.455,0.600$}\label{Fig3}
\end{figure}

To make it clear, we adopt the setup in Example \ref{Ex1} and present the contour plots of rate outer bound in Fig. \ref{Fig3}. For the left, the rate against semantic distortion constraint is plotted, where we can find the linearity decrease of $D_S$ due to the logarithmic loss. Moreover, for the high resolution regime, e.g. $D_{X_1}=0.02$, semantic distortion constraint is inactive when $D_S\geq1.1$, which is characterized in outer bound Eq. \eqref{Cout}. For the right, an interesting phenomenon in the plot of rate against observation distortion is that the rate is always decrease with $D_{X_1}$ no matter how small $D_S$ is. Therefore we conjugate that in our Gaussian mixture setups, only small $D_{X_1}$ yields small $D_S$, while accurate semantic recovery cannot ensure an accurate observation reconstruction. Before the formal statement, we first assume both semantic and observed distortions lie in the non-trivial intervals, in which we do not consider the case with too large $D_S$ and $\bm{D}_X$ resulting in both inactive constraints, i.e.
\begin{align}
	H(S|\bm{X})\leq &D_S\leq H(\bm{\omega}),\notag\\
	0\leq &D_{X_i}\leq \bm{e}_i^T\tb{K}_X\bm{e}_i \quad\text{for}\quad i=1,2,\cdots,L.
\end{align} 
In the following, a corollary is presented to prove the conjugate theoretically.
\begin{corollary}\label{Contradiction}
	Under the MML framework depicted in Sec. \ref{fundamental_limits}, then for rate-distortion bound $R^C_{\mathrm{out}}(D_S,\bm{D}_X)$,
	\begin{enumerate}
		\item $\bm{D}_X^\star$ exists,
		\item $D_S^\star$ does not exist,
	\end{enumerate}
	where $\bm{D}_X^\star$ and $D_S^\star$ are defined in Def. \ref{dummy}.
\end{corollary}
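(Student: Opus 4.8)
The plan is to read off from the explicit form of $R^\mathrm{C}_\mathrm{out}(D_S,\bm{D}_X)$ in Eq.~\eqref{Cout} precisely how the two distortion budgets enter the minimization, and then exploit the structural asymmetry between them. The semantic budget $D_S$ appears \emph{only} through the cap $\beta(\bm{\Gamma})=\min\{D_S,\,1+\log_2(M-1)p_e(\bm{\Gamma})\}$ of Eq.~\eqref{26b}, whereas the observation budget $\bm{D}_X$ enters through the feasible set via the diagonal constraints $\bm{e}_i^T\bm{\Gamma}\bm{e}_i\le D_{X_i}$; the matrix $\bm{\Gamma}$ itself then drives \emph{both} the Fano term $p_e(\bm{\Gamma})$ of Eq.~\eqref{28} (non-decreasing in $\mathrm{tr}\{\bm{\Gamma}\}$) and the rate term $\frac{1}{2}\log_2\frac{\det(\tb{K}_X)}{\det(\bm{\Gamma})}$ (strictly decreasing as $\bm{\Gamma}$ grows in the PSD order). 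I would first record two monotonicities: $R^\mathrm{C}_\mathrm{out}$ is non-increasing in $D_S$ (relaxing the cap) and non-increasing in each $D_{X_i}$ (enlarging the feasible set). The whole corollary then reduces to deciding, in each case, whether these monotonicities are strict. For bookkeeping I let $g(t)$ denote the inner $\alpha$-minimum in Eq.~\eqref{28} with $\mathrm{tr}\{\bm{\Gamma}\}$ replaced by $t$, so that $p_e(\bm{\Gamma})=\min\{1,g(\mathrm{tr}\{\bm{\Gamma}\})\}$ with $g$ non-decreasing and $p_e(\bm{\Gamma})\ge p_e(\tb{O})=\min\{1,g(0)\}$.

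For claim 1 (existence of $\bm{D}_X^\star$), the key point is that $p_e$ is uniformly controlled by the observation budget: any feasible $\bm{\Gamma}$ satisfies $\mathrm{tr}\{\bm{\Gamma}\}=\sum_i\bm{e}_i^T\bm{\Gamma}\bm{e}_i\le\sum_iD_{X_i}$, hence $p_e(\bm{\Gamma})\le g\!\left(\sum_iD_{X_i}\right)$. I would therefore choose $\bm{D}_X^\star$ with $\sum_iD_{X_i}^\star$ small enough that $1+\log_2(M-1)\,g\!\left(\sum_iD_{X_i}^\star\right)<D_S$. Then for every $\bm{D}_X\prec\bm{D}_X^\star$ and every feasible $\bm{\Gamma}$ we get $1+\log_2(M-1)p_e(\bm{\Gamma})<D_S$, so $\beta(\bm{\Gamma})=1+\log_2(M-1)p_e(\bm{\Gamma})$ carries no dependence on $D_S$; the objective in Eq.~\eqref{Cou} is independent of $D_S$ on a whole neighborhood, and raising $D_S$ by any $\Delta_S$ smaller than the slack leaves $R^\mathrm{C}_\mathrm{out}$ unchanged, which is exactly the inactivity of Def.~\ref{dummy}. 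This construction is available whenever the operating semantic level exceeds the floor $1+\log_2(M-1)p_e(\tb{O})$, which is the non-trivial regime illustrated in Example~\ref{Ex1}.

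For claim 2 (non-existence of $D_S^\star$) I would show the opposite strictness persists however small $D_S$ is. Since every feasible $\bm{\Gamma}$ obeys $p_e(\bm{\Gamma})\ge p_e(\tb{O})$, for any $D_S\le 1+\log_2(M-1)p_e(\tb{O})$ the cap binds for \emph{all} feasible $\bm{\Gamma}$, i.e. $\beta(\bm{\Gamma})\equiv D_S$ is constant; Fano's inequality gives $H(S|\bm{X})\le 1+\log_2(M-1)p_e(\tb{O})$, so this sub-interval of $D_S$ lies inside the admissible range $[H(S|\bm{X}),H(\bm{\omega})]$. On it the objective collapses to $H(\bm{\omega})-D_S+\frac{1}{2}\log_2\frac{\det(\tb{K}_X)}{\det(\bm{\Gamma})}$, whose minimization is pure determinant maximization of $\bm{\Gamma}$ under the diagonal budget. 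In the high-resolution interior the constraints $\bm{e}_i^T\bm{\Gamma}\bm{e}_i=D_{X_i}$ bind, and replacing $\bm{D}_X$ by $\bm{D}_X+\bm{\Delta}_X$ with $\bm{0}\prec\bm{\Delta}_X$ strictly enlarges the attainable $\det(\bm{\Gamma})$, so $R^\mathrm{C}_\mathrm{out}(D_S,\bm{D}_X+\bm{\Delta}_X)<R^\mathrm{C}_\mathrm{out}(D_S,\bm{D}_X)$. Thus for every candidate $D_S^\star$ there is an admissible $D_S<D_S^\star$ at which the observation constraint is strictly active, contradicting Def.~\ref{dummy}; hence no $D_S^\star$ exists.

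The main obstacle I anticipate is the bookkeeping around the non-smooth $\min$ defining $\beta$: I must certify which of its two arguments binds \emph{uniformly over the entire feasible set of $\bm{\Gamma}$}, not merely at the optimizer, because raising $D_S$ (claim 1) or relaxing $\bm{D}_X$ (claim 2) could a priori shift the optimizer to a different $\bm{\Gamma}$. Isolating the clean regime $\beta\equiv D_S$ via the lower bound $p_e(\bm{\Gamma})\ge p_e(\tb{O})$ and the monotonicity of $g(\cdot)$ is what makes the determinant-maximization argument of claim 2 rigorous, and the matching Fano upper bound is what places the relevant $D_S$ interval inside the admissible range. The remaining technical point is upgrading the determinant comparison under $\bm{0}\prec\bm{\Delta}_X$ from weak monotonicity to \emph{strict} decrease, which I would settle by observing that the diagonal constraints are the active ones in the interior high-resolution regime.
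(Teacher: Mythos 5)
Your proposal is correct and follows essentially the same route as the paper's Appendix D: for claim 1 you use the trace bound $\mathrm{tr}\{\bm{\Gamma}\}\le\sum_i D_{X_i}$ to force the Fano term to dominate the $\min$ defining $\beta$ uniformly over the feasible set, so the bound becomes independent of $D_S$; for claim 2 you restrict to small $D_S$ where $\beta\equiv D_S$, reduce to determinant maximization via Hadamard, and invoke strict monotonicity in $\bm{D}_X$ to derive the contradiction. Your version is marginally more careful than the paper's in certifying which branch of the $\min$ binds over the \emph{whole} feasible set (via $p_e(\bm{\Gamma})\ge p_e(\tb{O})$ and monotonicity of $g$) rather than only at the single point $d_S^\star=H(S|\bm{X})$, but the underlying argument is the same.
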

\begin{proof}
	We proof the Corollary by reduction to absurdity. See Appendix \ref{proof_contradiction}.
\end{proof}
Corollary \ref{Contradiction} reveals the fact that the observation distortion constraint is always active no matter how demanding the semantic constraint is, for Gaussian mixtures. This fact is different to the RDF in which semantic and observed sources are joint Gaussian distributed, which was discussed in \cite[Sec.~\RNum{5}]{Liu_Shao_Zhang_Poor_2022} and \cite[Exp.~2]{Stavrou2022}. From the former \cite[Tab.~1]{Liu_Shao_Zhang_Poor_2022}, one can find the authors divided the activity of constraints into four cases, namely both semantic and observation distortions having choices of activeness and inactiveness. Nevertheless, this corollary prove that there are only two cases for our outer bound, since $D_X$ is always active. This phenomenon  can be attributed to the linearity between semantic and observation sources of Gaussian distributions with MSE loss, yielding a bijective relation of two distortions in that case, while there exists only injection from observation to semantic distortions under discrete-continuous-sources mixed cases. In the sequel, we shall establish a detailed characterization for simplifying statements of our outer bound, according to the above analysis.
\begin{figure}[t]
	\centering
	\includegraphics[width=0.75\textwidth]{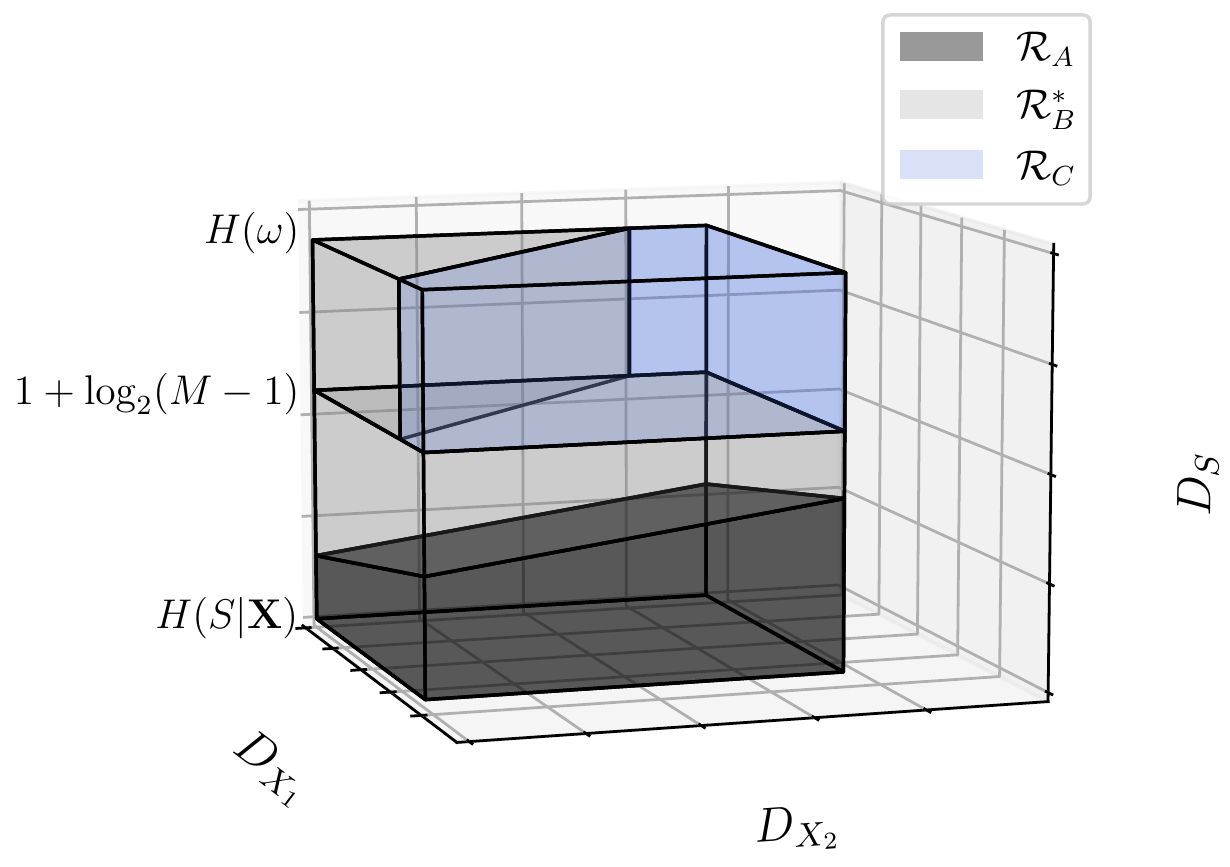}
	\caption{Characterization of distortion regions under the toy example in Exp. \ref{Ex1}.}
	\label{Fig4}
\end{figure}
\begin{corollary}\label{Corollary_region}
	Given the outer bound Eq. \eqref{Cout}, with $H(S|\bm{X})\leq D_S\leq H(\bm{\omega})$ and $0\leq D_{X_i}\leq \bm{e}_i^T\tb{K}_X\bm{e}_i$ for $i=1,2,\cdots,L$, and let
	\begin{align}
		\alpha^*=\mathrm{argmin}_{0<\alpha<\frac{1}{2}}\left\{2Q\left(\frac{L\alpha}{\sqrt{\mathrm{tr}\{\tb{K}_X\}}}\right)+\frac{1}{L}\left(\frac{1}{2}-\alpha\right)^{-2}\mathrm{tr}\{\bm{\Gamma}\}\right\},
	\end{align}
the assumption of $\bm{\Gamma}=\mathrm{diag}\{\gamma_1,\cdots,\gamma_L\}$ will not lose the generality. Moreover, we introduce the following three regions, namely
	\begin{align}
	\mathcal{R}_{A}=\Bigg\{(D_S,\bm{D}_X):D_S\leq 1+\log_2(M-1),D_S\leq1+\log_2(M-1)p_e\left(\sum_{i=1}^LD_{X_i}\right)\Bigg\}.
\end{align}
\begin{align}
	&\mathcal{R}_B^*=\Bigg\{(D_S,\bm{D}_X):D_S\leq 1+\log_2(M-1),D_S\geq1+\log_2(M-1)p_e\left(\sum_{i=1}^LD_{X_i}\right)\Bigg\}\notag\\
	&\hspace{4cm}\bigcup\Bigg\{(D_S,\bm{D}_X):D_S\geq 1+\log_2(M-1),p_e\left(\sum_{i=1}^LD_{X_i}\right)\geq1\Bigg\}.\\
	&\mathcal{R}_C=\Bigg\{(D_S,\bm{D}_X):D_S\geq 1+\log_2(M-1),p_e\left(\sum_{i=1}^LD_{X_i}\right)\geq1\Bigg\},
\end{align}
where 
\begin{align}
	p_e\left(T\right) &= 2Q\left(\frac{L\alpha^*}{\sqrt{\mathrm{tr}\{\tb{K}_X\}}}\right)+\frac{1}{L}\left(\frac{1}{2}-\alpha^*\right)^{-2}T.\label{37}
\end{align}
The characterization of specific statements in terms of different regions can be formulated as
	\begin{enumerate}[a)]
		\item when $(D_S,\bm{D}_X)\in\mathcal{R}_A$,
		\begin{align}
			R^\mathrm{C}_{\mathrm{out}}(D_S,\bm{D}_X)=H(\bm{\omega})-D_S+\frac{1}{2}\log\frac{\det\{\tb{K}_X\}}{\prod_{i=1}^LD_{X_i}}.
		\end{align}
		\item when $(D_S,\bm{D}_X)\in\mathcal{R}_B^*$,
	\begin{align}
		R^\mathrm{C}_{\mathrm{out}}(D_S,\bm{D}_X)=H(\bm{\omega})-\left(1+\log_2(M-1)\right)p_e\left(\sum_{i=1}^LD_{X_i}\right)+\frac{1}{2}\log\frac{\det\{\tb{K}_X\}}{\prod_{i=1}^LD_{X_i}}.
	\end{align}
		\item when $(D_S,\bm{D}_X)\in\mathcal{R}_C$,
	\begin{align}
		R^\mathrm{C}_{\mathrm{out}}(D_S,\bm{D}_X)= H(\bm{\omega})-(1+\log_2(M-1))+\frac{1}{2}\log\det\{\tb{K}_X\}-\frac{L}{2}\log p_e^{-1}\left(1\right).\label{}
	\end{align}
			\end{enumerate}
		\end{corollary}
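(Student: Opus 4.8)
The plan is to solve the matrix program in Eq. \eqref{Cout} in two stages: first collapse it to a scalar reverse-water-filling problem by showing the minimizer may be taken diagonal, and then perform a case analysis on which of the nested constraints defining $\beta(\bm{\Gamma})$ is active, since that is precisely what carves out $\mathcal{R}_A$, $\mathcal{R}_B^*$, and $\mathcal{R}_C$. The structural observation driving everything is that the objective in Eq. \eqref{Cou} separates into a piece $-\beta(\bm{\Gamma})$ that sees $\bm{\Gamma}$ \emph{only} through $\mathrm{tr}\{\bm{\Gamma}\}$ (because $p_e$ in Eq. \eqref{28} depends on $\bm{\Gamma}$ solely via its trace) and a piece $\tfrac{1}{2}\log_2\frac{\det(\tb{K}_X)}{\det(\bm{\Gamma})}$ that depends on $\det(\bm{\Gamma})$.

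For the diagonal reduction I would take any feasible $\bm{\Gamma}$ and replace it by its diagonal part $\tilde{\bm{\Gamma}}=\mathrm{diag}\{\Gamma_{11},\dots,\Gamma_{LL}\}$. This leaves $\mathrm{tr}\{\bm{\Gamma}\}$ (hence $\beta$) and the per-coordinate constraints $\bm{e}_i^T\bm{\Gamma}\bm{e}_i\le D_{X_i}$ unchanged, preserves $\tb{O}\preceq\tilde{\bm{\Gamma}}$, and by Hadamard's inequality gives $\det(\tilde{\bm{\Gamma}})=\prod_i\Gamma_{ii}\ge\det(\bm{\Gamma})$, which can only lower the log-det term. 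Thus a diagonal $\bm{\Gamma}=\mathrm{diag}\{\gamma_1,\dots,\gamma_L\}$ attains the minimum, establishing the ``without loss of generality'' claim; the only point needing care is the semidefinite constraint $\tilde{\bm{\Gamma}}\preceq\tb{K}_X$, which is immediate for the diagonal $\tb{K}_X$ of Example \ref{Ex1} and in general can be arranged by working in the eigenbasis of $\tb{K}_X$.

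With $\bm{\Gamma}=\mathrm{diag}\{\gamma_1,\dots,\gamma_L\}$ the program is $\min_{0\le\gamma_i\le D_{X_i}}\{-\beta(T)+\tfrac{1}{2}\log_2\frac{\det(\tb{K}_X)}{\prod_i\gamma_i}\}$ with $T=\sum_i\gamma_i$ and $\beta(T)=\min\{D_S,\,1+\log_2(M-1)p_e(T)\}$, where $p_e$ is non-decreasing (indeed concave, being a min over $\alpha$ of affine functions of $T$). Differentiating gives $\partial_{\gamma_i}=-\beta'(T)-\tfrac{1}{2\gamma_i\ln 2}<0$ wherever $\beta$ has not saturated, so the objective strictly decreases in each $\gamma_i$ and the box constraint $\gamma_i=D_{X_i}$ is optimal. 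Substituting $\gamma_i=D_{X_i}$ and reading off the active branch of the inner $\min$ in $\beta$ yields the first two regions: when $D_S\le 1+\log_2(M-1)$ and $D_S\le 1+\log_2(M-1)p_e(\sum_iD_{X_i})$ one has $\beta=D_S$, which is $\mathcal{R}_A$; when instead $1+\log_2(M-1)p_e(\sum_iD_{X_i})\le D_S$ with $p_e<1$ one has $\beta=1+\log_2(M-1)p_e(\sum_iD_{X_i})$, which is $\mathcal{R}_B^*$. Both substitutions give $\det(\bm{\Gamma})=\prod_iD_{X_i}$ and reproduce the stated closed forms.

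The delicate case, and the step I expect to be the main obstacle, is $\mathcal{R}_C$, where $p_e(\sum_iD_{X_i})\ge 1$. Here the monotonicity argument that pinned $\gamma_i=D_{X_i}$ breaks down: once $p_e$ reaches its Fano/hypothesis-testing ceiling, $\beta'(T)=0$, yet the still-decreasing log-det term would push $T$ past the point where the surrogate $p_e$ remains a valid ($\le 1$) probability. The correct reading is that the validity constraint $p_e\le 1$, equivalently $\mathrm{tr}\{\bm{\Gamma}\}\le p_e^{-1}(1)$ with $p_e^{-1}$ the inverse of Eq. \eqref{37}, becomes the binding constraint before the box constraints, so $T$ is pinned at $p_e^{-1}(1)$ and $\beta$ saturates at $1+\log_2(M-1)$. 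Minimizing the log-det term then amounts to maximizing $\prod_i\gamma_i$ subject to $\sum_i\gamma_i=p_e^{-1}(1)$, and by AM--GM the optimum is the equal allocation $\gamma_i=p_e^{-1}(1)/L$, which upon substitution gives the third closed form. I would close by checking continuity of the piecewise expression across the shared boundaries $D_S=1+\log_2(M-1)p_e(\sum_iD_{X_i})$ and $p_e(\sum_iD_{X_i})=1$. Correctly identifying that the active constraint in $\mathcal{R}_C$ is the trace cap rather than a corner of the box, together with the non-smoothness induced by the two nested $\min$ operations in $\beta$ and $p_e$ (which I would handle by direct value comparison on each region rather than by gradient conditions at the kinks), is where the real work lies.
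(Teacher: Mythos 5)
Your overall strategy---a Hadamard-based diagonal reduction followed by a case analysis on which branch of the nested $\min$ operations in $\beta(\bm{\Gamma})$ and $p_e(\bm{\Gamma})$ is active---is exactly the route the paper takes in Appendix E (its Lemma \ref{diagonal} is your diagonal step, and its subregions $\mathcal{R}_A,\mathcal{R}_B,\mathcal{R}_C,\mathcal{R}_D$ are your branch analysis). For $\mathcal{R}_A$ and $\mathcal{R}_B^*$ your argument is correct and in fact cleaner than the paper's: you observe that on every branch the objective is strictly decreasing in each $\gamma_i$, so the box corner $\gamma_i=D_{X_i}$ is optimal, whereas the paper splits each region on the sign of $D_S-1-\log_2(M-1)p_e(\sum_i\gamma_i)$, solves both sub-problems, and compares the resulting values. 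Both yield the same closed forms for parts a) and b). Your remark that $\mathrm{diag}\{\Gamma_{11},\dots,\Gamma_{LL}\}\preceq\tb{K}_X$ needs an argument when $\tb{K}_X$ is not diagonal is a point the paper glosses over.

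The genuine gap is in $\mathcal{R}_C$. You assert that the ``validity constraint'' $p_e\le 1$ pins $\mathrm{tr}\{\bm{\Gamma}\}$ at $p_e^{-1}(1)$ before the box constraints bind, but Eq. \eqref{28} already defines $p_e$ with an outer $\min\{1,\cdot\}$, so there is no feasibility constraint of the form $\sum_i\gamma_i\le p_e^{-1}(1)$: once the affine surrogate exceeds $1$, $\beta$ simply saturates at $1+\log_2(M-1)$ and becomes constant in $\bm{\Gamma}$, while the log-det term continues to decrease in each $\gamma_i$. The box corner $\gamma_i=D_{X_i}$ is feasible throughout $\mathcal{R}_C$ (that region is defined by $p_e\left(\sum_iD_{X_i}\right)\ge 1$) and gives $H(\bm{\omega})-\left(1+\log_2(M-1)\right)+\tfrac12\log\left(\det\{\tb{K}_X\}/\prod_iD_{X_i}\right)$, which is no larger than the equal-allocation value your AM--GM step produces; the paper's own appendix performs exactly this comparison in its $\mathcal{R}_C$ case and concludes the minimizer is the box corner, not the equal split on the trace shell. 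So your AM--GM computation solves the wrong sub-problem: the trace cap is never the binding constraint there. You do land on the formula printed in part c) of the corollary, but that formula is itself inconsistent with the appendix's concluding comparison, and in any event your justification for selecting it does not go through as an argument about the minimization.
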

		\begin{proof}
			See Appendix \ref{proof_corollary_classification}.
		\end{proof}
		Corollary \ref{Corollary_region} provides a full depiction of our outer bound in terms of feasible regions w.r.t. distortions regions. This depiction is based on the assumption that an diagonal matrix $\bm{\Gamma}$ is optimal, which we will prove it does not lose generality later. The proof relies on Corollary \ref{Contradiction} which states that our outer bound can be merely divided into two cases, namely semantic distortion is active or not, in which each case can be divided into high/low resolution observations, respectively. Moreover, to obtain the complete characterization, we further reduce outer bound into standard optimization problems in terms of different distortion regions $(D_S,\bm{D}_X)$ with sophisticated discussions. A simple illustration of feasible regions of the toy example for distortion tuples is provided in Fig. \ref{Fig4}.
		
		As a nature idea, we wish to obtain a tight inner bound to approach outer bound. Unfortunately, we cannot obtain a Berger-Tung inner bound according to Prop. \ref{Theorem_label_inner_bound} in conditional Gaussian semantic-aware MT problem, like its counterpart with joint Gaussian sources. This is owing to the fact that the finite and infinite alphabets of our concerned sources $S$ and $\bm{X}$, which prevents us from constructing an optimal codebook via random binning and hardly conducts the performance analysis. However, motivated by Yang and Xiong \cite{Yang_Xiong_2012}, it is possible for us to design a practical coding scheme via linear block code and quantizers with sub-optimal rate-distortion performance.
		\section{Practical Coding Scheme for MML Framework with Gaussian Mixture Sources}\label{Sec5}
		As mentioned before, Berger-Tung inner bound \cite{Berger1978} based on random binning is not optimal for Gaussian mixture sources, since the codebook cannot compatible with the mixture of continuous and discrete random variables. In this section, we first provide an achievable inner bound based on the idea to decouple the semantic-aware MT problem equipped with Gaussian mixture sources. Then, a corresponding practical coding design is presented to verify its feasibility. Moreover, numerical comparison between inner and outer bounds are considered, in combination with some simulations.
		
		\subsection{An Achievable Inner Bound with Gaussian Mixture Sources}\label{5.1}
		In this section, we provide asymptotic analysis on RDF of an achievable coding design, which relies on the feasible modules instead of theoretical tools like type of method. For simplicity, we make some specific assumption and formulate the inner bound as following:
		\begin{theorem}\label{Theorem_label_coding_scheme}
			Let $R^\star(D_S,\bm{D}_X)$ denote the theoretical sum-rate distortion bound of the achievable coding scheme. Moreover, by fixing $\epsilon>0$, $L=2$, $M=2$, i.e. $\omega_1=\omega_2=\frac{1}{2}$, and given the distribution of $S$ and $\bm{X}$, we have
\begin{subequations}
				\begin{align}
		R^\star(D_S,\bm{D}_X)&=\min_{d_1,d_2} \left\{1+H_2(\mathscr{P}*\mathscr{D})-\sum_{i=1}^2H_2(d_i)
		+\frac{1}{2}\sum_{i=1}^2\log_2\frac{\sigma_{X_i}^2}{D_{X_i}}\right\}+\epsilon,\label{Cinn}\\
		&\hspace{0.8cm}\text{s.t. }\sum_{i=1}^2H_2(p_i*d_i)-H_2(\mathscr{P}*\mathscr{D})\leq D_S,
	\end{align}
\end{subequations}
			where 
			\begin{align}
				\mathscr{P} = p_1*p_2\notag,\quad\mathscr{D} = d_1*d_2,\quad p_i = Q\left(1/\sigma_{X_i}\right) \text{ for } i=1,2.\notag
			\end{align}
		\end{theorem}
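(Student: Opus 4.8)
The plan is to prove \eqref{Cinn} as an \emph{achievability} (inner bound) result by constructing the explicit ``detect and compress'' architecture and accounting for its semantic and observation rate expenditures separately. For each agent $i\in\{1,2\}$ I would split the encoder into three stages: a clustering stage that produces a binary detection $T_i$ of the semantic label $S$ from the continuous observation $X_i$; a compression stage that quantizes the within-cluster part of $X_i$; and a Slepian--Wolf binning stage applied to the detection labels. Identifying $\mathcal{S}=\{1,2\}$ with $\{0,1\}$ and writing $\oplus$ for modulo-$2$ addition, the clustering rule has error probability $p_i=Q(1/\sigma_{X_i})$ as specified, so that $T_i=S\oplus N_i$ with $N_i\sim\mathrm{Bern}(p_i)$ independent across $i$. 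The crucial structural fact that makes the two sub-problems tractable is that, conditioned on a correct cluster decision, each observation is a \emph{single} Gaussian of variance $\sigma_{X_i}^2$, which removes the mixture character.

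First I would treat the semantic part as a two-terminal binary CEO problem under logarithmic loss, exactly the degeneration identified in Corollary~\ref{corollary_degeneration}. Introducing a binary test channel $U_i=T_i\oplus Z_i$ with $Z_i\sim\mathrm{Bern}(d_i)$, where $d_1,d_2$ are the free design parameters, the Berger--Tung inner bound of Prop.~\ref{Theorem_label_inner_bound} gives a semantic sum rate $I(\bm{T};\bm{U})=H(\bm{U})-H(\bm{U}\mid\bm{T})$. The key computation is $H(U_1,U_2)=1+H_2(\mathscr{P}*\mathscr{D})$, which I would obtain by noting that $U_1$ is uniform and that $U_1\oplus U_2=N_1\oplus Z_1\oplus N_2\oplus Z_2\sim\mathrm{Bern}(\mathscr{P}*\mathscr{D})$ is independent of $U_1$ (since the uniform $S$ masks the noise $N_1\oplus Z_1$). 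Combined with $H(\bm{U}\mid\bm{T})=H_2(d_1)+H_2(d_2)$ this yields the semantic rate $1+H_2(\mathscr{P}*\mathscr{D})-\sum_i H_2(d_i)$. The decoder then reproduces $S$ through the posterior $P(S\mid U_1,U_2)$, whose logarithmic loss equals $H(S\mid U_1,U_2)=\sum_i H_2(p_i*d_i)-H_2(\mathscr{P}*\mathscr{D})$, precisely the constraint imposed in the statement.

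Next I would handle the observation part: conditioned on the detected cluster, $X_i$ is Gaussian with variance $\sigma_{X_i}^2$, so a standard Gaussian rate--distortion code for the within-cluster component costs $\tfrac12\log_2(\sigma_{X_i}^2/D_{X_i})$ and meets the MSE target $D_{X_i}$, producing the second group of terms in \eqref{Cinn} after summing over $i$. Because the cluster label is already transmitted for the semantic reconstruction, it simultaneously localizes the Gaussian used for the observation reconstruction, so no rate is double-counted and the two sub-rates simply add. Minimizing the total over the test-channel parameters $d_1,d_2$ subject to the semantic constraint then gives the claimed expression, with the additive $\epsilon$ absorbing the vanishing Slepian--Wolf binning overhead and the residual excess distortion discussed below.

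The hard part will be the rigorous decoupling of the two sub-problems in the presence of detection errors. Since $p_i$ is fixed rather than vanishing, a misclassification shifts the centering of the observation quantizer by one cluster spacing and therefore injects an MSE increment of order $(p_i*d_i)$ that is \emph{not} automatically negligible; I must show that this increment, together with the corresponding perturbation of the logarithmic loss of $\hat{S}$, can be driven to $O(\epsilon)$ so that the conditional-Gaussian rate $\tfrac12\log_2(\sigma_{X_i}^2/D_{X_i})$ and the binary-CEO rate remain \emph{simultaneously} achievable. This is where the high-signal-to-noise regime enters: as $\sigma_{X_i}\to 0$ the label errors vanish and the separation becomes exact, which is consistent with the vanishing gap claimed earlier. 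Concretely, the argument requires a joint-typicality and binning analysis for the binary labels coupled with a bounded-distortion estimate for the Gaussian quantizer, after which the clean additive form of \eqref{Cinn} follows.
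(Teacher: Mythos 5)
Your proposal follows essentially the same route as the paper's proof: split each codeword into a semantic-label part and a Gaussian-quantization part, evaluate the label part as a two-terminal binary CEO problem under logarithmic loss with test channels of crossover $d_i$ (yielding $1+H_2(\mathscr{P}*\mathscr{D})-\sum_i H_2(d_i)$ and the constraint $\sum_i H_2(p_i*d_i)-H_2(\mathscr{P}*\mathscr{D})\leq D_S$), evaluate the observation part as conditional Gaussian rate--distortion at $\tfrac{1}{2}\log_2(\sigma_{X_i}^2/D_{X_i})$, and absorb the residuals into $\epsilon$. The decoupling difficulty you flag at the end is exactly the point the paper addresses via the gap term $\epsilon^{\Delta}_i=I(X_i;C_{i,2}\mid S)-I(X_i;C_{i,2}\mid S_i)$, which it drives to zero by constructing a quantization codebook whose size is invariant to the realized label, so your plan and the paper's argument coincide in substance.
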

		\begin{proof}
			See Appendix \ref{proof_coding_scheme}.
		\end{proof}
		This inner bound is obtained by a "detect and compress" idea, which means the observed Gaussian mixture is first clustered and then compressed. The proof starts from the codebook splitting at each agents, in which the first part of codebook transmits the semantic label while the second part compresses the Gaussian observations. At the receiver, the unique decoder reconstructs the Gaussian signals according to the received labels. The scheme separates the system model into two problems: a "one semantic source $+L$ agents" CEO problem, and $L$ "approximating independent" Gaussian sources compression. Note that, the "approximating independent" means that the correlation of $\bm{X}$ after clustering though still exists, it can be eliminated by applying distinguished sub-codebooks for different Gaussian components, which will be shown in Appendix \ref{proof_coding_scheme}. Moreover, as outer bound, a similar conclusion on the activeness of distortions for inner bound is observed, i.e. for the Eq. \eqref{Cinn}, i.e. observation distortion can not necessarily be limited when semantic recovery is satisfied.
		\begin{corollary}
			Under the framework depicted in Thm. \ref{Theorem_label_coding_scheme}, if $0\leq D_S\leq H(\bm{\omega})$ and $0\leq D_{X_i}\leq \bm{e}_i^T\tb{K}_X\bm{e}_i$ for $i=1,2$, 
			then for rate-distortion bound $R^\star(D_S,\bm{D}_X)$,
			\begin{enumerate}
				\item $\bm{D}_X^\star$ exists,
				\item $D_S^\star$ does not exist.
			\end{enumerate}
		\end{corollary}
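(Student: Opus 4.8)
The plan is to exploit the separable structure of the achievable sum-rate in Theorem~\ref{Theorem_label_coding_scheme}. Writing $h(d_1,d_2)\triangleq 1+H_2(\mathscr{P}*\mathscr{D})-\sum_{i=1}^2 H_2(d_i)$ and $g(d_1,d_2)\triangleq\sum_{i=1}^2 H_2(p_i*d_i)-H_2(\mathscr{P}*\mathscr{D})$, the bound decouples as
\begin{equation}
	R^\star(D_S,\bm{D}_X)=\Phi(D_S)+\frac12\sum_{i=1}^2\log_2\frac{\sigma_{X_i}^2}{D_{X_i}}+\epsilon,\qquad \Phi(D_S)\triangleq\min_{\substack{d_1,d_2\in[0,1/2]\\ g(d_1,d_2)\le D_S}} h(d_1,d_2),
\end{equation}
so that $\bm{D}_X$ enters only through the additive, strictly decreasing term while $D_S$ enters only through $\Phi$. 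Both assertions then reduce to monotonicity/saturation statements about these two decoupled pieces, and I would argue each by reduction to absurdity, mirroring the proof of Corollary~\ref{Contradiction}.

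For the second assertion I show the observation constraint can never be made dummy. Fix any admissible $\bm{D}_X$ with all $D_{X_i}<\sigma_{X_i}^2$ and suppose, toward a contradiction, that some $D_S^\star$ exists as in Definition~\ref{dummy}. Choosing any $D_S<D_S^\star$ yields a vector $\bm{0}\prec\bm{\Delta}_X$ with $R^\star(D_S,\bm{D}_X+\bm{\Delta}_X)=R^\star(D_S,\bm{D}_X)$. But $\bm{0}\prec\bm{\Delta}_X$ forces at least one coordinate of $\bm{D}_X$ to increase strictly, and since $\Phi(D_S)$ is untouched and $\frac12\log_2(\sigma_{X_i}^2/D_{X_i})$ is strictly decreasing in that coordinate, we obtain $R^\star(D_S,\bm{D}_X+\bm{\Delta}_X)<R^\star(D_S,\bm{D}_X)$, a contradiction. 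Hence $D_S^\star$ does not exist.

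For the first assertion I show that $\Phi$ saturates at the top of the semantic range, so relaxing $D_S$ there leaves the rate unchanged regardless of $\bm{D}_X$. I would first locate the unconstrained minimizer of $h$ over $[0,1/2]^2$. Differentiating, for $\{i,\bar{\imath}\}=\{1,2\}$,
\begin{equation}
	\frac{\partial h}{\partial d_i}=\log_2\frac{1-\mathscr{P}*\mathscr{D}}{\mathscr{P}*\mathscr{D}}\,(1-2\mathscr{P})(1-2d_{\bar{\imath}})-\log_2\frac{1-d_i}{d_i},
\end{equation}
which is negative as $d_i\to0$ and vanishes at $d_i=\tfrac12$; establishing that it stays nonpositive on $[0,\tfrac12]$ shows $h$ is minimized at the maximal-noise corner $d_1=d_2=\tfrac12$, where $\mathscr{D}=\tfrac12$, hence $\mathscr{P}*\mathscr{D}=\tfrac12$ and $h=0$. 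Evaluating the constraint there gives $g(\tfrac12,\tfrac12)=2H_2(\tfrac12)-H_2(\tfrac12)=1=H(\bm{\omega})$. Thus for every $D_S\ge H(\bm{\omega})$ the corner is feasible, $\Phi$ is constant, and any $\Delta_S>0$ satisfies $R^\star(D_S+\Delta_S,\bm{D}_X)=R^\star(D_S,\bm{D}_X)$ for all $\bm{D}_X$; choosing $\bm{D}_X^\star=(\sigma_{X_1}^2,\sigma_{X_2}^2)$ then verifies the first item of Definition~\ref{dummy}.

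The main obstacle is the saturation analysis of $\Phi$, namely proving that the detection-noise variables $d_i$ should always be driven to their maximal value $\tfrac12$. The difficulty is that $H_2(\mathscr{P}*\mathscr{D})$ and the penalty terms $H_2(d_i)$ both increase with $d_i$, so their net effect on $h$ is not obviously monotone; I expect to close this either by a direct sign analysis of the displayed derivative on $[0,\tfrac12]$ or by invoking Mrs.~Gerber's Lemma on the binary-convolution chain underlying the CEO component. The accompanying delicate point is confirming that the saturation distortion equals $H(\bm{\omega})$ exactly, so that the flat region of $\Phi$ meets the upper limit of the admissible interval and a genuine relaxation direction $\Delta_S$ remains available.
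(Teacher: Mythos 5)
Your argument for item 2) is exactly the paper's: the paper's entire proof is the one\hyp{}line observation that $R^\star(D_S,\bm{D}_X)$ is monotonically decreasing in $D_{X_1}$ and $D_{X_2}$, so no $D_S^\star$ can exist; your explicit decomposition $R^\star=\Phi(D_S)+\tfrac12\sum_i\log_2(\sigma_{X_i}^2/D_{X_i})+\epsilon$ just makes that monotonicity transparent, and that part is correct. Also, the step you flag as ``the main obstacle'' (locating the unconstrained minimizer of $h$ at $d_1=d_2=\tfrac12$) needs no derivative sign analysis: since binary convolution moves the parameter toward $\tfrac12$, $H_2(\mathscr{P}*\mathscr{D})\geq H_2(d_1*d_2)\geq\max_i H_2(d_i)$, hence $h\geq 1-\min_i H_2(d_i)\geq 0$ with equality iff $d_1=d_2=\tfrac12$, and there $g(\tfrac12,\tfrac12)=1=H(\bm{\omega})$ as you computed.

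The genuine gap is in item 1), and your own (correct) decomposition is what exposes it. In the inner bound \eqref{Cinn} the variables $(d_1,d_2)$ and $(D_{X_1},D_{X_2})$ are fully decoupled, so whether the semantic constraint is slack depends \emph{only} on $D_S$ and not at all on $\bm{D}_X$; consequently the quantity $\bm{D}_X^\star$ you propose plays no role in the argument. Worse, your computation shows the flat region of $\Phi$ begins exactly at $D_S=H(\bm{\omega})$, the upper endpoint of the admissible interval, while for every $D_S<H(\bm{\omega})$ the constraint $g(d_1,d_2)\leq D_S$ is binding (the unique unconstrained minimizer has $g=1$), so $\Phi$ is strictly decreasing there and \emph{no} $\Delta_S>0$ satisfies $R^\star(D_S+\Delta_S,\bm{D}_X)=R^\star(D_S,\bm{D}_X)$ for \emph{any} choice of $\bm{D}_X$. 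Thus the witness $\bm{D}_X^\star=(\sigma_{X_1}^2,\sigma_{X_2}^2)$ verifies Definition~\ref{dummy} only at the degenerate boundary point $D_S=H(\bm{\omega})$, which is not what ``the semantic constraint becomes inactive once the observations are reconstructed finely enough'' is meant to capture (contrast the outer bound, where $\beta(\bm{\Gamma})=\min\{D_S,1+\log_2(M-1)p_e(\bm{\Gamma})\}$ couples the two and small $\bm{D}_X$ genuinely deactivates $D_S$). To be fair, the paper's own proof is silent on item 1) and only argues item 2), so this is a place where either additional structure of Theorem~\ref{Theorem_label_coding_scheme} or a reinterpretation of Definition~\ref{dummy} is needed; but as written your proposal does not establish the first assertion for $D_S$ strictly inside the admissible range.
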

		\begin{proof}
			This corollary is a direct conclusion according to definition \ref{dummy} since $R^\star(D_S,\bm{D}_X)$ is monotonically decreasing in terms of $D_{X_1}$ and $D_{X_2}$, yielding no such $D_S< D_S^\star$ satisfying $R^\star(D_S,\bm{D}_X+ \bm{\Delta}_X)=R^\star(D_S,\bm{D}_X).$
		\end{proof}
		\subsection{Corresponding Coding Scheme to Inner Bound}\label{5.2}
		\begin{figure}[htbp]
			\centering
			\includegraphics[width=1\textwidth]{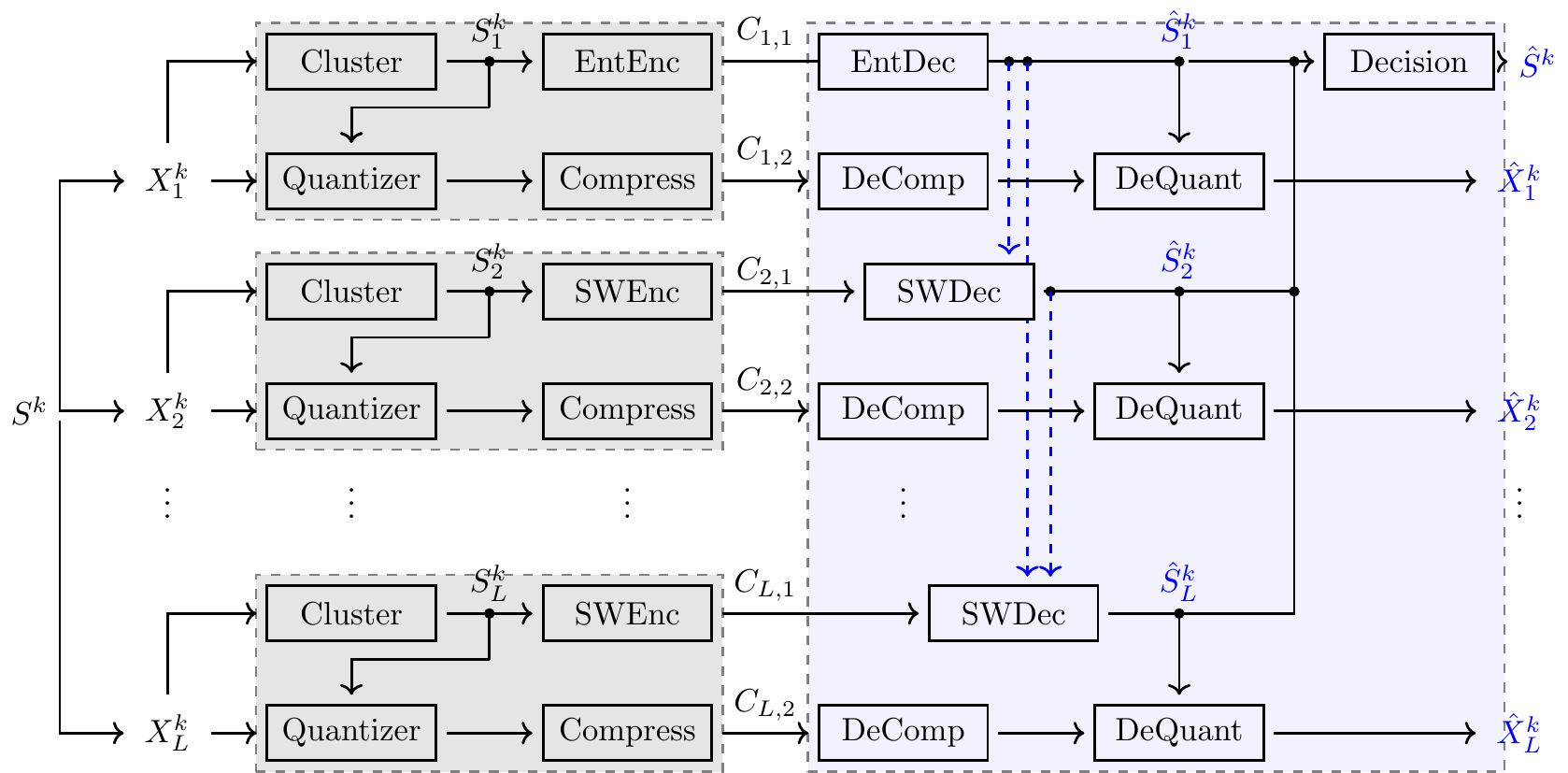}
			\caption{A practical coding design of MML Framework with Conditional Gaussian Distributed Sources with modules: Cluster ($\mathcal{C}$); Quantizer ($\mathcal{Q}$); EntEnc ($\mathcal{T}$): entropy encoder; Compress: lossy encoder ($\mathcal{E}$); SWEnc: Slepian-Wolf encoder ($\mathcal{SW}$); Decomp: lossy decoder ($\mathcal{E}^{-1}$); SWDec: Slepian-Wolf decoder ($\mathcal{SW}^{-1}$); DeQuant: dequantizer ($\mathcal{Q}^{-1}$) and Decision ($\mathcal{D}$).}
			\label{Achiev}
		\end{figure}
		The proposed coding scheme is completely described in Fig. \ref{Achiev}, which consists of modules involving cluster, quantizer, entropy codec and asymmetric Slepian-Wolf codec \cite{Rimoldi_Urbanke_1997,Yang_Stankovic_Xiong_Zhao_2008}. Moreover, we give the detailed realization of each modules in the following.
		
		Specifically, for $i =1,2,\cdots,L$, $\mathcal{C}_i$ first clusters the sampled $k$-length signal $x_i^k$, and output an estimation of the semantic state $S$ independently. Meanwhile the quantizer $\mathcal{Q}_{i}$ quantizes the input $x_i^k$ according to the clustered sequence and outputs integral according to the step $q$, and designs different codebook for different Gaussian components of observations, namely
		\begin{align}
			\mathcal{C}_i(\cdot):&\mathbb{R}^k\mapsto\mathcal{S}^k,\notag\\
			\mathcal{Q}_{i}(\cdot,q):&\mathbb{R}^k\times\mathcal{S}^k\mapsto\{1,2,\cdots,2^{kq}\}.\notag
		\end{align}
		Next, the information to transmit at $i$-th user contains both discrete semantic label $S_i$ and quantized bits of the conditional Gaussian signal. For the semantic labels, we adopt asymmetric Slepian-Wolf (SW) encoder with fixed code rate $R_{i,1}$
		\begin{align}
			\mathcal{T}(\cdot)&:\mathcal{S}^k\mapsto\left\{1,2,\cdots,2^{kR_{1,1}}\right\},\notag\\
			\mathcal{SW}_{i}(\cdot)&:\mathcal{S}^k\mapsto\left\{1,2,\cdots,2^{kR_{i,1}}\right\},\quad\text{where}\quad i\geq2 .\notag
		\end{align}
		Note that given the estimated label, we use lossy encoder to compress $X_i|S_i=s_i$ with rate $R_{i,2}$. Herein a slight difference happens at the first agent, while the first label shares the lossy encoder with the quantized bits.
		\begin{align}
			&\mathcal{E}_{i}(\cdot):\{1,2,\cdots,2^{kq}\}\mapsto\left\{1,2,\cdots,2^{kR_{i,2}}\right\} .\notag
		\end{align}
		The non-interactive encoding processes are illustrated as gray region in Fig. \ref{Achiev} at different users. Overall
		\begin{align}
			C_1 &= [\mathcal{T}(\mathcal{C}_1(X_1)),\mathcal{E}_{1}(\mathcal{Q}_{1}(X_1,q))],\notag\\
			C_i&=[C_{i,1},C_{i,2}] = [\mathcal{SW}_{i}(\mathcal{C}_i(X_i)),\mathcal{E}_{i}(\mathcal{Q}_{i}(X_i,q))],\notag
		\end{align}
		and $R_i = R_{i,1}+R_{i,2}$ for $i=1,2,\cdots,L$. For the unique decoder, lossy decoders are the inverse of encoder, while SW decoders need to collect all previous users' estimation of semantic label to complete the decoding, namely,
		\begin{align}				 
			&\mathcal{T}^{-1}(\cdot):\left\{1,2,\cdots,2^{kR_{1,1}}\right\}\mapsto\mathcal{S}^k,\notag\\
			&\mathcal{SW}^{-1}_{i}(\cdot,\bm{s}^{i-1}):\left\{1,2,\cdots,2^{kR_{i,1}}\right\}\times\mathcal{S}^{k\times(i-1)}\mapsto\hat{\mathcal{S}}^k,\notag\\
			&\hspace{3cm}\text{where}\quad \bm{s}^{i-1}=\{s^k_1,s^k_2,\cdots,s^k_{i-1}\},\notag\\
			&\mathcal{E}^{-1}_{i}(\cdot):\left\{1,2,\cdots,2^{kR_{i,2}}\right\}\mapsto\{1,2,\cdots,2^{kq}\},\notag\text{ for}\quad i\geq2 .\notag
		\end{align}
		The decoded estimated labels are also sent into the dequantizer to reconstruct the observation $\hat{X}_1,\hat{X}_2,\cdots,\hat{X}_L$, and finally the decoder output the ultimate soft estimation of label $\hat{S}$ according to $\mathbb{P}\{\hat{s}^k|\hat{s}_1^k,\hat{s}_2^k,\cdots,\hat{s}_L^k\}$.
		\begin{align}
			\mathcal{Q}^{-1}_{i}(\cdot,q):&\{1,2,\cdots,2^{kq}\}\times\hat{\mathcal{S}}^k\mapsto\mathbb{R}^k,\notag\\
			\mathcal{D}(\cdot):&\hat{\mathcal{S}}^{k\times L}\mapsto[0,1]^k\notag.
		\end{align}
		For practical implementation, we adopt Neyman Pearson (NP) theorem for clustering, dithered trellis coded quantizer (TCQ) for quantization, low density parity check (LDPC) codes for asymmetric SW coding and lossy compression. The details are shown as follows.
		
		\textbf{Neyman Pearson theorem}. NP is a well-known method in hypothesis testing, which utilizes the log likelihood ratio (LLR) of different conditional distribution as decision standard, namely $\frac{P(X=x|S=\ell)}{P(X=x|S\neq\ell)}\geq t$, where $t$ denotes the decision threshold.  
		
		\textbf{Dithered trellis coded quantization}. TCQ is a powerful technique for implementing high-dimension vector quantization, where a continuous signal is assigned to the expanded signal set and its corresponding subsets. Furthermore, with the addition of an i.i.d. uniform dither noise, we can obtain an independent
		quantization sequence which is able to compress the continuous signal nearly lossless at its entropy rate. The quantization error $\epsilon^Q$ can be calculated via the normalized second moments
		of Voronoi region in TCQ \cite{Yang_Xiong_2012} and goes to 0 with small enough distortion. For more details one can turn to \cite[\RNum{4}.~A]{Yang_Xiong_2012}.
		
		\textbf{LDPC code for lossy compression and asymmetric SW coding}. It is not rare to use LDPC channel codes for source coding scheme. For lossy compression,  syndrome based encoding of LDPC code with parity matrix $H_{(n-n_1)\times n}$ are utilized, which has been verified to achieve the rate-distortion behavior of binary sources when equipped with an powerful ML decoder, namely 
		\[
		\mathrm{syn} = xH^T,
		\] and the rate is calculated as $\frac{n-n_1}{n}\geq1$. The reader can turn to \cite{Matsunaga_Yamamoto_2003} for theoretical details on MacKay’s ensemble. For  asymmetric SW, we apply entropy encoder at first user for $s_1$, while conditional encoder for the second user who selects an LDPC code with code rate $H(S_2|S_1)$ \cite{Liveris_Zixiang_Xiong_Georghiades_2002, Stankovic_Liveris_Xiong_Georghiades_2006}. Its encoder sends the syndrome $s_2H^T$ to the decoder, then the belief propagation based decoder recovers $s_2$ with the side information $s_1$. Note that it needs a sign flip on the LLR from check nodes to varaible nodes accoding to the received syndrome, namely
		\begin{align}
			\tanh\left(\frac{\mathrm{LLR}^\mathrm{out}_{\jmath,\imath}}{2}\right)=(-1)^{\mathrm{syn}_\jmath}\prod_{\imath\neq \jmath}\tanh\left(\frac{\mathrm{LLR}^\mathrm{in}_{\jmath,\imath}}{2}\right),
		\end{align}
		
		where $\mathrm{LLR}_{\jmath,\imath}^{\mathrm{in}/\mathrm{out}}$ represents input/output LLR from/to $\jmath$-th check node along $\imath$-th edge, and $\mathrm{syn}_\jmath$ denotes the $\jmath$-th bit of syndrome. The $i$-th user follows the similarly way with a selection of a codebook with rate $\frac{k}{n}=H(S_i|S_1,\cdots,S_{i-1})$, for $i=2,3,\cdots,L$. For the logarithmic distortion in Eq. \eqref{16} with $\hat{s_j}(s_j)=\mathbb{P}\{s_j|s_{1,j},s_{2,j},\cdots,s_{L,j}\}$ where $j=1,2,\cdots,k$, we utilize the empirical distribution
		\begin{align}
			\mathbb{E}\left[d_S(S,\hat{S})\right]&=\frac{1}{k}\sum_{j=1}^k\log\frac{1}{\mathbb{P}\{s_j|s_{1,j},s_{2,j},\cdots,s_{L,j}\}}\notag\\
			&=\sum_{s^k,\bm{s}^L}\mathbb{P}\{s^k,\bm{s}^L\}\log\frac{1}{\mathbb{P}\{s^k|\bm{s}^L\}}\geq H(S|C_\mathcal{L})\notag.
		\end{align}
		\begin{figure}[tbp]
			\centering
			\includegraphics[width=1\textwidth]{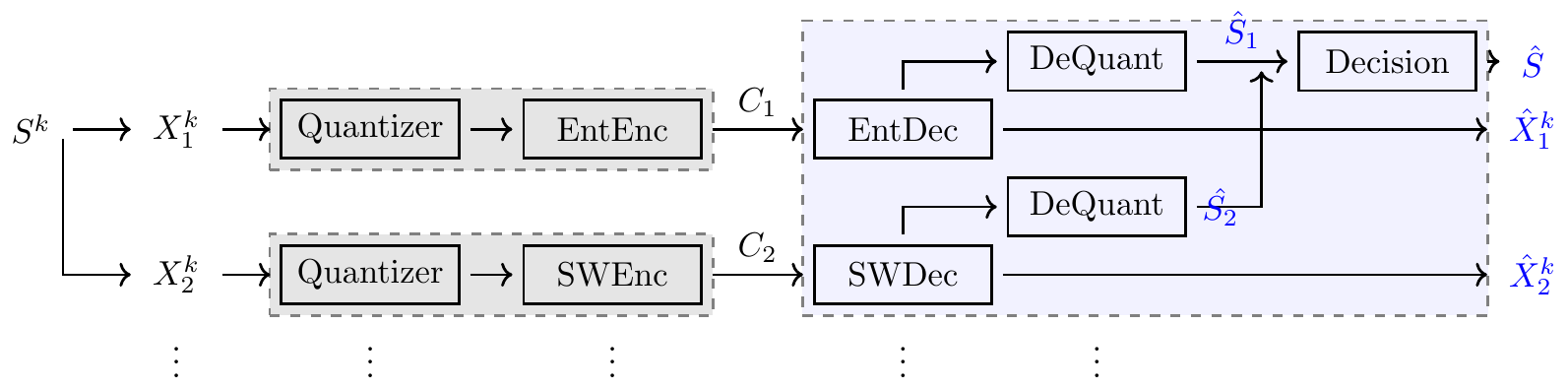}
			\caption{Achievable scheme based on direct compressing (realized in \cite{Yang_Xiong_2012})}
			\label{Achiev1}
		\end{figure}
		One can find that the proposed coding scheme is based on a "detect and compress" idea, and it provides sufficient independence by introducing clustering before quantifying the information among correlated users. For the complexity, the separate design merely applies binning tools (SW coding) on semantic labels instead of the total observations, resulting in a lower cost to find the joint typicality than direct compressing (illustrated in Fig. \ref{Achiev1}). The idea to directly compress and detect is used in \cite{Yang_Stankovic_Xiong_Zhao_2008} for Gaussian sources, which is widely applied in most real world situations. In this section, we emphasize the potential superiority of the "detect and compress" approach over the direct method, as it brings more significant performance gain despite the cost of label clustering. As a preliminary exploration, in the following, we will show that our scheme outperforms the "compress and detect" scheme on the sum rate performance under some setups, e.g. large noises or smaller semantic alphabet.
		
		
		\subsection{Numerical Results}
		In the final part of this section, we provide some simulation results to show the performances.
		
		\textbf{Setups and Baselines}. We assume an MML framework of semantic-aware MT coding problem with conditional Gaussian distributed sources with parameters $L=2$, $M=2$, $\omega_1=\omega_2=0.5$, and $\tb{K}_X=\sigma^2\bm{I}_2$. We choose step of the quantizer $q=3$ bits and dithered noise from $[-0.5,0.5]$ in TCQ (in this symmetric scenario we remove the subscripts since $\sigma^2_{X_1}=\sigma^2_{X_2}$ and $D_{X_1}=D_{X_2}$). We adopt optimal LDPC code with different rates for SW codec, whose distribution degree can be found in \cite[Ch.~7.2]{DBLP:phd/ndltd/Chung00}. The average distortion are computed with logarithmic loss between $S$ and $\hat{S}$, and MSE loss between $\bm{X}$ and $\hat{\bm{X}}$. For theoretical comparison, we plot the converse bound in Eq. \eqref{Cout} and an achievable inner bound in Eq. \eqref{Cinn} with $\epsilon\rightarrow 0$. For practical comparison on simulation results, we adopt the aforementioned scheme in \cite{Yang_Stankovic_Xiong_Zhao_2008} for MT problem.
		\begin{figure}[tbp]
			\subfloat[]{
				\includegraphics[width=0.49\textwidth]{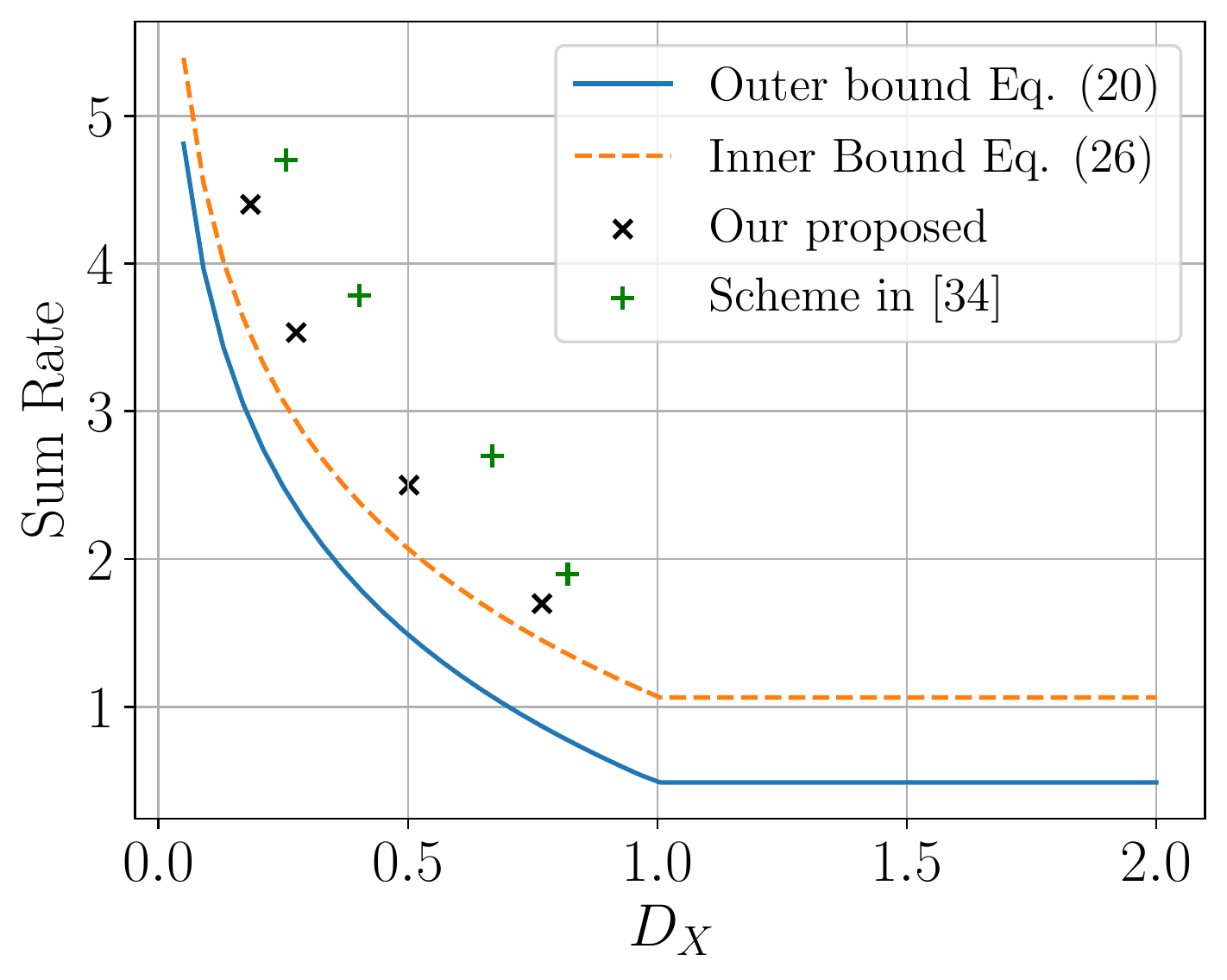}
				\label{4}}
			\subfloat[]{
				\includegraphics[width=0.49\textwidth]{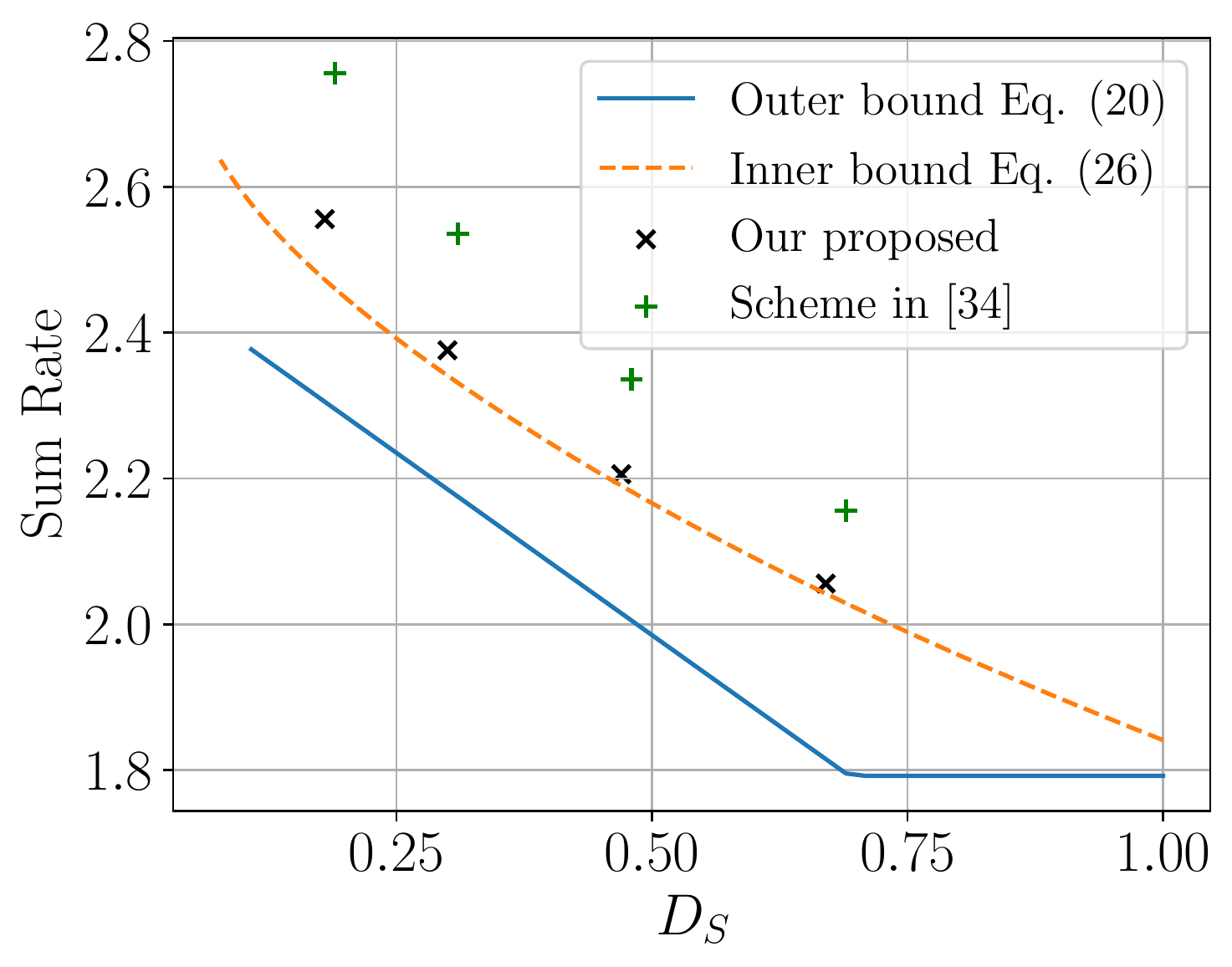}
				\label{5}}
			\caption{Rate-distortion behavior in comparison with scheme in \cite{Yang_Xiong_2012}: (a) against $D_X$ given $D_S=0.05$; (b) against $D_S$ given $D_X=0.2$}\label{Fig5}
		\end{figure}

		\textbf{Rate performance against distortions}. In Fig. \ref{4}, we investigate sum rate against $D_X$ in terms of theoretical bounds (lines) and simulation results (scatters). The sum rate decreases with the required distortion, while the curve shows non-linearity against $D_X$ since the logarithmic behavior of Gaussian entropy. Moreover, the simulation results show the superiority of our scheme than its competitor in \cite{Yang_Stankovic_Xiong_Zhao_2008} in terms of sum rate, though the former consumes extra bits to describe the semantic label. The performance of our scheme not only benefits from the lower complexity on binning two semantic labels $s_1$ and $s_2$, but also take advantages of few quantize error on recovering the true label. These two reasons result in the better performance which is shown in Fig. \ref{4}. In Fig. \ref{5}, the similar plot is presented against $D_S$ in terms of the same arguments. The difference on curvature between theoretical bounds can be originated from the $H(\bm{\omega})-D_S$ in outer Eq. \eqref{Cou} and MT optimization problem (non-linear) in inner bound Eq. \eqref{Cinn}. Besides, it is natural to find the better performance of our scheme, since the detection of $s$ at receiver enlarges the distortion according to noisy signals.
		
		\textbf{Rate allocation between users}. Fig. \ref{66} shows the rate allocation between two users, in terms of the same arguments. It provides another perspective to illustrate the superiority of our scheme. The gap between the simulation points and the inner bound mainly comes from the quantized error and restriction of block length.
		\begin{figure}[tbp]
			\centering
			\includegraphics[width=0.5\textwidth]{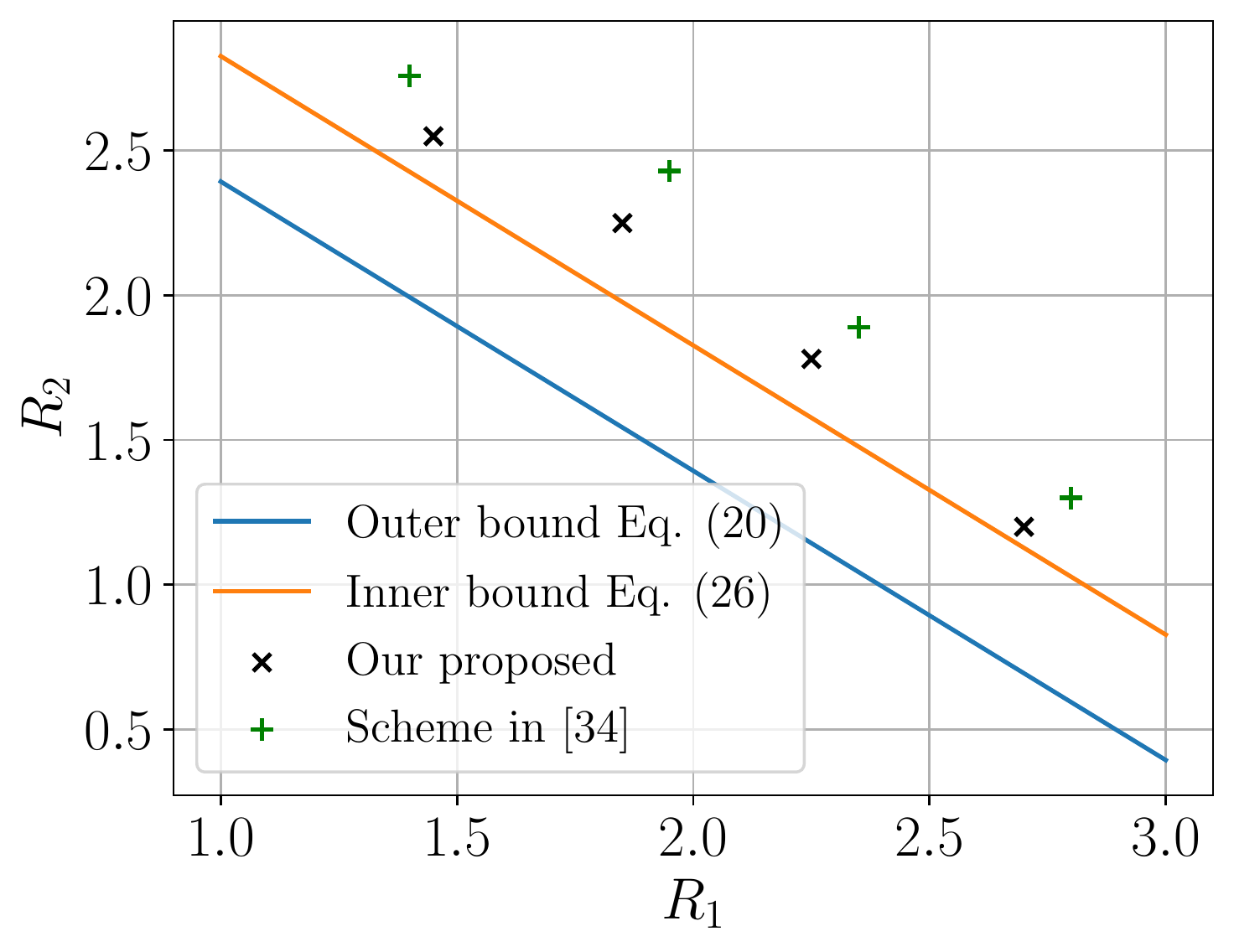}
			\caption{Rate-allocation behavior given $D_S=0.05,D_X=0.2$}
			\label{66}
		\end{figure}
		
		\textbf{Influence of noise} In Fig. \ref{7}, achieved distortions are plotted against SNR of the Gaussian mixture, i.e. $\sigma^{-2}$. We can find both two distortions descends with the decrease of noise. For the semantic distortion, our scheme enjoys an overwhelming advantage than the scheme in \cite{Yang_Stankovic_Xiong_Zhao_2008} owing to the semantic-awareness. For the observed distortion, our scheme outperforms the competitor scheme especially at the worse noise scenarios, since we use extra bits to preserve the labels.
		Besides, in Fig. \ref{88}, we compare inner and outer bounds with the increase of SNR. It can be validated that the gap between two bounds will be reduced with a smaller $\sigma^2$. Intuitively, it means that our our inner bound is able to approach outer bound in spite of the untightness, especially for the cases that the semantic label can be clustered correctly. This fact verifies the potentials of our coding scheme for a semantic-aware compression problem since it is able to approach the theoretical limits under some special cases. 
		\begin{figure}[tbp]
			\subfloat[]{
				\includegraphics[width=0.49\textwidth]{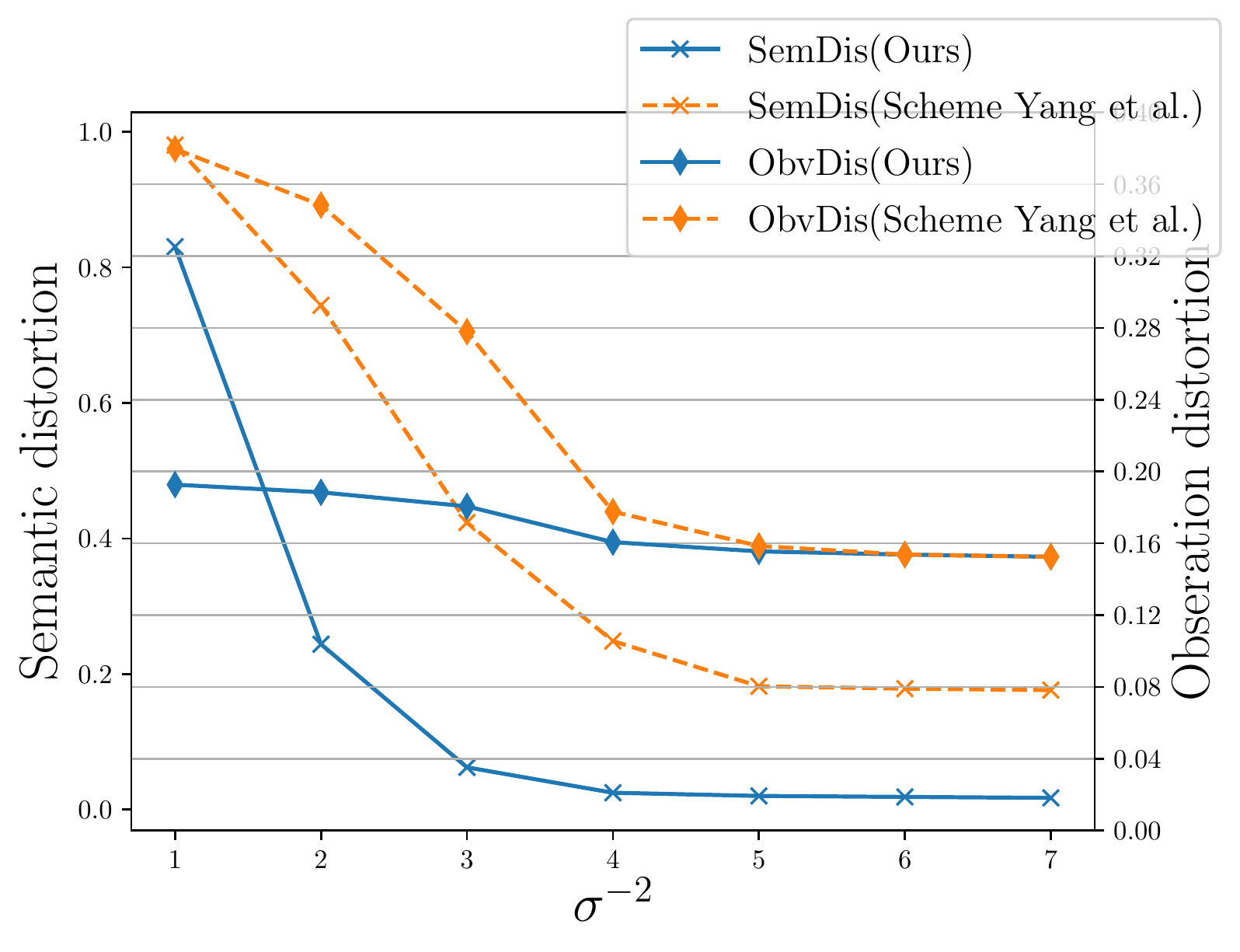}
				\label{7}}
			\subfloat[]{
				\includegraphics[width=0.49\textwidth]{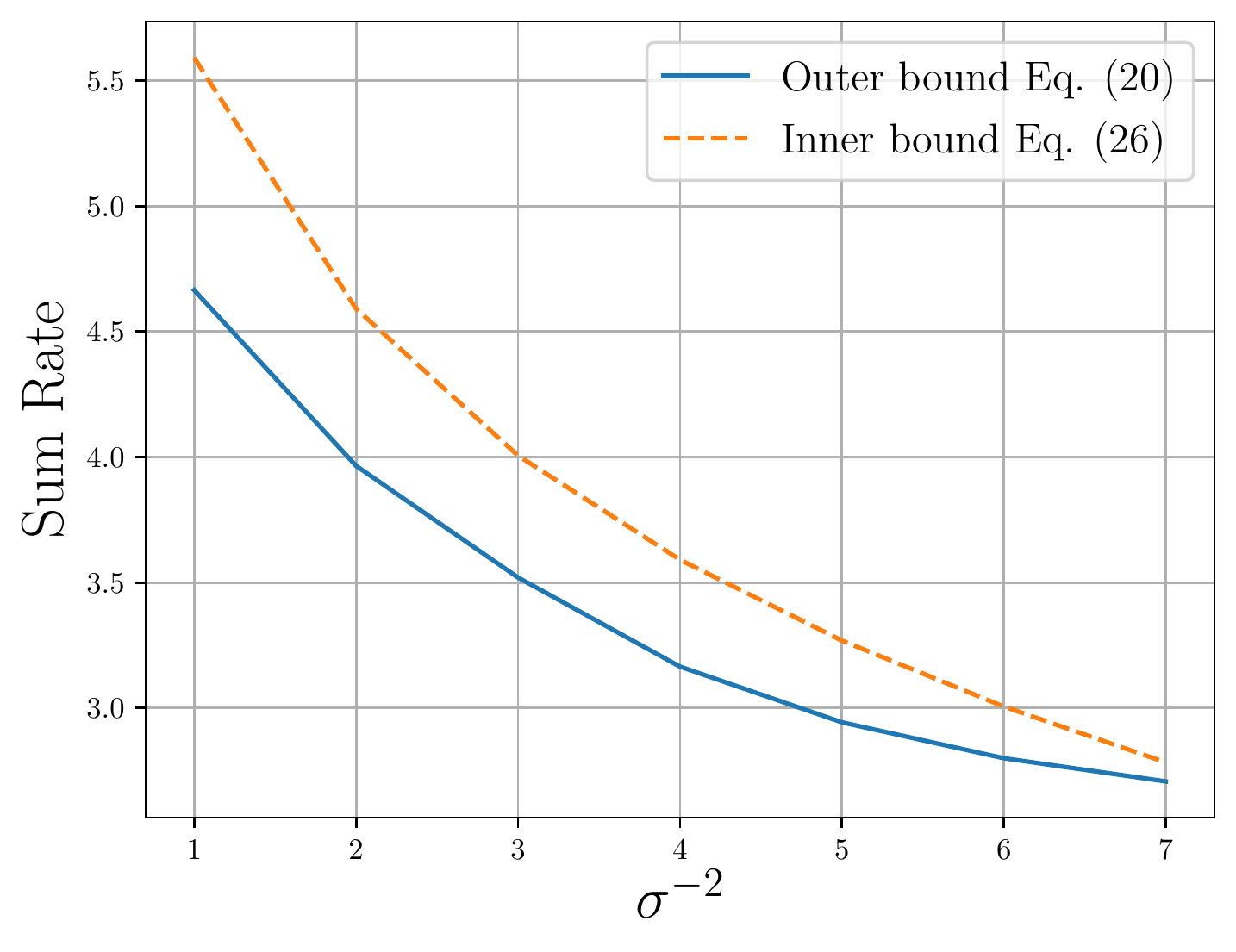}
				\label{88}}
			\caption{(a) Distortion behavior against SNR given $R=3.85$, where SemDis: Semantic distortion and ObvDis: Observation distortion; (b) Sum rate behavior against SNR given $D_S=0.05$, $D_X=0.2$ in terms of our inner and outer bounds.}\label{Fig8}
		\end{figure}
		\section{Conclusion}\label{Sec6}
		In this paper, we investigate the RDF problem of the proposed semantic-aware MT source coding model, which is based on a classic CEO problem and introduces extra observed constraints. We first present the generalized form of outer and inner bounds and discuss the degeneration cases. Moreover, by considering the real world sources, we specify the single letter characterization to Gaussian mixture sources, which models the practical semantics over a finite alphabet and observations conditional Gaussian distributed. The specification incorporating an improved outer bound and an theoretically-analyzable inner bound. We find a fact that, different from the joint Gaussian distributed semantic and observed sources, in our scenario there exists partially trade-off between semantic and observed distortion constraints. In other word, good observation reconstructions means satisfying semantic recovery, but an excellent semantic estimation cannot ensure accurate observation retrievals.
		
		Besides, we also present an practical coding scheme for Gaussian mixture sources based on the inner bound, which is based on "detect and compress" principle. The estimated semantic label and the continuous observations are compressed at the same time, motivating us to explore the potential superiority of the "detect and compress" approach compared to the direct compressing method, since it results in a more substantial performance improvement, even considering the cost of label clustering.
		
		\appendices
		\section{Proof of Theorem \ref{Theorem_label_outer_bound}}\label{proof_single_letter_outer_bound}
		This appendix provides the proof of the generalized outer bound of sum-rate distortion function in the semantic-aware MT source coding system. In this proof, we first introduce a remote source to our system, upon which we obtain conditional independence among observations $\bm{X}$. Next, by verifying the existence of the remote source, we utilize it for argumenting the observed sources and obtain the single letter form of the semantic-aware RDF via chain rule and time sharing. Note that the source argumenting was pioneered by Ozarow \cite{Ozarow}, and it was successively adopted by Wang and Viswanath \cite{Wang_Viswanath}; Wagner and Viswanath \cite{Wagner_Anantharam_2008} for multiple description problem and improved outer bound of MT problem, respectively. Finally, we verify the Markov relation among the distributions of random variables.
		
		For further use, we define $X^k_{i}(q)$ as the observed symbol output at the $i$-th agent at $q$-th time slot/ block component, and abbreviate $X^k_{i}(1:q)\triangleq X^k_i(1)X^k_i(2)\cdots X^k_i(q)$ where $0\leq q\leq k$. We first present the definition of the remote source.
		\begin{definition}\label{Def_Y}
			Let $\mathcal{Y}$ denote the set involving finite-alphabet random variables $\bm{Y}$, in which $\bm{Y}$ satisfies the following constraints:
			\begin{enumerate}[i)]
				\item $\bm{Y}^k$ is a $k$-length vector whose components are mutually independent;
				\item $\bm{Y}^k-\bm{X}^k- C_\mathcal{L}$;
				\item $X_i^k-\bm{Y}^k- X_j^k$ where $1\leq i\neq j\leq L$.
			\end{enumerate}
		\end{definition}
		From the definition, property $i)$ indicates that $\bm{Y}$ is a memoryless source, $ii)$ shows the 'remote' Markov relation and $iii)$ implies the conditional independence. Note that one can readily verify that $\mathcal{Y}$ is nonempty since $\bm{Y}$ can take the same value as $\bm{X}$, namely $\mathcal{Y}$ contains $\bm{Y}=\bm{X}$.	The proof starts from the assumption there exists a remote source $\bm{Y}\in\mathcal{Y}$ such that
		\begin{align}
			k\sum_{i=1}^LR_L\geq& \sum_{i=1}^LH(C_i)\notag\\
			\geq& H(C_{\mathcal{L}})\notag\\
			=& I(\bm{X}^k;C_{\mathcal{L}})\notag\\
			\overset{(a)}{=}& I(\bm{Y}^k,\bm{X}^k;C_{\mathcal{L}})\notag\\
			\overset{(b)}{=}&I(\bm{Y}^k;C_{\mathcal{L}})+\sum_{i=1}^{L}I(X_i^k;C_i|\bm{Y}^k)\notag\\
			\overset{(c)}{=}&\sum_{j=1}^kI(\bm{Y}^k(q);C_\mathcal{L}|\bm{Y}^k(1:q-1))+\sum_{i=1}^L\sum_{j=1}^kI(X^k_{i}(q);C_i|\bm{Y}^k,X^k_{i}(1:q-1))\label{40}
		\end{align}
		where $(a)$ and $(b)$ follow by the ii) and iii) properties of $\bm{Y}$ in Def. \ref{Def_Y}, respectively, and $(c)$ follows from the chain rule of mutual information. Next we lower bound the second term in Eq. \eqref{40}. Note that the fact for $i=1,2,\cdots,L$,
		\begin{align}
			&I(X^k_{i}(q);C_i|\bm{Y}^k)+I(X^k_{i}(q);X^k_{i}(1:q-1)|C_i,\bm{Y}^k)\notag\\
			=&I(X^k_{i}(q);C_i|\bm{Y}^k,X^k_{i}(1:q-1))	+I(X^k_{i}(q);X^k_{i}(1:q-1)|\bm{Y}^k)
		\end{align}
		resulting in 
		\begin{align}
			I(X^k_{i}(q);C_i|\bm{Y}^k,X^k_{i}(1:q-1))\geq I(X^k_{i}(q);C_i|\bm{Y}^k),\label{39}
		\end{align}
		in which the inequality holds due to the i.i.d. source $\bm{X}_i$ thus $I(X^k_{i}(q);X_{i}^k(1:q-1)|\bm{Y}^k)=0$. Hence by substituting \eqref{39} into \eqref{40} we write
		\begin{align}
			\sum_{i=1}^LR_L\geq&\frac{1}{k}\sum_{j=1}^k\Bigg[ I(\bm{Y}^k(q);C_\mathcal{L}|\bm{Y}^k(1:q-1))+\sum_{i=1}^LI(X^k_{i}(q);C_i|\bm{Y}^k)\Bigg]\notag\\
			=&\frac{1}{k}\sum_{j=1}^k\Bigg[ H(\bm{Y}^k(q))-H(\bm{Y}^k(q)|C_\mathcal{L},\bm{Y}^k(1:q-1))\notag\\
			&\hspace{4cm}+\sum_{i=1}^L\left(H(X^k_{i}(q)|\bm{Y}^k)-H(X^k_{i}(q)|C_i,\bm{Y}^k)\right)\Bigg]\label{400}
		\end{align}
		where the equation Eq. \eqref{400} holds since $\bm{Y}$ is a memoryless source. Furthermore, by defining $U_i(q)\triangleq(C_i,\bm{Y}(1:q-1))$ for $i=1,2,\cdots,L$, and a time sharing random variable $Q$ which is uniformly distributed over $\{1,2,\cdots,k\}$ thus independent with source $\bm{Y},\bm{X}$, we obtain
		\begin{align}
			&\frac{1}{k}\sum_{j=1}^k\left[I\left(\bm{Y}^k(q);\bm{U}(q)\right)+\sum_{i=1}^LI\left(X^k_{i}(q);U_i(q)|\bm{Y}^k(q),\bm{Y}(q^c)\right)\right]\notag\\
			\overset{(d)}{=}& I\left(\bm{Y}^k(Q);\bm{U}|Q\right)+\sum_{i=1}^LI\left(X^k_{i}(Q);U_i|\bm{Y}^k(Q),\bm{Y}^k(Q^c),Q\right)\notag\\
			\overset{(e)}{=}&I\left(\bm{Y};\bm{U}|Q\right)+\sum_{i=1}^LI\left(X_{i};U_i|\bm{Y},\bm{W},Q\right),\label{41}
		\end{align}
		where $(d)$ follows the introduction of a new random variable $\bm{U}$ with distribution $P_{\bm{U}(q)|\bm{Y}^k(q)=\bm{y}}=P_{\bm{U}|\bm{Y}=\bm{y},Q=q}$; $(e)$ follows by the random variable $\bm{W}\triangleq\bm{Y}^k(Q^c)$. Besides, the cardinality of $\mathcal{U}_i$ can be bounded by $|\mathcal{X}_i|+2^L+L-2$ \cite[Lemma~7]{Wagner_Anantharam_2008} for all $i=1,2,\cdots,L$. After that, it remains to show that $(R,D_S,\bm{D}_X)$ is admissible. Note the constraints in Eq. \eqref{1} and Eq. \eqref{2} indicate that
		\begin{align}
			&D_S\geq\mathbb{E}d^k_S\left(S^{k}(Q), \hat{S}^{k}(Q)\right) , \notag\\
			&D_{X_i}\geq\mathbb{E}d^k_X\left(X_{i}^{k}(Q), \hat{X}^{k}_{i}(Q)\right) ,\qquad i=1,\cdots,L.\notag
		\end{align}
		for uniformly distributed $Q$. The combination of Eq. \eqref{41} with the above $L+1$ constraints build the single-letter characterization of our outer bound presented in Def. \ref{Definition_Cout}. Finally, to show the joint distribution of random variables, we recall the property iii) of the remote source $\bm{Y}$, thus for arbitrary $i=1,2,\cdots,L$,
		\begin{align}
			&X^k_i(Q^c)-\bm{Y}^k(Q^c)-\bm{X}^k_{i^c}(Q^c)\notag\\
			\overset{(f)}{\Rightarrow}&X^k_i(Q^c)-(X^k_i(Q),\bm{Y}^k(Q^c))-(\bm{X}^k_{i^c})\notag\\
			\overset{(g)}{\Rightarrow}&X^k_i(Q^c)-(X^k_i(Q),\bm{Y}^k(Q^c))-(S^k(Q),\bm{X}^k_{i^c}).\label{42}
		\end{align}
		where $(f)$ is due to the i.i.d source $X_i$ and $(g)$ follows that $I(\bm{X}^k_i(Q^c);S^k(Q)|\bm{X}^k_i)=0$. Moreover, by using the notation $\bm{W}$ since $Q$ is independent with $\bm{X}^k$ and $\bm{Y}^k$, the Markov chain Eq. \eqref{42} can be further written as
		\begin{align}		
			&X^k_i(Q^c)-(X^k_i(Q),\bm{W},Q)-(S^k(Q),\bm{X}^k_{i^c})\notag\\
			\overset{(h)}{\Rightarrow}&U_i-(X^k_i(Q),\bm{W},Q)-(S^k(Q),\bm{X}^k_{i^c})\notag\\
			\overset{(i)}{\Rightarrow}&U_i-(X^k_i(Q),\bm{W},Q)-(S^k(Q),\bm{X}^k_{i^c},\bm{U}_{i^c}),\label{43}
		\end{align} 
		where $(h)$ is due to $U_i$ is function of $(X^k_i(Q),\bm{W},Q)$, and similarly $(i)$ follows that $\bm{U}_{i^c}$ is the function of $(\bm{X}^k_{i^c},\bm{W},Q)$. Eventually, we can remove the time sharing random variable $Q$ since we focus on the sum-rate function in this paper. The combination of Eq. \eqref{43} and Def. \ref{Def_Y} hence motivates the joint distribution of random variables presented in Def. \ref{Definition_Cout}, which completes the proof.
		
		\section{Proof of Corollary \ref{corollary_degeneration}}\label{proof_corollary_degeneration}
		In this appendix, we present sketch proofs of the degenerations of our bounds to existing works in Corollary \ref{corollary_degeneration}.
		\begin{enumerate}
			\item For the single user case, we divide the proof into two steps: we first prove our outer bound and inner bounds coincides to each other, and then prove it is equivalent to the indirect rate-distortion function characterizing semantic information. 
			To avoid confusing, we use the notation $D_X,Y,W,U$ and $V$ instead of their vector forms in the following, respectively, due to the unique agent. In the case where $L=1$, we first show
			\begin{align}
				R_\mathrm{out}(D_S,D_X)\geq R_\mathrm{in}(D_S,D_X).\label{51}
			\end{align}
			Specifically, one can verify this by standardizing the objective functions of both outer and inner bounds into a unify format and subsequently evaluating the feasible regions. To see this, we write
			\begin{align}
				\max_{Y,W}\min_{U} \left\{I(Y;U)+I(X;U|Y,W)\right\}			\overset{(j)}{\geq}&\max_{W}\min_{U} \left\{I(S;U)+I(X;U|S,W)\right\},\notag\\
				\overset{(k)}{\geq}&\min_{U} \left\{I(S;U)+I(X;U|S)\right\},\notag\\
				\overset{(l)}{=}&\min_{U} I(X;U),
			\end{align}
			where $(j)$ follows the fact $S\in\mathcal{Y}$, $(k)$ follows a fixed $W$ and $(l)$ follows the Markov chain $S-X-U$. Therefore, we can obtain
			\begin{align}
				R_\mathrm{out}(D_S,D_X)\geq \widetilde{R}(D_S,D_X)\triangleq&\min_{U} I(X;U)\label{45}\\
				&\text{s.t.}\hspace{0.1cm}\mathbb{E}d_S\left(S;f(U)\right)\leq D_S,\notag\\ 
				&\hspace{0.6cm}\mathbb{E}d_X\left(X_i;g_i(U)\right)\leq D_{X_i}, i=1,\cdots,L.\notag
			\end{align}
			Now it is left to compare the feasible regions. One can readily verify that the joint distribution of random variables in Eq. \eqref{45} is reduced to $P(s)P(x|s)P(u|x)$, which is exactly the same as Eq. \eqref{166} when $L=1$, which means $\widetilde{R}(D_S,D_X)=R_\mathrm{in}(D_S,D_X)$. Consequently, we obtain Eq. \eqref{51} and conclude that $R_\mathrm{out}(D_S,D_X)=R(D_S,D_X)= R_\mathrm{in}(D_S,D_X)$, yielding the tightness between our inner and outer bounds for single user case. Next we try to show
			\begin{align}
				R(D_S,D_X)= \hat{R}(D_S,D_X).\label{56}
			\end{align}
			This is a direct consequence from converting the semantic rate-distortion problem into an equivalent classic rate-distortion problem using the surrogate distortion argument \cite{Kostina2016} \cite[App.~\RNum{1}]{Liu_Shao_Zhang_Poor_2022}, i.e. from $\mathbb{E}[d(S,\hat{S})]\leq D_S$ to $\mathbb{E}[\bar{d}(X,\hat{S})]\leq D_S$. Then the converse part follows
			\begin{align}
				kR(D_S,D_X)\geq H(f(X))=I(X^k;f(X))=I(X^k;U^k)\geq kI(X;U)=kI(X;\hat{S},\hat{X}),
			\end{align}
			which yields $R(D_S,D_X)\geq \hat{R}(D_S,D_X)$. Meanwhile, the achievability parts follows the standard random binning scheme for RDF, thus constructing such a codebook enabling $R(D_S,D_X)\leq \hat{R}(D_S,D_X)$. Finally, the continued equality Eq. \eqref{corollaryp2p} holds by combining Eq. \eqref{51} and Eq. \eqref{56}. 
			\item Under the case when $L=2$, $\bm{D}_X\rightarrow \infty$ and with logarithmic loss, we first show that our outer bound coincides the inner bound, and then prove it is equivalent to the results from Courtade and Weissman \cite{Courtade_Weissman_2014}. Note that here we substitute $D_S$ with $D$ since there lefts the unique distortion constraint. We first show
			\begin{align}
				R_{\mathrm{out}}(D,+\infty)\geq R_{\mathrm{in}}(D,+\infty)\label{633}
			\end{align}
			by the similar method in 1). Therefore we can write 
			\begin{align}
				R_\mathrm{out}(D,+\infty)\geq \bar{R}(D)\triangleq&\min_{\bm{U}} I(\bm{X};\bm{U})\label{500}\\
				&\text{s.t.}\hspace{0.1cm}\mathbb{E}d_S\left(S;f(\bm{U})\right)\leq D_S.\notag
			\end{align}
			Thus it is left to discuss the alphabets of auxiliary random variables. Note that the cardinality bounds on $\mathcal{U}$ can be imposed to $\mathcal{V}$ in Def. \ref{Definition_Cinn} without loss of generality (See \cite[App.~A]{Courtade_Weissman_2014}), meanwhile the joint distribution in Eq. \eqref{500} is reduced to $P(s)\prod_{i=1}^2P(x_i|s)P(u_i|x_i)$ which is exactly Eq. \eqref{166} for the case $L=2$. Therefore we obtain $\bar{R}(D)= R_{\mathrm{in}}(D,+\infty)$, and Eq. \eqref{633} holds. Consequently $R_{\mathrm{out}}(D,+\infty)=R(D,+\infty) =R_{\mathrm{in}}(D,+\infty)$.
			
			Next, we try to show our bounds are equivalent to the bounds of two-user CEO problem with logarithmic loss. For the converse part, with $\bm{U}=(U_1,U_2)$, we write
			\begin{align}
				\min_{\bm{U}}I(S;\bm{U})+\sum_{i=1}^2I(X_i;U_i|S)
				= &\min_{\bm{U}}H(S)-H(S|\bm{U})+\sum_{i=1}^2I(X_i;U_i|S)\notag\\
				\overset{(m)}{\geq}& \min_{\bm{U}}H(S)-\mathbb{E}\left[d(S,\hat{S})|\bm{U}\right]+\sum_{i=1}^2I(X_i;U_i|S)\notag\\
				\geq& \min_{\bm{U}}\left[H(S)-D+\sum_{i=1}^2I(X_i;U_i|S)\right]^+
			\end{align}
			where $(m)$ holds since $H(S|\bm{U}=\bm{u})\leq\mathbb{E}[d(S,\hat{S})|\bm{U}=\bm{u}]\leq D$ when $d(\cdot,\cdot)$ is the logarithmic loss. This implies that $R_{\mathrm{out}}(D,+\infty)\geq R_{\mathrm{out}}^\mathrm{CEO}(D)$. Moreover, for the achievability, one can easily obtain the conclusion with fixed $D$
			\begin{align}
				R_{\mathrm{in}}(D,+\infty)\leq R^\mathrm{CEO}_{\mathrm{in}}(D)\notag
			\end{align} 
			via constructing a reproduction function $f(U_1,U_2,Q)\triangleq\mathbb{P}\{S=s|U_1,U_2,Q\}$ for all $s$. Note that the conclusion that bounds of CEO problem are tight in general, we conclude that 
			\begin{align}
				R_{\mathrm{out}}(D,+\infty)\geq R^\mathrm{CEO}_{\mathrm{out}}(D)=R^\mathrm{CEO}_{\mathrm{in}}(D)\geq R_{\mathrm{in}}(D,+\infty)\label{699}.
			\end{align} Finally, we combine Eq. \eqref{633} and Eq. \eqref{699} to obtain the continued equality Eq. \eqref{corollaryCEO}. 
			\item In this part we wish to prove our outer bound is an improved version of Berger-Tung outer bound, when $D_S\rightarrow\infty$. We can formulate the bound as 
			\begin{align}
				R_\mathrm{out}(\infty,\bm{D})=&\max_{\bm{Y},\bm{W}}\min_{\bm{U}} \left\{I(\bm{Y};\bm{U})+\sum_{i=1}^LI(X_i;U_i|\bm{W},\bm{Y})\right\},\notag\\
				&\text{s.t.}\hspace{0.2cm}\mathbb{E}d_X\left(X_i;g_i(\bm{U})\right)\leq D_{X_i},\hspace{0.1cm} i=1,\cdots,L,\notag
			\end{align}
			for a joint distribution $P_{\bm{W}\bm{X}\bm{Y}\bm{U}}$ of the form
			\begin{align}
				&P(\bm{y}|\bm{x})P(w)\prod_{i=1}^LP(x_i)P(u_i|x_i,w)\label{72}
			\end{align}
			which is exactly the improved outer bound of MT problem from Wagner. Intuitively, a longer Markov chain can be observed in Eq. \eqref{72} than the short Markov chain $U_i\rightarrow X_i\rightarrow \bm{X}_{L/\{i\}}$ in Berger-Tung outer bound, yielding the result $R_\mathrm{out}(\infty,\bm{D})\geq R_{\mathrm{out}}^{\mathrm{BT}}(\bm{D})$. The reader can turn to \cite{Wagner_Anantharam_2008} for more details.
		\end{enumerate}
		\section{Proof of Theorem \ref{Theorem_label_Gaussian_Mixture}}\label{proof_Gaussian_Mixture}
		In general, we aim to find a relaxed lower bound for the single letter outer bound in Def. \ref{Definition_Cout}, since the optimization problem in Def. \ref{Definition_Cinn} is nonconvex for mixture Gaussian sources, and the exact characterization of entropy of mixture Gaussian is still an open question. Based on the fact, we present the proof as follows:
		
		We first obtain a naive outer bound by setting the remote source $\bm{Y}$ as $S$ to confront the non-convexity, due to the aforementioned fact $S\in\mathcal{Y}$. The primary bound relates the conditional entropy $H(S|\bm{U})$ and $H(\bm{X}|S,\bm{U})$ to distortions $D_S$ and $\bm{D}_X$, respectively. Furthermore, a core idea to improve the bound is that the optimal estimate of semantic label $S$ via $\bm{U}$ always outperforms that via $\hat{\bm{X}}$, in which the later can be bounded by concentration inequality and hypothesis testing argument.
		
		Note that the observations follows a conditional vector Gaussian distribution as
		\begin{align}
			p_{\bm{X}}(\bm{x}|S=\ell) \sim \mathcal{N}(\bm{x};\ell\cdot\bm{1},\tb{K}_X).\notag
		\end{align}
		Now equipped with the auxiliary random variables $\bm{U}$, $\bm{\Gamma}\triangleq \tb{K}_{\bm{X}|\bm{U}}$ and $\beta\triangleq H(S|\bm{U})$, we obtain an outer bound for the RDF by revealing the fact that the conditional entropy $\beta$ can be bounded with a function of the correlation matrix $\bm{\Gamma}$, we start with
		\begin{align}
			R &= \min_{\bm{U}}I(\bm{X};\bm{U})\nonumber\\
			&=\min_{\bm{U}}I(S,\bm{X};\bm{U})\nonumber\\
			&=\min_{\bm{U}}I(S;\bm{U})+I(\bm{X};\bm{U}|S)\nonumber\\
			&=\min_{\bm{U}}H(S)-H(S|\bm{U})+h(\bm{X}|S)-h(\bm{X}|\bm{U},S)\nonumber\\
			&\overset{(n)}{\geq}\min_{\beta,\bm{\Gamma}}H(S)-\beta+h(\bm{X}|S)-\frac{1}{2}\log_2\left(2\pi e\right)^L\det(\bm{\Gamma}),\label{beta_single user} 
		\end{align}
		Herein the step $(n)$ holds since 
		\begin{align}
			h(\bm{X}|\bm{U},S)&\leq\frac{1}{2}\log_2\left(2\pi e\right)^L\det\left(\tb{K}_{\bm{X}|\bm{U},S}\right)\notag\\
			&\leq\frac{1}{2}\log_2\left(2\pi e\right)^L\det\left(\tb{K}_{\bm{X}|\bm{U}}\right),
		\end{align}
		where the Gaussian distribution maximizes entropy given second order moment, and $\tb{K}_{\bm{X}|\bm{U},S}\preceq\tb{K}_{\bm{X}|\bm{U}}$ holds trivially. 
		
		In order to lower bound the rate $R$, we want to lower bound each term in Eq. \eqref{beta_single user} respectively. Some terms can be easily obtained straightforwardly. 
		For example, recalling the fact $\sum_{\ell=1}^M\omega_\ell=1$ and all Gaussian components of $\bm{X}$ share a same covariance matrix $\tb{K}_X$, we have
		\begin{align}
			h(\bm{X}|S) &= \sum_{\ell=1}^M\omega_\ell h(\bm{X}|S=\ell)\notag\\
			& = \frac{1}{2}\log_2(2\pi e)^L\det\left(\tb{K}_X\right).\label{111}
		\end{align}
		Moreover, $H(S)$ is a constant when the distribution of semantic source $S$ is given, and we also can bound the main diagonal element of $\bm{\Gamma}$ as
		\begin{align}
			\bm{e}_i^T\bm{\Gamma}\bm{e}_i= \bm{e}_i^T\tb{K}_{\bm{X}|\bm{U}}\bm{e}_i\leq D_{X_i}\quad\text{for}\quad i=1,2,\cdots,L.\label{63}
		\end{align}
		
		Now, the key problem of lower bounding rate $R$ is to upper bound the conditional entropy rate $H(S|\bm{U})$. In the following part, we will establish two different upper bounds respectively. First, an easy upper bound can be found as
		\begin{align}
			H(S|\bm{U})
			=\mathbb{E}_{\bm{U}}[H(S|\bm{U}=\bm{u})]
			\leq\mathbb{E}_{S,\bm{U}}[d_S(S,f(\bm{U}))|\bm{U}=\bm{u}]\leq D_S\label{62},
		\end{align}
		according to the property of logarithmic loss measure. Particularly, the semantic decoder outputs likelihood $f(\bm{U})\in\mathcal{P}_{\hat{S}}$ of $S$ based on $\bm{U}$.
		
		Second, we want to establish a connection between the entropy $\beta$ and matrix $\bm{\Gamma}$. This connection can be interpreted that the semantic distortion will be bounded when fixing the mean square error matrix between observations and its reconstructions. The Fano's inequality can upper bound the conditional entropy $H(S|\bm{U})$ by the error probability of decoding semantic information $S$ as following, 
		\begin{align}
			H(S|\bm{U})&\leq 1+\log_2(M-1)\mathbb{P}\left\{S\neq f_S(\bm{U})\right\}.\label{Fano1}
		\end{align}
		Notice that the decoded semantic information under logarithmic loss $\hat{S}=f(\bm{U})$ is a distribution, hence we concatenate a hard decision with $f(\bm{U})$, which is overall denoted by $f_S(\bm{U})$ on the RHS of Eq. \eqref{Fano1}, where the composite decoder $f_S(\cdot):\prod_{i=1}^L\mathcal{U}_i\mapsto\mathcal{S}$.
		
		As a naturally idea, we can see that the error probability of detecting $S$ won't be very large if the observable source $\bm{X}$ can be reconstructed with high quality. The following steps will show this insight rigorously. Firstly, we define $f_{S}^\star(\hat{\bm{X}})$ as a specific detector of semantic information $S$ based on the reconstructed signal $\hat{\bm{X}}$, where $f_{S}^\star(\cdot):\prod_{i=1}^L\mathcal{X}_i\mapsto\mathcal{S}$ and it decides the reconstructed semantic information $\hat{S}=\ell$ when the sufficient statistic $\hat{X}_{s}\triangleq\frac{1}{L}\sum_{i=1}^L\hat{X}_i$ satisfies $\hat{X}_{s}\in \left[\ell-\frac{1}{2}, \ell+\frac{1}{2}\right]$. Obviously, the detector $f_S^\star(\hat{\bm{X}})$ is not the optimal detector for $S$, which will definitely have a larger error probability than $f_S(\bm{U})$ does. Therefore, we have 
		\begin{align}
			\mathbb{P}\left\{S\neq f_S(\bm{U})\right\}
			\leq&\mathbb{P}\left\{S\neq f_{S}^\star(\hat{\bm{X}})\right\}\notag\\
			=&\sum_{\ell=1}^M\omega_\ell\mathbb{P}\left\{f_{S}^\star(\hat{\bm{X}})\neq\ell|S=\ell\right\}\nonumber\\	
			\overset{(o)}{=}&\sum_{\ell=1}^M\omega_\ell\mathbb{P}\left\{\left.\left\vert\hat{X}_{s}-\ell\right\vert\geq\frac{1}{2}\right\vert S=\ell\right\}\label{div},
		\end{align}
		where step $(o)$ follows the detection rule of $f_S^\star(\hat{\bm{X}})$. However, since there is no evidence to show that $\hat{X}_S$ is an unbiased estimation of $X_S\triangleq\frac{1}{L}\sum_{i=1}^LX_i$ conditioned on specific $S=\ell$, hence the probability of $\hat{X}_S$ exceeding the decision region $\left[\ell-\frac{1}{2}, \ell+\frac{1}{2}\right]$ can not be directly upper bounded by Chebyshev inequality. To complete our proof, here we introduce an auxiliary variable $\alpha$, where $0<\alpha<\frac{1}{2}$
		is utilized to characterize the deviation 
		of $X_S$ to its mean when given $S=\ell$. Specifically, sketched in Fig. \ref{region}, if the density of $p(x|S=\ell)$\footnote{With a slight abuse of notation, $x$ and $\hat{x}$ denote the realization of sufficient statistics $X_S$ and $\hat{X}_S$ for simplicity.}is concentrated within the small region of $\left[\ell-\alpha, \ell+\alpha\right]$, then the probability of $\hat{X}_S$ exceeding the decision region $\left[\ell-\frac{1}{2}, \ell+\frac{1}{2}\right]$, namely the probability of $\hat{X}_S$ having a large deviation off the central of $p(\hat{x}|S=\ell)$, will be upper bounded or otherwise the MSE between $X_S$ and $\hat{X}_S$ will exceed the distortion constraint. Following this idea, we can split the event of $\hat{X}_S$ exceeding the decision region $\left[\ell-\frac{1}{2}, \ell+\frac{1}{2}\right]$ as
		\footnote{We remark that the decoder is not necessarily the same as $f_S^\star(\cdot)$, in which we fix the decision region and the form of sufficient statistic. Actually, we select a scalar estimator $\hat{X}_S$ and $X_S$ for easy analysis, while a high dimension decision region and estimator may tighten our result. }
		\begin{figure}[t]
			\centering
			\includegraphics[width=0.5\textwidth]{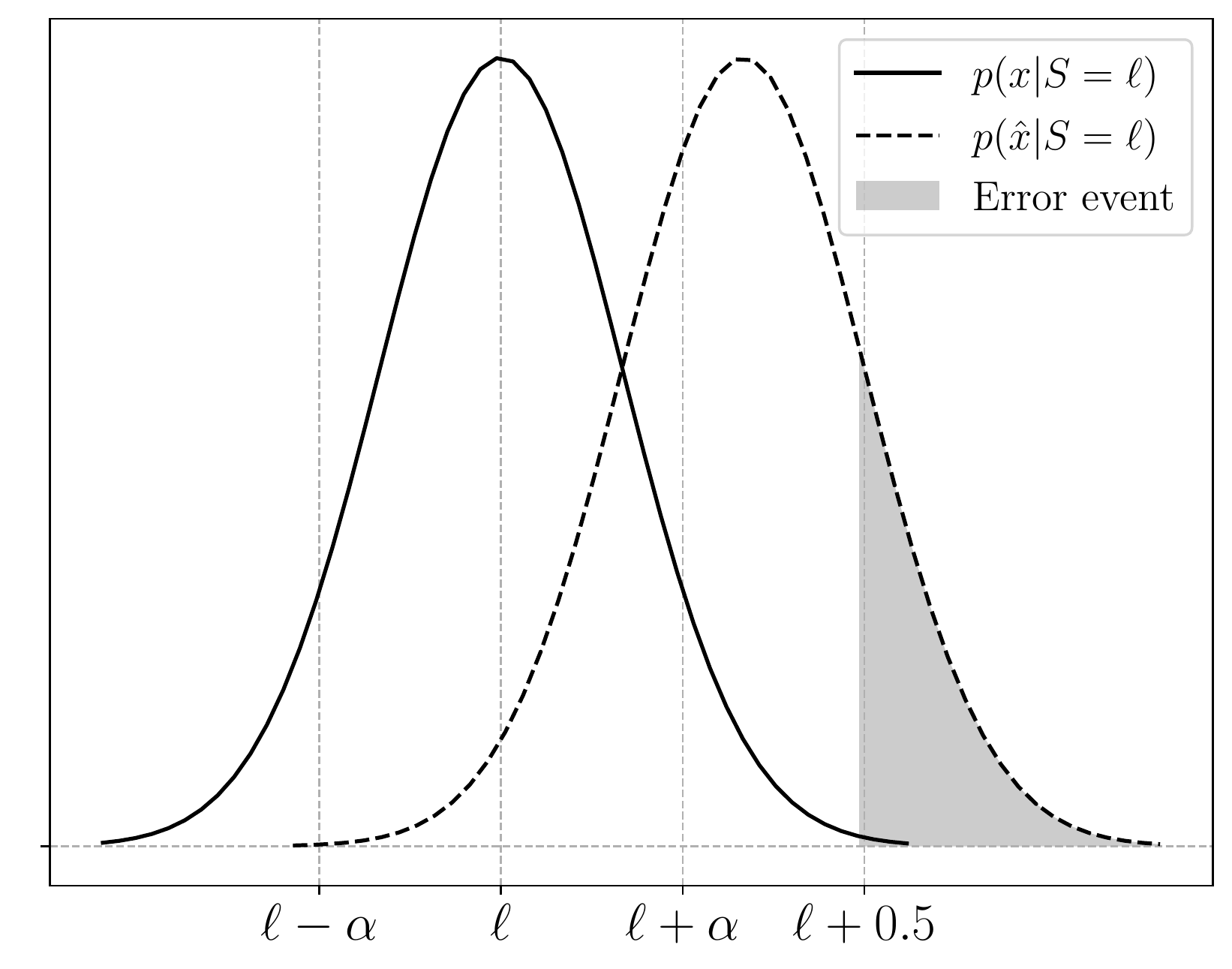}
			\caption{An illustration for the detection rule of specific detector $f_S^\star(\cdot)$}
			\label{region}
		\end{figure}
		\begin{align}
			&\mathbb{P}\left\{\left.\left\vert\hat{X}_S-\ell\right\vert\geq\frac{1}{2}\right\vert S=\ell\right\}\notag\\
			=&\mathbb{P}\left\{\left.\left\vert\hat{X}_S-\ell\right\vert\geq\frac{1}{2},\left\vert X_S-\ell\right\vert\geq\alpha\right\vert S=\ell\right\}+\mathbb{P}\left\{\left.\left\vert\hat{X}_S-\ell\right\vert\geq\frac{1}{2},\left\vert X_S-\ell\right\vert<\alpha\right\vert S=\ell\right\}\label{div2}.
		\end{align}
		Now for Eq. \eqref{div2}, we rewrite the first term as
		\begin{align}
			&\mathbb{P}\left\{\left.\left\vert\hat{X}_S-\ell\right\vert\geq\frac{1}{2},\left\vert X_S-\ell\right\vert\geq\alpha\right\vert S=\ell\right\}\notag\\
			=&\mathbb{P}\left\{\left.\left\vert\hat{X}_S-\ell\right\vert\geq\frac{1}{2}\right\vert \left\vert X_S-\ell\right\vert\geq\alpha,S=\ell\right\}\times\mathbb{P}\left\{\left\vert X_S-\ell\right\vert\geq\alpha|S=\ell\right\}\notag\\
			\overset{(p)}{\leq}&\mathbb{P}\left\{\left\vert X_S-\ell\right\vert\geq\alpha|S=\ell\right\}\\
			\overset{(q)}{=}&Q\left(\frac{L(\ell-\ell-\alpha)}{\sqrt{\mathrm{tr}\{\tb{K}_X\}}}\right)-Q\left(\frac{L(\ell-\ell+\alpha)}{\sqrt{\mathrm{tr}\{\tb{K}_X\}}}\right)\notag\\
			=&2Q\left(\frac{L\alpha}{\sqrt{\mathrm{tr}\{\tb{K}_X\}}}\right),\label{8}
		\end{align}
		where step $(p)$ follows the simple fact that 
		\begin{align}
			\mathbb{P}\left\{\left.\left\vert\hat{X}_S-\ell\right\vert\geq\frac{1}{2}\right\vert \left\vert X_S-\ell\right\vert\geq\alpha,S=\ell\right\}\leq1,
		\end{align}
		while step $(q)$ follows the definition of $Q$ function and the fact that $X_S\sim\mathcal{N}(\ell,\frac{\mathrm{tr}\{\tb{K}_X\}}{L^2})$ for given $S=\ell$. It should be noticed that introducing $\alpha$ can dynamically adjust the tightness of the bound. 
	
	For the second term, we consider the trace of covariance matrix of $\bm{X}$ given $\bm{U}$
	\begin{align}
		\frac{1}{L}\mathrm{tr}\left\{\tb{K}_{\bm{X}|\bm{U}}\right\}
		&=\frac{1}{L}\mathrm{tr}\left\{\mathbb{E}\left[\left(\bm{X}-\mathbb{E}[\bm{X}|\bm{U}]\right)\left(\bm{X}-\mathbb{E}[\bm{X}|\bm{U}]\right)^T\right]\right\}\notag\\
		&=\frac{1}{L}\sum_{i=1}^L\mathbb{E}\left[d_X(X_i,\hat{X}_i)\right]\notag\\
		&=\mathbb{E}\left[\frac{1}{L}\sum_{i=1}^Ld_X(X_i,\hat{X}_i)\right]\notag\\
		&\overset{(r)}{\geq}\mathbb{E}_{X_S,\hat{X}_S}\left[d_X(X_S,\hat{X}_S)\right]\notag\\
		&=\mathbb{E}_{X_S,\hat{X}_S,S}\left[d_X(X_S,\hat{X}_S)|S=\ell\right]\notag\\
		&=\mathbb{E}_{S}\left[\int ||x-\hat{x}||^2p(x,\hat{x}|S=\ell)dxd\hat{x}\right]\label{11},
	\end{align}
	where one can easily verify $(r)$ holds owing to the Jensen inequality and the concavity of MSE function. Moreover, we find the term within the expectation in Eq. \eqref{11} can be bounded as
	\begin{align}
		&\int ||x-\hat{x}||^2p(x,\hat{x}|S=\ell)dxd\hat{x}\notag\\
		=&\int_{x:|x-\ell|<\alpha}\int_{\hat{x}}||x-\hat{x}||^2p(x,\hat{x}|S=\ell)dxd\hat{x}+\int_{x:|x-\ell|\geq\alpha}\int_{\hat{x}}||x-\hat{x}||^2p(x,\hat{x}|S=\ell)dxd\hat{x}\notag\\
		\geq&\int_{x:|x-\ell|<\alpha}\int_{\hat{x}:|\hat{x}-\ell|\geq\frac{1}{2}}||x-\hat{x}||^2p(x,\hat{x}|S=\ell)dxd\hat{x}\notag\\
		&\hspace{6cm}+\int_{x:|x-\ell|<\alpha}\int_{\hat{x}:|\hat{x}-\ell|<\frac{1}{2}}||x-\hat{x}||^2p(x,\hat{x}|S=\ell)dxd\hat{x}\notag\\
		\geq&\int_{x,\hat{x}:|x-\ell|<\alpha,|\hat{x}-\ell|\geq\frac{1}{2}}||x-\hat{x}||^2p(x,\hat{x}|S=\ell)dxd\hat{x}\notag\\
		\overset{(s)}{\geq}&\left(\frac{1}{2}-\alpha\right)^2\int_{x,\hat{x}:|x-\ell|<\alpha,|\hat{x}-\ell|\geq\frac{1}{2}}p(x,\hat{x}|S=\ell)dxd\hat{x}\notag\\
		\overset{(t)}{=}&\left(\frac{1}{2}-\alpha\right)^2\mathbb{P}\left\{\left.\left\vert\hat{X}_S-\ell\right\vert\geq\frac{1}{2},\left\vert X_S-\ell\right\vert<\alpha\right\vert S=\ell\right\}\label{9},
	\end{align}
	where $(s)$ follows the fact that $||x-\hat{x}||^2\geq(\frac{1}{2}-\alpha)^2$ when $|x-\ell|<\alpha$ and $|\hat{x}-\ell|\geq\frac{1}{2}$. Moreover, $(t)$ holds due to the definition of the conditional probability. Now with the combination of  Eq. \eqref{11} and Eq. \eqref{9} we have
	\begin{align}
		&\mathbb{E}_S\left[\mathbb{P}\left\{\left.\left\vert\hat{X}_S-\ell\right\vert\geq\frac{1}{2},\left\vert X_S-\ell\right\vert<\alpha\right\vert S=\ell\right\}\right]\notag\\
		\leq&\left(\frac{1}{2}-\alpha\right)^{-2}\frac{1}{L}\mathrm{tr}\left\{\tb{K}_{\bm{X}|\bm{U}}\right\}\notag\\
		=&\frac{1}{L}\left(\frac{1}{2}-\alpha\right)^{-2}\mathrm{tr}\left\{\bm{\Gamma}\right\} \label{second},
	\end{align} 
	After that, with the substitution of Eq. \eqref{8} and Eq. \eqref{second} in Eq. \eqref{div}, where we pursue an optimal $\alpha$ minimizing the error probability, we get
	\begin{align}
		\mathbb{P}\left\{S\neq f_S(\bm{U})\right\}\leq\min\left\{1,2Q\left(\frac{L\alpha^*}{\sqrt{\mathrm{tr}\{\tb{K}_X\}}}\right)+\frac{1}{L}\left(\frac{1}{2}-\alpha^*\right)^{-2}\mathrm{tr}\left\{\bm{\Gamma}\right\}\right\}\label{12}\triangleq p_e.
	\end{align}
	where 
	\begin{align}
		\alpha^*\triangleq\mathrm{argmin}_{0<\alpha<\frac{1}{2}}\left\{2Q\left(\frac{L\alpha}{\sqrt{\mathrm{tr}\{\tb{K}_X\}}}\right)+\frac{1}{L}\left(\frac{1}{2}-\alpha\right)^{-2}\mathrm{tr}\left\{\bm{\Gamma}\right\}\right\}.
	\end{align}
	
	Note that $p_e$ can be obtained in such a non-trivial case with a high resolution regime, i.e. a small $\bm{\Gamma}$, which is illustrated in Fig. \ref{convexity}.
	\begin{figure}[t]
		\centering
		\includegraphics[width=0.55\textwidth]{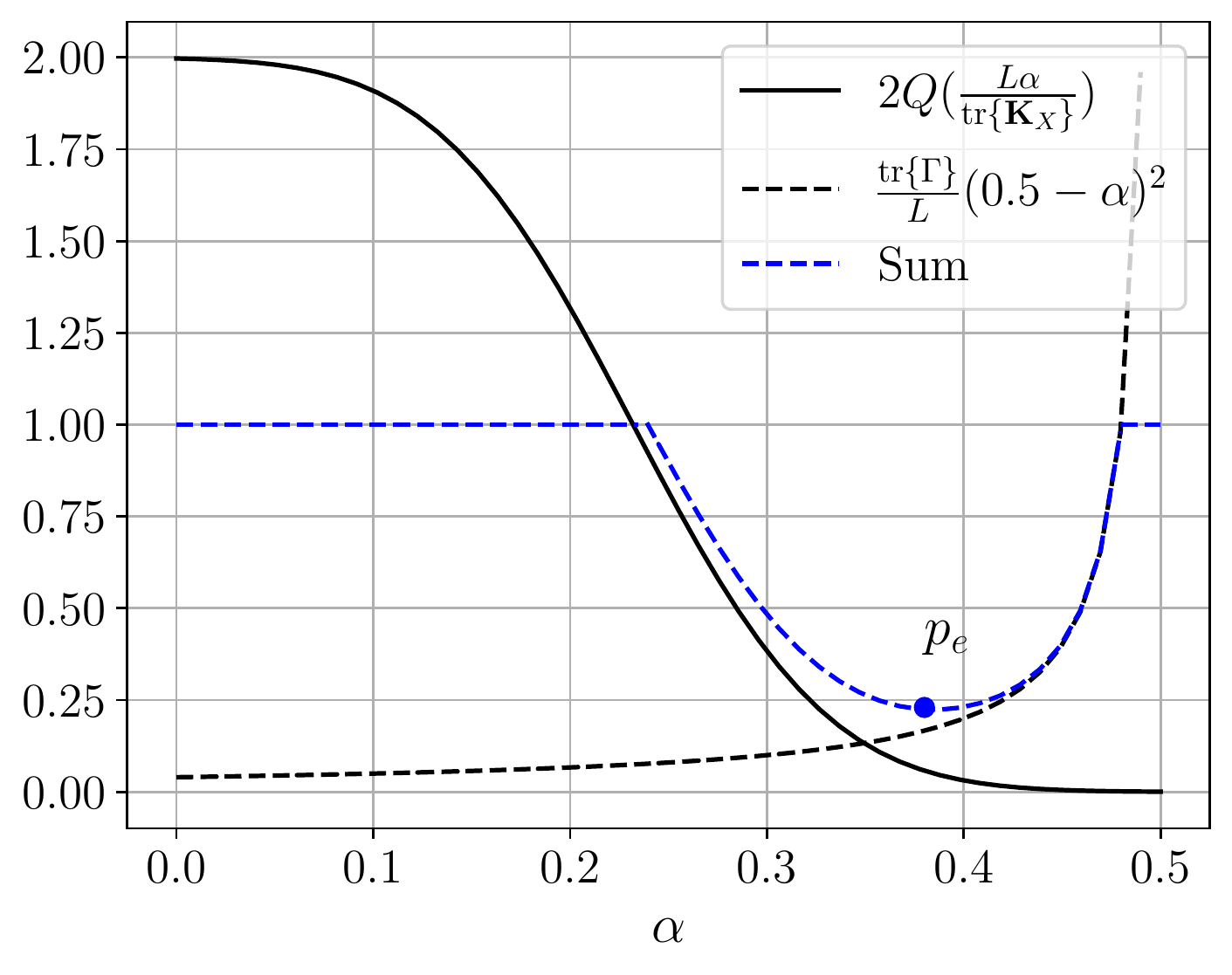}
		\caption{Convexity of $p_e$ over parameter $\alpha$}
		\label{convexity}
	\end{figure}
	Finally, we combine the above constraints in Eq. \eqref{beta_single user}, which yields
	\begin{align}
		R&\geq\min_{\beta,\bm{\Gamma}}H(\bm{\omega})-\beta+\frac{1}{2}\log_2\frac{\det(\tb{K}_X)}{\det(\bm{\Gamma})},\notag\\
		&\text{s.t.   }\beta\leq \min\left\{D_S,1+\log_2(M-1)p_e\right\},\\
		&\hspace{0.7cm}\bm{e}_i^T\bm{\Gamma}\bm{e}_i\leq D_{X_i}\quad\text{for}\quad i=1,2,\cdots,L.
	\end{align}
	where $p_e$ is formulated in Eq. \eqref{12}. Note that the upper bound of $\beta$ and error probability $p_e$ are both functions of $\bm{\Gamma}$, hence we reformulate the notations as $\beta(\bm{\Gamma})$ and $p_e(\bm{\Gamma})$, finally complete the proof of Theorem \ref{Theorem_label_Gaussian_Mixture}.
	
	\section{Proof of Corollary \ref{Contradiction}}\label{proof_contradiction}
	This corollary shows that the observed distortion constraint is always active for outer bound Eq. \eqref{Cout} when $0\leq D_{X_i}\leq \bm{e}_i^T\tb{K}_X\bm{e}_i$ for $i=1,2,\cdots,L$. This is an interesting results appearing with Gaussian mixture sources, while it is not the same case with joint Gaussian distributed sources \cite{Liu_Shao_Zhang_Poor_2022}. An intuitive interpretation of this phenomenon can be attributed to the difference between the finite and infinite alphabets in the Gaussian mixture source case. To prove this, we need to show the following two arguments:
	\begin{enumerate}
		\item $\bm{D}_X^\star$ exists,
		\item $D_S^\star$ does not exist,
	\end{enumerate}
	for outer bound Eq. \eqref{Cout} according to Def. \ref{dummy}.	To show 1), when fixing $D_S$, one can construct $\bm{D}_X^\star=(D^\star_{X_1},\cdots,D^\star_{X_L})$ satisfying
	\begin{align}
		1+\log(M-1)\left\{2Q\left(\frac{L\alpha}{\sqrt{\mathrm{tr}\{\tb{K}_X\}}}\right)+\frac{1}{L}\left(\frac{1}{2}-\alpha\right)^{-2}\sum_{i=1}^LD^\star_{X_i}\right\}= D_S,
	\end{align}
	such that for all $\bm{d}_X^\star=(d^\star_{X_1},\cdots,d^\star_{X_L})\prec\bm{D}_X^\star$,
	\begin{align}
		R^\mathrm{C}_\mathrm{out}(D_S,\bm{d}_X^\star)=&\min_{\bm{\Gamma}}H(\bm{\omega})-\left(1+\log_2(M-1)p_e(\bm{\Gamma})\right)+\frac{1}{2}\log_2\frac{\det(\tb{K}_X)}{\det(\bm{\Gamma})},\label{76}\\
		&\text{s.t. }\tb{O}\preceq\bm{\Gamma}\preceq\tb{K}_{X},\notag\\
		&\hspace{0.6cm} \bm{e}_i^T\bm{\Gamma}\bm{e}_i\leq d^\star_{X_i},\text{   for  } i=1,2,\cdots,L.\notag
	\end{align}
	where $p_e(\bm{\Gamma})$ is a function of $\mathrm{tr}\{\bm{\Gamma}\}$ defined in Eq. \eqref{28}. It should be noticed that Eq. \eqref{76} is only a function of $\bm{d}_X^\star$, namely the following equation is possible,
	\begin{align}
		R^\mathrm{C}_\mathrm{out}(D_S,\bm{d}_X^\star)=R^\mathrm{C}_\mathrm{out}(D_S+\Delta,\bm{d}_X^\star),
	\end{align}
	which validates the existence of $\bm{D}_X^\star$.
	
	To show 2), we can use reduction to absurdity, namely we need to show when fixing $D_{X_i}$, for arbitrary $D_S^\star$, there exists a unique semantic distortion constraint $d_S^\star<D_S^\star$ such that there is no positive vector $\bm{\Delta}$ satisfying 
	\begin{align}
		&R^\mathrm{C}_\mathrm{out}(d_S^\star,\bm{D}_X+ \bm{\Delta})=R^\mathrm{C}_\mathrm{out}(d_S^\star,\bm{D}_X).\label{64}
	\end{align}
	We start with the assumption that there exists a threshold $D_S^\star$ according Def. \ref{dummy}. Now recall the fact that 
	\begin{align}
		D_S\geq \mathbb{E}[d_S(S,\hat{S})|\bm{U}=\bm{u}]\geq H(S|\bm{U})\geq H(S|\bm{X}),
	\end{align}
	which means that a non-trivial semantic distortion $D_S$ should be lower bounded by $H(S|\bm{X})$ at the extreme case where $\bm{X}$ can be perfectly recovered by $\bm{U}$. Hence we let $d_S^\star=H(S|\bm{X})$, which satisfies  $d_S^\star< D_S^\star$ and $d_S^\star\leq H(S|\bm{U})$, in which our outer bound degenerates to
	\begin{align}
		R^\mathrm{C}_\mathrm{out}(d_S^\star,\bm{D}_X)=&\min_{\bm{\Gamma}}H(\bm{\omega})-H(S|\bm{X})+\frac{1}{2}\log_2\frac{\det\{\tb{K}_X\}}{\det\{\bm{\Gamma}\}},\\
		&\text{s.t. }\tb{O}\preceq\bm{\Gamma}\preceq\tb{K}_{X},\notag\\
		&\hspace{0.6cm} \bm{e}_i^T\bm{\Gamma}\bm{e}_i\leq D_{X_i},\text{   for  } i=1,2,\cdots,L.\notag
	\end{align}
	which can be rewritten as 
	\begin{align}
		R^\mathrm{C}_\mathrm{out}(d_S^\star,\bm{D}_X)=H(\bm{\omega})-H(S|\bm{X})+\frac{1}{2}\log_2\frac{\det\{\tb{K}_X\}}{\prod_{i=1}^LD_{X_i}}\label{65}.
	\end{align}
	since the Hardmard inequality $\det\{\bm{\Gamma}\}\leq\det\{\mathrm{diag}\{D_{X_1},\cdots,D_{X_L}\}\}$ \cite[Thm.~7.8.1]{Horn_Johnson_2012}. However, Eq. \eqref{65} is monotonically decreasing with $(D_{X_1},D_{X_2},\cdots,D_{X_L})$ over the intervals $0\leq D_{X_i}\leq \bm{e}_i^T\tb{K}_X\bm{e}_i$, i.e.,
	\begin{align}
		R^\mathrm{C}_\mathrm{out}(d_S^\star,\bm{D}_X)<R^\mathrm{C}_\mathrm{out}(d_S^\star,\bm{D}_X+\bm{\Delta})\label{69}.
	\end{align}
	Obviously Eq. \eqref{69} contradicts Eq. \eqref{64}, which means there exists no such positive vector $\bm{\Delta}$ in this counterexample, yielding the conclusion that we cannot find a threshold $D_S^\star$. 
	\section{Proof of Corollary \ref{Corollary_region}}\label{proof_corollary_classification}
	This corollary presents a detailed characterization for outer bound Eq. \eqref{Cout} based on the fact that the observed constraints are always active. The outline of this proof follows the categorical discussion where $D_S$ is active or not. To do this, we first argue the assumption of a diagonal $\bm{\Gamma}$ will not lose the generality, and assume an optimal $\alpha^\star$ by optimizing Eq. \eqref{28}, then formulate outer bound under different specific constraints. We first introduce the following lemma to argue that the optimal $\bm{\Gamma}$ in our outer bound is diagonal.
	\begin{lemma}\label{diagonal}
		For outer bound Eq. \eqref{Cout}, and non-diagonal $\bm{\Gamma}$ is sub-optimal, i.e.
		\begin{align}
				R^\mathrm{C}_{\mathrm{out}}(D_S,\bm{D}_X)=&\min_{\gamma_1,\cdots,\gamma_L\geq0}H(\bm{\omega})-\beta(\gamma_1,\cdots,\gamma_L)+\frac{1}{2}\log\frac{\det{\tb{K}_X}}{\prod_{i=1}^L\gamma_i}\\
				&\text{s.t.}\quad 0\leq\gamma_i\leq D_{X_i},\quad\text{for}\quad i=1,2,\cdots,L.
		\end{align}
	where $\beta(\gamma_1,\cdots,\gamma_L)$ is defined in Eq. \eqref{26b}, in which $\mathrm{tr}\{\bm{\Gamma}\}$ is substituted by $\sum_{i=1}^L\gamma_i$. 
	\end{lemma}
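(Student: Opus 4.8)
The plan is to reduce the matrix (non-convex) program \eqref{Cout} to a scalar, box-constrained program by exploiting that the objective depends on $\bm{\Gamma}$ only through two functionals: its trace $\mathrm{tr}\{\bm{\Gamma}\}$, which is the sole $\bm{\Gamma}$-dependence of $p_e(\bm{\Gamma})$ and hence of $\beta(\bm{\Gamma})$ in \eqref{26b}--\eqref{28}, and its determinant $\det(\bm{\Gamma})$, which enters only through the rate term $\tfrac12\log_2\frac{\det(\tb{K}_X)}{\det(\bm{\Gamma})}$. The strategy is to show that, for any feasible $\bm{\Gamma}$, replacing it by the diagonal matrix $\bm{\Gamma}_d=\mathrm{diag}\{(\bm{\Gamma})_{11},\dots,(\bm{\Gamma})_{LL}\}$ built from its own diagonal entries leaves the trace untouched while not decreasing the determinant, so the objective cannot increase; this is exactly the assertion that a non-diagonal $\bm{\Gamma}$ is sub-optimal.

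First I would record that $\mathrm{tr}\{\bm{\Gamma}_d\}=\mathrm{tr}\{\bm{\Gamma}\}$, whence $p_e(\bm{\Gamma}_d)=p_e(\bm{\Gamma})$ and therefore $\beta(\bm{\Gamma}_d)=\beta(\bm{\Gamma})$ exactly, so the semantic part of the objective is unchanged by the diagonalization. Next I would invoke Hadamard's inequality $\det(\bm{\Gamma})\le\prod_{i=1}^L(\bm{\Gamma})_{ii}=\det(\bm{\Gamma}_d)$, valid because $\bm{\Gamma}\succeq\tb{O}$, which gives $\tfrac12\log_2\frac{\det(\tb{K}_X)}{\det(\bm{\Gamma}_d)}\le\tfrac12\log_2\frac{\det(\tb{K}_X)}{\det(\bm{\Gamma})}$. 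Combining the two facts, the objective at $\bm{\Gamma}_d$ is no larger than at $\bm{\Gamma}$; writing $\gamma_i=(\bm{\Gamma})_{ii}$ then recasts the program in the claimed scalar form with $\prod_i\gamma_i$ in place of $\det(\bm{\Gamma})$ and $\beta(\gamma_1,\dots,\gamma_L)$ obtained by substituting $\sum_i\gamma_i$ for $\mathrm{tr}\{\bm{\Gamma}\}$.

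It remains to verify that $\bm{\Gamma}_d$ stays feasible, and this is where I expect the main obstacle to lie. Two of the three constraints transfer immediately: the per-component bound $\bm{e}_i^T\bm{\Gamma}_d\bm{e}_i=\gamma_i=\bm{e}_i^T\bm{\Gamma}\bm{e}_i\le D_{X_i}$ is inherited, and $\bm{\Gamma}_d\succeq\tb{O}$ follows from $\gamma_i\ge0$, together yielding the box constraint $0\le\gamma_i\le D_{X_i}$. The delicate constraint is the upper semidefinite bound $\bm{\Gamma}_d\preceq\tb{K}_X$, since taking the diagonal part does not preserve the Loewner ordering against a non-diagonal $\tb{K}_X$: for a strongly correlated $\tb{K}_X$ a diagonal matrix meeting the box constraint can violate $\preceq\tb{K}_X$. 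I would dispatch this in two steps. First, omitting $\bm{\Gamma}\preceq\tb{K}_X$ only enlarges the feasible set, so the resulting diagonal box-constrained minimum can only lie below $R^\mathrm{C}_\mathrm{out}(D_S,\bm{D}_X)$ and therefore, by Theorem~\ref{Theorem_label_Gaussian_Mixture}, remains a valid lower bound on $R^\mathrm{C}(D_S,\bm{D}_X)$; this is what licenses dropping it in the statement. Second, in the high-resolution regime of interest, where each $\gamma_i\le D_{X_i}$ stays below $\lambda_{\min}(\tb{K}_X)$, one has $\bm{\Gamma}_d\preceq\lambda_{\min}(\tb{K}_X)\,\tb{I}_L\preceq\tb{K}_X$ automatically, so the constraint is inactive and the diagonal reduction is exact rather than merely a relaxation. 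Establishing this feasibility/activeness dichotomy, rather than the Hadamard step, is the crux of the argument.
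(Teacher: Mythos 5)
Your proposal follows essentially the same route as the paper's proof: the objective depends on $\bm{\Gamma}$ only through $\mathrm{tr}\{\bm{\Gamma}\}$ (via $p_e$ and hence $\beta$) and $\det(\bm{\Gamma})$, so passing to the diagonal part preserves the trace, does not decrease the determinant by Hadamard's inequality, and therefore does not increase the objective being minimized. Where you go beyond the paper is the feasibility check: the paper's proof invokes Hadamard and stops, silently dropping the Loewner constraint $\bm{\Gamma}\preceq\tb{K}_X$ when restating the program with only the box constraints $0\leq\gamma_i\leq D_{X_i}$. You correctly observe that for non-diagonal $\tb{K}_X$ the diagonal surrogate need not satisfy $\bm{\Gamma}_d\preceq\tb{K}_X$, so the scalar program is in general a further relaxation (hence the stated ``$=$'' is really ``$\geq$'' unless one is in the regime where that constraint is inactive, e.g.\ $\gamma_i\leq\lambda_{\min}(\tb{K}_X)$, or $\tb{K}_X$ is diagonal as in the paper's examples); since the quantity is a converse bound, the relaxation remains a valid lower bound on $R^\mathrm{C}(D_S,\bm{D}_X)$ either way. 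This is a legitimate gap in the paper's own argument that your write-up patches; your proof is correct and, on this point, more careful than the original.
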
 
	\begin{proof}
		Since the term $\beta(\bm{\Gamma})$ is either a constant or a function of $\mathrm{tr}\{\bm{\Gamma}\}$, the optimality of diagonal $\bm{\Gamma}$ can be obtained by the Hardmard inequality obviously, which tells us arbitrary positive definite $\Gamma\in\mathbb{R}^{L\times L}$ is inferior to its diagonal counterpart in terms of its determinant and trace, i.e.
		\begin{align}
			&\log\det\{\bm{\Gamma}\}\leq\log\det\{\bm{\Gamma}_D\}\\
			\text{and}\quad&\mathrm{tr}\{\bm{\Gamma}\}+\log\det\{\bm{\Gamma}\}\leq\mathrm{tr}\{\bm{\Gamma}_D\}+\log\det\{\bm{\Gamma}_D\}
		\end{align}
		where $\bm{\Gamma}_D\triangleq\mathrm{diag}\{\gamma_1,\cdots,\gamma_L\}$ and $\gamma_1,\cdots,\gamma_L$ is the main diagonal entries of $\bm{\Gamma}$.
	\end{proof}
	Moreover, we find that the optimal $\alpha$ is the function of $\gamma_1,\cdots,\gamma_L$. For further use, we claim there exists an optimal $\alpha^\star$ if it satisfies the following conditions:
	\begin{align}
		&\zeta\geq0,\notag\\
		&\zeta\left(\frac{1}{2}-\alpha\right)=0,\notag\\
		&\frac{L}{\sqrt{\mathrm{tr}\{\tb{K}_X\}}}\mathcal{N}\left(\frac{L\alpha}{\sqrt{\mathrm{tr}\{\tb{K}_X\}}};0,1\right)-\frac{\sum_{i=1}^L\gamma_i}{L}\left(\frac{1}{2}-\alpha\right)^{-3}-\zeta=0,\label{KKT}
	\end{align}
	where $-\mathcal{N}\left(x;0,1\right)$ is the derivative of Gaussian error function $Q(x)$. Now equipped with lemma \ref{diagonal}, given optimal $\alpha^*$ w.r.t arbitrary diagonal $\bm{\Gamma}=\mathrm{diag}\{\gamma_1,\cdots,\gamma_L\}$ and define $p_e(\sum_{i=1}^L\gamma_i)$ as Eq. \eqref{37}, the key to solve the optimization problem is to find the exact valve of  $\beta(\gamma_1,\cdots,\gamma_L)$, which  only takes values from the set 
	\begin{align}
		\left\{D_S,1+\log_2(M-1),1+\log_2(M-1)p_e\left(\sum_{i=1}^L\gamma_i\right)\right\}.\label{93}
	\end{align} 
	Consequently, we wish to find the minimum in Eq. \eqref{93}, which motivates us to divide the region $(D_S,\bm{D}_X)$ into:
	\begin{enumerate}[A)]
		\item subregion $\mathcal{R}_A$ with \textbf{active semantic distortion} and \textbf{low resolution observations}, namely satisfying
		\begin{align}
			H(S|\bm{X})\leq D_S&\leq 1+\log_2(M-1),\\
			H(S|\bm{X})\leq D_S&\leq 1+\log_2(M-1)p_e\left(\sum_{i=1}^LD_{X_i}\right).\label{95}
		\end{align}
		 Now if $D_S\leq 1+\log_2(M-1)p_e\left(\sum_{i=1}^L\gamma_i\right)$, then outer bound is reduced to 
		 \begin{subequations}
			\begin{align}
			R^\mathrm{C}_{\mathrm{out}}(D_S,\bm{D}_X)=&\min_{\gamma_1,\cdots,\gamma_L\geq0}H(\bm{\omega})-D_S+\frac{1}{2}\log\frac{\det\{\tb{K}_X\}}{\prod_{i=1}^L\gamma_i}\\
			&\text{s.t.}\quad 0\leq\gamma_i\leq D_{X_i},\quad\text{for}\quad i=1,2,\cdots,L,\\
			&\hspace{0.8cm}\sum_{i=1}^L\gamma_i\geq p_e^{-1}\left(\frac{D_S-1}{\log_2(M-1)}\right).
			\end{align}\label{96}
		\end{subequations}
		where $p_e^{-1}$ denotes the inverse function of $p_e(\cdot)$. Note that in this case, the non-empty feasible region for $\gamma_i$ is ensured by Eq. \eqref{95}. Therefore, the optimization problem is degenerated to the Gaussian rate-distortion function with semi-definite positive constraint only, namely
			\begin{align}
				R^\mathrm{C}_{\mathrm{out}}(D_S,\bm{D}_X)=H(\bm{\omega})-D_S+\frac{1}{2}\log\frac{\det\{\tb{K}_X\}}{\prod_{i=1}^LD_{X_i}},\label{97}
			\end{align}
			with $\gamma_i=D_{X_i}$ for all $i$. 
		\\if $D_S\geq 1+\log_2(M-1)p_e(\sum_{i=1}^L\gamma_i)$, then outer bound is reduced to 
\begin{subequations}
	\begin{align}
		R^\mathrm{C}_{\mathrm{out}}(D_S,\bm{D}_X)=&\min_{\gamma_1,\cdots,\gamma_L\geq0}H(\bm{\omega})-\left(1+\log_2(M-1)p_e\left(\sum_{i=1}^L\gamma_i\right)\right)+\frac{1}{2}\log\frac{\det\{\tb{K}_X\}}{\prod_{i=1}^L\gamma_i}\\
		&\text{s.t.}\quad 0\leq\gamma_i\leq D_{X_i},\quad\text{for}\quad i=1,2,\cdots,L,\\
		&\hspace{0.8cm}\sum_{i=1}^L\gamma_i\leq p_e^{-1}\left(\frac{D_S-1}{\log_2(M-1)}\right).
	\end{align}\label{98}
\end{subequations}
	It turns to an optimization problem with both semi-definite positive constraints and trace constraint and the optimal solution $\gamma_i=\frac{1}{L}p_e^{-1}\left(\frac{D_S-1}{\log_2(M-1)}\right)$ for all $i$, hence we conclude that in this case
	\begin{align}
		R^\mathrm{C}_{\mathrm{out}}(D_S,\bm{D}_X)= H(\bm{\omega})-D_S+\frac{1}{2}\log\det\{\tb{K}_X\}-\frac{L}{2}\log p_e^{-1}\left(\frac{D_S-1}{\log_2(M-1)}\right).\label{101}
	\end{align}
By comparing Eq. \eqref{97} and Eq. \eqref{101}, we find the former is always less than the later, hence we obtain the characterization $R^\mathrm{C}_{\mathrm{out}}(D_S,\bm{D}_X)$ as Eq. \eqref{97} with feasible region $\mathcal{R}_A$.

\item subregion $\mathcal{R}_B$ with \textbf{active semantic distortion} and \textbf{high resolution observations}, namely satisfying
\begin{align}
	H(S|\bm{X})\leq D_S&\leq 1+\log_2(M-1),\\
	D_S&\geq 1+\log_2(M-1)p_e\left(\sum_{i=1}^LD_{X_i}\right).\label{95}
\end{align}
we only have the possibility $D_S\geq 1+\log_2(M-1)p_e\left(\sum_{i=1}^L\gamma_i\right)$ due to $\gamma_i\leq D_{X_i}$, then outer bound takes the same form as optimization problem Eq. \eqref{98} with semi-definite positive constraint only, consequently we have
\begin{align}
	R^\mathrm{C}_{\mathrm{out}}(D_S,\bm{D}_X)=H(\bm{\omega})-\left(1+\log_2(M-1)p_e\left(\sum_{i=1}^LD_{X_i}\right)\right)+\frac{1}{2}\log\frac{\det\{\tb{K}_X\}}{\prod_{i=1}^LD_{X_i}},\label{100}
\end{align}
with $\gamma_i=D_{X_i}$ for all $i$. 
\item subregion $\mathcal{R}_C$ with \textbf{inactive semantic distortion} and \textbf{low resolution observations}, namely satisfying
\begin{align}
	D_S&\geq 1+\log_2(M-1),\\
	1&\leq p_e\left(\sum_{i=1}^LD_{X_i}\right) .\label{104}
\end{align}
Now if $p_e\left(\sum_{i=1}^L\gamma_i\right)\geq1$, then outer bound is reduced to 
\begin{subequations}
	\begin{align}
		R^\mathrm{C}_{\mathrm{out}}(D_S,\bm{D}_X)=&\min_{\gamma_1,\cdots,\gamma_L\geq0}H(\bm{\omega})-\left(1+\log_2(M-1)\right)+\frac{1}{2}\log\frac{\det\{\tb{K}_X\}}{\prod_{i=1}^L\gamma_i}\\
		&\text{s.t.}\quad 0\leq\gamma_i\leq D_{X_i},\quad\text{for}\quad i=1,2,\cdots,L,\\
		&\hspace{0.8cm}\sum_{i=1}^L\gamma_i\geq p_e^{-1}\left(1\right).
	\end{align}\label{105}
\end{subequations}
in which the non-empty feasible region of $\gamma_i$ is ensured by Eq. \eqref{104}. Thus
\begin{align}
	R^\mathrm{C}_{\mathrm{out}}(D_S,\bm{D}_X)=H(\bm{\omega})-(1+\log_2(M-1))+\frac{1}{2}\log\frac{\det\{\tb{K}_X\}}{\prod_{i=1}^LD_{X_i}}.\label{106}
\end{align}
Moreover, when $p_e\left(\sum_{i=1}^L\gamma_i\right)\leq1$, then 
\begin{subequations}
	\begin{align}
		R^\mathrm{C}_{\mathrm{out}}(D_S,\bm{D}_X)=&\min_{\gamma_1,\cdots,\gamma_L\geq0}H(\bm{\omega})-\left(1+\log_2(M-1)p_e\left(\sum_{i=1}^L\gamma_i\right)\right)+\frac{1}{2}\log\frac{\det\{\tb{K}_X\}}{\prod_{i=1}^L\gamma_i}\\
		&\text{s.t.}\quad 0\leq\gamma_i\leq D_{X_i},\quad\text{for}\quad i=1,2,\cdots,L,\\
		&\hspace{0.8cm}\sum_{i=1}^L\gamma_i\leq p_e^{-1}\left(1\right).
	\end{align}\label{107}
\end{subequations}
which is again an optimization problem with both semi-definite positive constraint and trace constraint. The optimal solution $\gamma_i=\frac{1}{L}p_e^{-1}\left(1\right)$ for all $i$, hence we conclude that in this case
\begin{align}
	R^\mathrm{C}_{\mathrm{out}}(D_S,\bm{D}_X)= H(\bm{\omega})-(1+\log_2(M-1))+\frac{1}{2}\log\det\{\tb{K}_X\}-\frac{L}{2}\log p_e^{-1}\left(1\right).\label{1088}
\end{align}
Obviously Eq. \eqref{106} is always less than Eq. \eqref{1088}, hence we obtain the characterization $R^\mathrm{C}_{\mathrm{out}}(D_S,\bm{D}_X)$ as Eq. \eqref{106} for feasible region $\mathcal{R}_C$.
\item the final subregion $\mathcal{R}_D$ with \textbf{inactive semantic distortion} and \textbf{high resolution observations}, namely satisfying
\begin{align}
	D_S&\geq 1+\log_2(M-1),\\
	1&\geq p_e\left(\sum_{i=1}^LD_{X_i}\right).\label{95}
\end{align}
Similarly, we only have the possibility $p_e\left(\sum_{i=1}^L\gamma_i\right)\leq1$. The outer bound takes the same form as Eq. \eqref{107} and we obtain
\begin{align}
	R^\mathrm{C}_{\mathrm{out}}(D_S,\bm{D}_X)=H(\bm{\omega})-\left(1+\log_2(M-1)p_e\left(\sum_{i=1}^LD_{X_i}\right)\right)+\frac{1}{2}\log\frac{\det\{\tb{K}_X\}}{\prod_{i=1}^LD_{X_i}},\label{}
\end{align}
with $\gamma_i=D_{X_i}$ for all $i$. Note that $R^\mathrm{C}_{\mathrm{out}}(D_S,\bm{D}_X)$ takes the same value on region $\mathcal{R}_B$ and $\mathcal{R}_D$, hence we combine these two regions as $\mathcal{R}_B^*=\mathcal{R}_B\bigcup\mathcal{R}_D$.
	\end{enumerate}
Finally, by incorporating the above feasible solutions and corresponding feasible regions, we reach the corollary \ref{Corollary_region}.
	\section{Proof of Theorem \ref{Theorem_label_coding_scheme}}\label{proof_coding_scheme}
	This appendix provides the proof of an inner bound for our semantic-aware MT problem with Gaussian mixture sources. The outline is the following:
	
	We devise a coding scheme wherein we initially identify the observations and estimate their semantic labels. These labels are then utilized in the quantization of Gaussian mixture observations. Furthermore, we employ entropy coding to compress both the labels and the quantized observations, reconstructing them at the decoder. The proof relies on the fact that our semantic-aware MT problem can be decomposed into a CEO problem concerning the semantic source and multiple compression problems for Gaussian sources with shared dependency, which will inevitably affect the performance. However, we will show the dependency can be eliminated by codebook construction theoretically.
	
	Recall the setups that $L=2$ and $M=2$, we start from the inner bound characterization in Sec. \ref{Sec3} by dividing the codebook into $C_i=[C_{i,1},C_{i,2}]$ for $i=1,2$, where $C_{i,1}$ denotes the quantized bits of semantic estimation and $C_{i,2}$ represents the quantized bits for Gaussian symbols. Thus the achievable rate $R_\mathrm{ach}\triangleq \sum_{i=1}^2\frac{1}{k}\log|C_i|$ can be written as
	\begin{align}
		R_\mathrm{ach}\geq&\sum_{i=1}^2H(C_i)\notag\\
		=&\sum_{i=1}^2 I(X_i;C_{i,1},C_{i,2})\notag\\
		\overset{(u)}{=}&\sum_{i=1}^2 I(S_i,X_i;C_{i,1},C_{i,2})\notag\\
		=&\sum_{i=1}^2 \left(I(S_i;C_{i,1},C_{i,2})+ I(X_i;C_{i,1},C_{i,2}|S_i)\right)\notag\\
		\overset{(v)}{\geq}&\sum_{i=1}^2\left(I(S_i;C_{i,1})+I(X_i;C_{i,2}|S_i)\right)\label{85}
	\end{align}
	where $(u)$ follows the fact that $S_i$ is the function of $X_i$, and $(v)$ follows the non-negativity of mutual information. Now the achievable rate is divided into rates for semantic estimation $I(S_i;C_{i,1})$ and rates for Gaussian quantization $I(X_i;C_{i,2}|S_i)$ for $i=1,2$, hence we aim to lower bound both terms in Eq. \eqref{85}, respectively. First, with $p_i=Q\left(1/\sigma_{X_i}\right)$, since $S$ is a symmetric Bernoulli source, $S_i\sim\mathrm{Ber}(\frac{1}{2}*p_i)$ is still a symmetric Bernoulli source. It should be noticed that the first term is reduced to the existing conclusion of Berger-Tung inner bound of binary sources and logarithmic loss in \cite[Eq.~(13)]{8425732}, i.e.
	\begin{align}
		\sum_{i=1}^2R_{i,1}\triangleq\sum_{i=1}^2 I(S_i;C_{i,1})+\epsilon_{i,1}
		=\min_{d_1,d_2} \left\{1+H_2(\mathscr{P}*\mathscr{D})-\sum_{i=1}^2H_2(d_i)+\epsilon_{i,1}\right\}
	\end{align}
	where $d_i\in(0,\frac{1}{2})$ denotes the flip probability of test channel $P_{C_i|S_i}$, and $\mathscr{P} = p_1*p_2$, $\mathscr{D} = d_1*d_2$. Since $D_S\geq H(S|C_1C_2)$, the optimized $d_i$ are subject to the logarithmic constraint as:
	\begin{align}
		H_2(p_1*d_1)+H_2(p_2*d_2)-H_2(\mathscr{P}*\mathscr{D})\leq D_S\label{87},
	\end{align}
	and $\epsilon_{i,1}$ denotes the error brought by coding scheme while it goes to $0$ with the increase of block length. 
	
	Then it is left to lower bound the second summation in Eq. \eqref{85}. It should be noticed that the conditional random variable $X_i|S_i=\ell$ for $i=1,2$ is not exactly independent Gaussian distributed, since $S_1$ and $S_2$ are both functions of $S$, thus it is necessary to investigate $I(X_i;C_{i,2}|S)$, i.e. the rate to quantize Gaussian signals with mandatory independent Gaussian codebook. Therefore, to measure the performance loss resulting from a mandatory independent Gaussian assumption, we compute the performance gap for $i=1,2$ as follows:
	\begin{align}
		\epsilon^\Delta_i &\triangleq I(X_i;C_{i,2}|S)-I(X_i;C_{i,2}|S_i)\notag\\
		&= I(SX_i;C_{i,2})-I(S;C_{i,2})-I(S_iX_i;C_{i,2})+I(S_i;C_{i,2})\notag\\
		&\overset{(x)}{=}I(S_i;C_{i,2})-I(S;C_{i,2})\notag\\
		&=H(C_{i,2}|S)-H(C_{i,2}|S_i)\notag\\
		&=H(C_{i,2}|S)-p_iH(C_{i,2}|S_i=S)-(1-p_i)H(C_{i,2}|S_i\neq S)\notag\\
		&=(1-p_i)\left[H(C_{i,2}|S_i=S)-H(C_{i,2}|S_i\neq S)\right]\label{108}
	\end{align}
	where $(x)$ follows that $I(SX_i;C_{i,2})=I(S_iX_i;C_{i,2})=I(X_i;C_{i,2})$. From Eq. \eqref{108}, the achievability can be intuitively obtain by constructing a codebook for Gaussian quantization and Slepian Wolf coding such that the codebook size remains invariant no matter how $S_i$ varies, namely $H(C_{i,2}|S_i=S)= H(C_{i,2}|S_i\neq S)$, then the gap between a mandatory Gaussian codebook and ideal codebook converges to 0 with the increasing block length. Thus we proceed to write
	\begin{align}
		R_{i,2}&=I(X_i;C_{i,2}|S)+\epsilon^Q_{i}+\epsilon^\Delta_{i}\geq \frac{1}{2}\log_2\frac{\sigma^2_{X_i}}{D_{X_i}}+\epsilon^Q_{i}+\epsilon^\Delta_{i}
	\end{align} 
	while the quantize error $\epsilon^Q_{i}$ is introduced by the quantization step. Finally, using the notation $\epsilon=\sum_{i=1}^2(\epsilon^Q_{i}+\epsilon_{i,1}+\epsilon_{i,2})$, we denote $R^\star(D_S,\bm{D}_X)$ the combination of constraint Eq. \eqref{87} with
	\begin{align}
		R_\mathrm{ach} =\sum_{i=1}^2(R_{i,1}+R_{i,2}),
	\end{align}
	which is shown in Theorem \ref{Theorem_label_coding_scheme}.
	
	%
	%
	\small
	\bibliographystyle{IEEEtran}
	\bibliography{RATE_DISTORTION_TIT}
	
	%

	
	

\end{document}